\newtheorem{theorem}{Theorem}[section]
\newtheorem{problem}[theorem]{Problem}
\newtheorem{algorithm}[theorem]{Algorithm}
\newtheorem{proposition}[theorem]{Proposition}
\newtheorem{corollary}[theorem]{Corollary}
\newtheorem{definition}[theorem]{Definition}
\newtheorem{example}[theorem]{Example}
\newtheorem{remark}[theorem]{Remark}
\newcommand{\N}{{\mathbb{N}}}
\newcommand{\R}{{\mathbb{R}}}
\newcommand{\Z}{\mathbb{Z}}
\newcommand{\Post}[3]{\mathsf{Post}_{#2}^{#3}(#1)}
\newcommand{\Last}[1]{\mathsf{Last}(#1)}
\newcommand{\Reach}{{\mathsf{Reach}}}
\newcommand{\Safe}{{\mathsf{Safe}}}
\newcommand{\e}{\mathsf{e}}
\DeclareMathOperator{\post}{post}
\def\a{\alpha}
\def\b{\beta}
\def\d{\delta}
\def\A{{\cal A}}
\newcommand{\SCOTS}{{\tt SCOTS}}
\newcommand{\MATLAB}{{\tt MATLAB}}
\newcommand{\CPP}{{\tt C++}}
\newcommand{\ALPAGA}{{\tt ALPAGA}}
\newcommand{\sym}[1]{{#1}_{q}}
\newcommand{\hide}[1]{}
\newcommand{\new}[1]{{#1}}
\newcommand{\moved}[1]{{#1}}
\newcommand{\old}[1]{{\hide{#1}}}
\newcommand{\newagain}[1]{{#1}}
\newcommand{\movedagain}[1]{{#1}}
\newcommand{\oldagain}[1]{{\hide{#1}}}
\newcommand{\oldagainnostrike}[1]{{\hide{#1}}}
\def\old@comma{,}
	\old@comma\discretionary{}{}{}%
\title{A Framework for Output-Feedback Symbolic Control}
\author{
	Mahmoud Khaled, Kuize Zhang, and Majid Zamani
	\thanks{
		M. Khaled is with the Department of Computer and Systems Engineering, Faculty of Engineering, Minia University, Egypt.
		K. Zhang is with the Department of Computer Science, University of Surrey, UK.
		M. Zamani is with the Department of Computer Science, University of Colorado Boulder, USA, and the Department of Computer Science at LMU, Munich, Germany.
		Emails: \texttt{mkhaled@mu.edu.eg, kuize.zhang@surrey.ac.uk, majid.zamani@colorado.edu}.
		This work was supported in part by the H2020 ERC Starting Grant AutoCPS (Grant Agreement No. 804639) and the NSF under Grant CNS-2145184.
	}
}
\begin{document}
\maketitle

%%%%%%%%%%%%%%%%%%%%%%%%%%%%%%%%%%%%%%%%%%%%%%%%%%%%%%%%%%%%%%%%%%%%%%%%%%%%%%%%%%%%%%%%%%%%%%%%%%%
\begin{abstract}
%%%%%%%%%%%%%%%%%%%%%%%%%%%%%%%%%%%%%%%%%%%%%%%%%%%%%%%%%%%%%%%%%%%%%%%%%%%%%%%%%%%%%%%%%%%%%%%%%%%
Symbolic control is an abstraction-based controller synthesis approach that provides, algorithmically, certifiable-by-construction controllers for cyber-physical systems.
Symbolic control approaches usually assume that full-state information is available which is not suitable for many real-world applications with partially-observable states or output information.
This article introduces a framework for output-feedback symbolic control.
We propose relations between original systems and their symbolic models based on outputs.
They enable designing symbolic controllers and refining them to enforce complex requirements on original systems.
We provide example methodologies to synthesize and refine output-feedback symbolic controllers.
\end{abstract}

\linespread{0.90}

%%%%%%%%%%%%%%%%%%%%%%%%%%%%%%%%%%%%%%%%%%%%%%%%%%%%%%%%%%%%%%%%%%%%%%%%%%%%%%%%%%%%%%%%%%%%%%%%%%%
\section{Introduction}
%%%%%%%%%%%%%%%%%%%%%%%%%%%%%%%%%%%%%%%%%%%%%%%%%%%%%%%%%%%%%%%%%%%%%%%%%%%%%%%%%%%%%%%%%%%%%%%%%%%

% CPS are everywhere. CPS need automated synthesis of error-free SW.
In the past decades, the world has witnessed many emerging applications formed by the tight interaction of physical systems, computation platforms, communication networks, and software. 
This is clearly the case in avionics, automotive systems, smart power grids, and infrastructure management systems, which are all examples of so-called cyber-physical systems (CPS).
In CPS, (embedded) control software orchestrates the interaction between different physical and computational parts to achieve some given desired requirements.
Today's CPS often require certifiable control software, faster requirements-to-prototype development cycles and the handling of more sophisticated specifications.
Many CPS are also safety-critical in which the correctness of control software is crucial.
Consequently, modern CPS require approaches for automated synthesis of provably-correct control software. 

% Symbolic control as a solution
Symbolic control \cite{P.Tabuada_VCHS_Symbolic,M.Zamani_etal_SC_Nonlin_NoStabilAssump,MZ,G.Reissig_etal_FRR_TAC} is an approach to automatically synthesize certifiable controllers that handle complex requirements including objectives and constraints given by formulae in linear temporal logic (LTL) or automata on infinite strings \cite{P.Tabuada_VCHS_Symbolic,C.Baier_PrincipModelChecking}.
\new{In symbolic control, a}\old{A given} dynamical system (e.g., a physical process described by a set of differential equations) is related to a \emph{symbolic model} (i.e., a system with finite state and input sets) via a formal relation.
The relation ensures that the symbolic model captures some required features from the original system. 
\oldagain{As}\newagain{Since} symbolic models are finite, reactive synthesis techniques \cite{Pnueli_1989,Vardi1995,BLOEM2012911} \oldagain{are applicable}\newagain{can be applied} to algorithmically synthesize controllers enforcing the given specifications.
The designed controllers are usually referred to as \emph{symbolic controllers}.

% Current SC approaches work with partial-info systems
Symbolic models can be used to abstract several classes of control systems \cite{P.Tabuada_VCHS_Symbolic,MZ,G.Reissig_etal_FRR_TAC,mkhaled_allerton16,zamani}.
They have been recently investigated for general nonlinear systems \cite{M.Zamani_etal_SC_Nonlin_NoStabilAssump,G_POLA_Automatica_2008}, time-delay control systems \cite{GPola_TimeDelay_2010}, switched control systems \cite{MZamani_SwitchedSystems_Automatica2015,AGirard_SwitchedSystems_TAC2010}, stochastic control systems \cite{MZamani_StochasticSystems_TAC14,MZamani_StochasticSystems_ECC13}, and networked control systems \cite{}.
Unfortunately, the majority of current techniques assume control systems with full-state or quantized-state information and, hence, they are not applicable to control systems with outputs or partially-observable states\oldagain{, which is the case in many practical applications}.
Moreover, \oldagain{all}\newagain{none of} state-of-the-art tools of symbolic controller synthesis \cite{SCOTS,CoSyMA,pFaces} \oldagain{do not }support output-feedback systems since the required theories for them are not yet fully established.

% Overview of this work: the framework big picture
In this article, we consider control systems with partial-state or output information.
We refer to these particular types of systems as \emph{output-based control systems}.
We introduce a framework for symbolic control that can handle this class of systems.
We refer to the introduced framework as \emph{output-feedback symbolic control}.
% Overview of this work: the framework with more details
We first extend the work in \cite{G.Reissig_etal_FRR_TAC} to provide mathematical tools for constructing symbolic models of output-based systems.
More precisely, output-feedback refinement relations (OFRRs) are introduced as means of relating output-based systems and their symbolic models.
They are extensions of feedback refinement relations (FRRs) in \cite{G.Reissig_etal_FRR_TAC}.
\newagain{OFRRs allow abstractions to be constructed by quantizing
the state and output sets of concrete systems, such that the output quantization respects the state quantization.}
We \oldagain{show}\newagain{prove} that OFRRs ensure external (i.e., output-based) behavioral inclusion from original systems to symbolic models.
Symbolic controllers synthesized based on the outputs of symbolic models can be refined via simple and practically implementable interfaces. 
%Generally, only static maps are required as interfaces between original system and controllers.
% Overview of this work: the hosted approaches
\old{Then, we present three methodologies of output-feedback symbolic control where general, possibly nonlinear, output-based control systems are considered.
The first methodology is based on games of imperfect information.
The second one proposes designing observers for output-based systems.
The third one proposes detectors designed for symbolic models.
Examples are presented to demonstrate the effectiveness of proposed methodologies.}

In Sections \ref{meth_game_based}, \ref{mthd_concrete_domain_observers} and \ref{meth_detector}, we present example methodologies \oldagain{of output-feedback symbolic control }that \oldagain{utilizes}\newagain{realize} the introduced framework.
The first methodology is based on games of imperfect information.
The second one proposes designing observers for output-based systems.
The third one proposes detectors designed for symbolic models.
Three case studies are presented in Section \ref{sec_examples} to demonstrate the effectiveness of proposed methodologies.

%%%%%%%%%%%%%%%%%%%%%%%%%%%%%%%%%%%%%%%%%%%%%%%%%%%%%%%%%%%%%%%%%%%%%%%%%%%%%%%%%%%%%%%%%%%%%%%%%%%
\section{Notation}
%%%%%%%%%%%%%%%%%%%%%%%%%%%%%%%%%%%%%%%%%%%%%%%%%%%%%%%%%%%%%%%%%%%%%%%%%%%%%%%%%%%%%%%%%%%%%%%%%%%	
\label{sec_notation}
The identity map on a set $X$ is denoted by $id_X$. 
Symbols $\N, \Z, \R, \R^+$, and $\R^+_0$ denote, respectively, the sets of natural, integer, real, positive real, and nonnegative real numbers. 

The relative complement of a set $A$ in a set $B$ is denoted by $B \backslash A$.
For a set $A$, we denote by $\vert A \vert$ the cardinality of the set, and by $2^A$ the set of all subsets of $A$ including the empty set $\emptyset$.
A \emph{cover} of a set $A$ is a set of subsets of $A$ whose union equals $A$.
A \emph{partition} of a set $A$ is a set of pairwise disjoint \newagain{nonempty} subsets of $A$ whose union equals $A$.
We denote by $A^*$ the set of all finite strings (a.k.a. sequences) obtained by concatenating elements in $A$, by $A^\omega$ the set of all infinite strings obtained by concatenating elements in $A$, and by $A^\infty$ the set of all finite and infinite strings obtained by concatenating elements in $A$.
For any finite string $s$, $\vert s \vert$ denotes the length of the string, $s_i$, $i \in \{0,1,\cdots,\vert s \vert-1\}$, denotes the $i$-th element of $s$, and $s[i,j]$, $j \geq i$, denotes the substring $s_i s_{i+1} \cdots s_{j}$.
Symbol $\e$ denotes the empty string and $\vert \e \vert = 0$.
We use the dot symbol $\cdot$ to concatenate two strings.

Consider a relation $\mathcal{R} \subseteq A \times B$.
$\mathcal{R}$ is strict when $\mathcal{R}(a) \neq \emptyset$ for every $a \in A$.
$\mathcal{R}$ naturally introduces a map $\mathcal{R}: A \to 2^B$ such that $\mathcal{R}(a) =\{b \in B \;\vert\; (a,b) \in \mathcal{R}\}$.
$\mathcal{R}$ also admits an inverse relation $\mathcal{R}^{-1} := \{ (b,a) \in B \times A \;\vert\; (a,b) \in \mathcal{R} \}$.
Given an element $r = (a,b) \in \mathcal{R}$, $\pi_A(r)$ denotes the natural projection of $r$ on the set $A$, i.e., $\pi_A(r) = a$.
We sometimes abuse the notation and apply the projection map $\pi_A$ to a string (resp., a set of strings) of elements of $\mathcal{R}$, which means applying it iteratively to all elements in the string (resp., all strings in the set).
When $\mathcal{R}$ is an equivalence relation on a set $X$, we denote by $[x]$ the equivalence class of $x \in X$ and by $X/\mathcal{R}$ the set of all equivalence classes (a.k.a. quotient set). 
We also denote by $\pi_\mathcal{R} : X \to X/\mathcal{R}$ the natural projection map taking a point $x \in X$ to its equivalence class, i.e., $\pi_\mathcal{R}(x) = [x] \in X/\mathcal{R}$. 
We say that an equivalence relation is finite when it has finitely many equivalence classes.

Given a vector $v \in \R^n$, we denote by $v_i$, $i \in \{0,1, \cdots, n-1\}$, the $i$-th element of $v$ and by $\Vert v \Vert$ its infinity norm. 

%%%%%%%%%%%%%%%%%%%%%%%%%%%%%%%%%%%%%%%%%%%%%%%%%%%%%%%%%%%%%%%%%%%%%%%%%%%%%%%%%%%%%%%%%%%%%%%%%%%
\section{Preliminaries}
%%%%%%%%%%%%%%%%%%%%%%%%%%%%%%%%%%%%%%%%%%%%%%%%%%%%%%%%%%%%%%%%%%%%%%%%%%%%%%%%%%%%%%%%%%%%%%%%%%%
\label{sec_definitions}
First, we present the notion of \emph{systems} as a general mathematical framework to describe control systems, symbolic models, observers, controllers, and their interconnections. 

\subsection{Systems}
\label{systems}
We use a similar definition for systems as in \cite{P.Tabuada_VCHS_Symbolic}.
\begin{definition}[System]
	\label{def_system}
	A system is a tuple 
	\begin{equation}
		\nonumber
		S := (X, X_0, U, \longrightarrow, Y, H),
	\end{equation}
	where $X$ is the set of states,
	$X_0 \subseteq X$ is a set of initial states,
	$U$ is the set of inputs, 
	%a set of internal variables $V$;
	$\longrightarrow  \subseteq X \times U \times X$ is \oldagain{a}\newagain{the} transition relation,
	$Y$ is the set of outputs, and 
	$H: X \to Y$ is \oldagain{an}\newagain{the} output map.
\end{definition}

All sets in tuple $S$ are assumed to be non-empty.
For any $x \in X$ and $u \in U$, we denote by $\Post{x}{u}{S} := \{x' \in X \;\vert\; (x,u,x') \in \longrightarrow \}$ the set of $u$-successors of $x$ in $S$.
When $S$ is known from the context, the set of $u$-successors of $x$ is simply denoted by $\Post{x}{u}{}$.
The inputs admissible to a state $x$ of system $S$ is denoted by $U_{S}(x) := \{ u \in U \;\vert\; \Post{x}{u}{} \neq \emptyset \}$.

For any output element $y \in Y$, the map $H^{-1} : Y \to 2^X$ recovers the underlying set of states $X_y \subseteq X$ generating $y$, and it is defined as follows: $H^{-1}(y) := \{ x \in X \;\vert\; H(x) = y \}$.

We sometimes abuse the notation and apply maps $H$ and $H^{-1}$ to subsets of $X$ and $Y$, respectively, which refers to applying them element-wise and then taking the union. Specifically, we have that

\begin{equation}
	\nonumber
	\text{for } \bar{x} \subseteq X \text{, }  H(\bar{x}) := \underset{x \in \bar{x}}{\bigcup} \{ H(x) \}\text{, and}
\end{equation}
\begin{equation}
	\nonumber
	\text{for } \bar{y} \subseteq Y \text{, }  H^{-1}(\bar{y}) := \underset{y \in \bar{y}}{\bigcup} H^{-1}(y).
\end{equation}

%more on systems
System $S$ is said to be 
static if $X$ is singleton;
autonomous if $U$ is singleton;
state-based (a.k.a. simple system \cite{G.Reissig_etal_FRR_TAC}) when $X = Y$, $H = id_X$, and all states are admissible as initial ones, i.e., $X = X_0$;
output-based when $X \neq Y$;
total when for any $x \in X$ and any $u \in U$ there exists at least one $x' \in X$ such that $x' \in \Post{x}{u}{}$;
deterministic when for any $x \in X$ and any $u \in U$ we have $\vert \Post{x}{u}{} \vert \leq 1$; and
symbolic when $X$ and $U$ are both finite sets.

For any $\bar{x} \subseteq X_0$, we denote by $S^{(\bar{x})}$ the restricted version of $S$ with $X_0 = \bar{x}$.
For any output-based system $S$, one can always construct its state-based version by assuming the availability of state information, i.e., $Y = X$, $X_0 = X$ and $H = id_X$, and we denote it by $S_{X}$.

% Output admissible inputs
\new{Let $S$ be an output-based system. }Map $\bar{U}_S: Y \to 2^U$ provides all inputs admissible to outputs of $S$.
It is defined as follows for any $y \in Y$:
\begin{equation}
	\nonumber
	\bar{U}_S(y) := \underset{x \in H^{-1}(y)}{\bigcap} U_S(x).
\end{equation}
\old{For any $u \in \bar{U}_S(y)$, map $\overline{\mathsf{Post}}_{u}^{S}: Y \to 2^Y$ provides all $u$-successor observations of $S$ such that for any $y \in Y$, $\overline{\mathsf{Post}}_{u}^{S}(y) := H( \underset{x \in H^{-1}(y)}{\bigcup} \Post{x}{u}{S})$.}\new{Additionally, for any $y \in Y$ and $u \in \bar{U}_S(y)$, $\overline{\mathsf{Post}}_{u}^{S}(y)$ denotes all $u$-successor observations of $y$ and we define it as follows:
\begin{equation}
	\nonumber
	\overline{\mathsf{Post}}_{u}^{S}(y) := H( \underset{x \in H^{-1}(y)}{\bigcup} \Post{x}{u}{S}).
\end{equation}}

% extra notation for the detectors approach
\moved{Given a system $S$, for all $x\in X$ and $\a\in U^*$ such that $|\a|\ge 1$, $x'\in X$ is called an $\a$-successor of $x$, if there exist states $x_0,\dots,x_{|\a|}\in X$ such that $x_0=x$, $x_{|\a|}=x'$, and $(x_i,\a_i,x_{i+1})\in\longrightarrow$ for all integers $0\le i\le |\a|-1$.
The set of $\a$-successors of a state $x\in X$ (resp., a subset $X'\subset X$) is denoted by $\Post{x}{\a}{}$ (resp., $\Post{X'}{\a}{}:=\cup_{x\in X'}\Post{x}{\a}{}$).
For all $x\in X$, $\a\in U^*$ and $\b\in Y^*$ such that $|\a|+1=|\b|$, $x'\in X$ is called an $(\a,\b)$-successor of $x$, if there exist states $x_0,\dots,x_{|\a|}\in X$ such that $x_0=x$, $x_{|\a|}=x'$, $H(x_{|\a|})=\b_{|\a|}$, and $H(x_i)=\b_i$ and $(x_i,\a_i,x_{i+1})\in\longrightarrow$ for all integers $0\le i\le |\a|-1$.
The set of $(\a,\b)$-successors of a state $x\in X$ (resp., a subset $X'\subset X$) is denoted by $\Post{x}{\a}{\b}$ (resp., $\Post{X'}{\a}{\b}:=\cup_{x\in X'}\Post{x}{\a}{\b}$).}

%runs
An \emph{internal run} of system $S$ is an infinite sequence $r_{int} := x_0 u_0 x_1 u_1 \cdots x_{n-1} u_{n-1} x_n \cdots$ such that $x_0 \in X_0$, and for any $i \ge 0$ we have $(x_i,u_i,x_{i+1}) \in \longrightarrow$.
An \emph{external run} is an infinite sequence $r_{ext} := y_0 u_0 y_1 \cdots y_{n-1} u_{n-1} y_n \cdots$ such that $y_0 = H(x_0)$ for some $x_0 \in X_0$, and 
for any $i \ge 0$ there exist $x_i \in X$ and $x_{i+1} \in X$ such that $y_i = H(x_i)$, $y_{i+1} = H(x_{i+1})$, and $(x_i,u_i,x_{i+1}) \in \longrightarrow$.
The internal (resp., external) \emph{prefix} up to $x_n$ (resp., $y_n$) of $r_{int}$ (resp., $r_{ext}$) is denoted by $r_{int}(n)$ (resp., $r_{ext}(n)$) and its last element is $\Last{r_{int}(n)} := x_n$ (resp., $\Last{r_{ext}(n)} := y_n$).
The set of all internal (resp., external) runs and the set of all internal (resp., external) $n$-length prefixes are denoted by $\mathsf{RUNS}_{int}(S)$ (resp., $\mathsf{RUNS}_{ext}(S)$) and $\mathsf{PREFS}^n_{int}(S)$ (resp., $\mathsf{PREFS}^n_{ext}(S)$), respectively.
A state $x$ is said to be \emph{reachable} iff there exists at least one internal \emph{prefix} $r_{int}(n) \in \mathsf{PREFS}^n_{int}(S)$ such that $\Last{r_{int}(n)} = x$ for some $n \in \N$.

\subsection{Composition of systems}
%---------------------------------------------------------
Systems are composed together to construct new systems.
Here, we define formally different types of compositions.

\begin{definition}[Serial Composition]
	\label{def_serial_composition}
	Consider two systems $S_i := (X_i, X_{i,0}, U_i, \underset{i}{\longrightarrow}, Y_i, H_i)$, $i \in \{1,2\}$, such that $Y_1 \subseteq U_2$. 
	The serial (a.k.a. cascade) composition of $S_1$ and $S_2$, denoted by $S_2 \circ S_1$, is a new system $S_{12} := (X_1 \times X_2, X_{1,0} \times X_{2,0}, U_1, \underset{12}{\longrightarrow}, Y_2, H_{12})$, where $((x_1,x_2),u_1,(x'_1, x'_2)) \in \underset{12}{\longrightarrow}$ iff there exist two transitions $(x_1, u_1, x'_1) \in \underset{1}{\longrightarrow}$ and $(x_2, H_1(x_1), x'_2) \in \underset{2}{\longrightarrow}$, and map $H_{12}$ is defined as follows for any $(x_1, x_2) \in X_1 \times X_2$: $H_{12}((x_{1},x_{2})) := H_2(x_2)$.
\end{definition}

\begin{definition}[Feedback Composition]
	\label{def_feedback_composition}
	Consider two systems $S_i := (X_i, X_{i,0}, U_i, \underset{i}{\longrightarrow}, Y_i, H_i)$, $i \in \{1,2\}$, such that $Y_1 \subseteq U_2$, $Y_2 \subseteq U_1$, and the following holds:
	\begin{align}
		\nonumber
		y_2 = H_2(x_2) \;\land\; y_1 = H_1(x_1) \;\land\; \Post{x_1}{y_2}{S_1} = \emptyset \\
		\nonumber
		\implies \Post{x_2}{y_1}{S_2} = \emptyset.
	\end{align}
	Then, $S_1$ is said to be \emph{feedback-composable} with $S_2$ (denoted by $S_1 \times S_2$) and the new composed system is
	$S_{12} := (X_1 \times X_2, X_{1,0} \times X_{2,0}, \{0\}, \underset{12}{\longrightarrow}, Y_1 \times Y_2, H_{12})$, where $((x_1,x_2),0,(x'_1, x'_2)) \in \underset{12}{\longrightarrow}$ iff there exist two transitions $(x_1, H_2(x_2), x'_1) \in \underset{1}{\longrightarrow}$ and $(x_2, H_1(x_1), x'_2) \in \underset{2}{\longrightarrow}$, and the map $H_{12}$ is defined as follows for any $(x_1, x_2) \in X_1 \times X_2$:
	\begin{equation}
		\nonumber
		H_{12}((x_1,x_2))  := (H_1(x_1), H_2(x_2)).
	\end{equation}
\end{definition}

The feedback composition in \cite{G.Reissig_etal_FRR_TAC} requires that one of the systems is Moore (i.e., the output does not depend on the input).
Such assumption is already fulfilled here since all systems are Moore by Definition \ref{def_system}.
The following proposition shows that external runs of feedback-composed systems are tightly connected to external runs of their subsystems.
It is used later in Subsection \ref{ssec_controller_synthesis} to prove the output-based behavioral inclusion from original systems to symbolic models.

\begin{proposition}
	\label{prop_FB_runs}
	Consider two systems $S_i := (X_i, X_{i,0}, U_i, \underset{i}{\longrightarrow}, Y_i, H_i)$, $i \in \{1,2\}$, such that $S_1$ is feedback-composable with system $S_2$.
	Then, for a feedback-composed system $S_1 \times S_2$, an external run
	$r_{S_1 \times S_2, ext} := (y_{1,0},y_{2,0})0(y_{1,1},y_{2,1})0 \cdots$
	exists iff there exist two external runs
	$r_{S_1 , ext} := y_{1,0}y_{2,0}y_{1,1}y_{2,1} \cdots$ and
	$r_{S_2 , ext} := y_{2,0}y_{1,0}y_{2,1}y_{1,1} \cdots$.
\end{proposition}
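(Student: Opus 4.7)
The plan is to prove both directions by unpacking the definition of an external run and then applying Definition \ref{def_feedback_composition} of feedback composition. The key observation is that the definition of an external run only demands the existence of local witness states at each step, so consistency of witnesses across steps is not required; this makes the two directions essentially symmetric.

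For the forward direction, I would start with an external run $r_{S_1 \times S_2, ext}$ of the composed system. By the definition of external run, for each $i \ge 0$ there exist witness pairs $(x_{1,i}, x_{2,i}), (x_{1,i+1}, x_{2,i+1}) \in X_1 \times X_2$ such that $H_{12}((x_{1,i}, x_{2,i})) = (y_{1,i}, y_{2,i})$, $H_{12}((x_{1,i+1}, x_{2,i+1})) = (y_{1,i+1}, y_{2,i+1})$, and $((x_{1,i}, x_{2,i}), 0, (x_{1,i+1}, x_{2,i+1})) \in \underset{12}{\longrightarrow}$. By Definition \ref{def_feedback_composition}, the latter transition unfolds into $(x_{1,i}, H_2(x_{2,i}), x_{1,i+1}) \in \underset{1}{\longrightarrow}$ and $(x_{2,i}, H_1(x_{1,i}), x_{2,i+1}) \in \underset{2}{\longrightarrow}$, and $H_{12} = (H_1, H_2)$ gives us $H_1(x_{1,j}) = y_{1,j}$ and $H_2(x_{2,j}) = y_{2,j}$ for $j \in \{i, i+1\}$. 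Hence, at every step the projected witnesses $x_{1,i}, x_{1,i+1}$ certify the $i$-th step of $r_{S_1, ext}$ (with input $y_{2,i} = H_2(x_{2,i})$) and the witnesses $x_{2,i}, x_{2,i+1}$ certify the $i$-th step of $r_{S_2, ext}$. The initialization conditions follow from $(x_{1,0}, x_{2,0}) \in X_{1,0} \times X_{2,0}$.

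For the backward direction, I would start with external runs $r_{S_1, ext}$ and $r_{S_2, ext}$. For each $i$, pick local witnesses $x_{1,i}, x_{1,i+1} \in X_1$ satisfying $H_1(x_{1,j}) = y_{1,j}$ and $(x_{1,i}, y_{2,i}, x_{1,i+1}) \in \underset{1}{\longrightarrow}$, and analogously witnesses $x_{2,i}, x_{2,i+1} \in X_2$ with $H_2(x_{2,j}) = y_{2,j}$ and $(x_{2,i}, y_{1,i}, x_{2,i+1}) \in \underset{2}{\longrightarrow}$. Since $y_{2,i} = H_2(x_{2,i})$ and $y_{1,i} = H_1(x_{1,i})$, these transitions match the hypothesis of Definition \ref{def_feedback_composition} and yield $((x_{1,i}, x_{2,i}), 0, (x_{1,i+1}, x_{2,i+1})) \in \underset{12}{\longrightarrow}$ with $H_{12}((x_{1,j}, x_{2,j})) = (y_{1,j}, y_{2,j})$. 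The initial condition is recovered from the initial witnesses $x_{1,0} \in X_{1,0}$, $x_{2,0} \in X_{2,0}$ of the two component runs, which produce $(x_{1,0}, x_{2,0}) \in X_{1,0} \times X_{2,0}$.

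There is no serious obstacle: the argument is essentially a diagram chase that rewrites the feedback transition relation in terms of component transitions and outputs. The only subtlety worth being careful about is that the external run definition provides only per-step existential witnesses rather than a single consistent trajectory, so one should present the witness selection step by step rather than attempting to glue a single internal trajectory. The feedback-composability hypothesis involving the $\Post{}{}{}$-emptiness implication is not needed for this statement; it is only relevant to ensure that the feedback-composed transition relation is well-behaved (e.g., for totality-type properties), not for the biconditional characterization of external runs.
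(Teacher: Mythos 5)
Your proof is correct and follows exactly the route the paper intends: the paper's own proof is a one-line appeal to Definitions \ref{def_system} and \ref{def_feedback_composition}, and your argument is precisely the definitional unfolding that justifies that claim. Your observation that the external-run definition only demands per-step existential witnesses (so no gluing into a single internal trajectory is needed, which is what makes the backward direction go through) and your remark that the $\mathsf{Post}$-emptiness condition of feedback-composability plays no role in the biconditional are both accurate.
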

\begin{proof}
	\label{proof_prop_FB_runs}
	The proof is straightforward based on Definitions \ref{def_system} and \ref{def_feedback_composition}.
\end{proof}

\begin{definition}[Observation Composition] 
	\label{def_observ_composition} 	
	Consider two systems $S_i := (X_i, X_{i,0}, U_i, \underset{i}{\longrightarrow}, Y_i, H_i)$, $i \in \{1,2\}$, such that $U_1 \times Y_1 \subseteq U_2$. 
	The observation composition of $S_1$ and $S_2$, denoted by $S_2 \triangleleft S_1$, is a new system $S_{12} := (X_1 \times X_2, X_{1,0} \times X_{2,0}, U_1, \underset{12}{\longrightarrow}, X_2, H_{12})$, where $((x_1,x_2),u_1,(x'_1, x'_2)) \in \underset{12}{\longrightarrow}$ iff there exist two transitions: $(x_1, u_1, x'_1) \in \underset{1}{\longrightarrow}$ and $(x_2, (u_1,H_1(x_1)), x'_2) \in \underset{2}{\longrightarrow}$, and $H_{12} := \pi_{X_2}$. 
\end{definition}

The observation composition is used when system $S_2$ is an observer that infers the states of $S_1$ by monitoring its inputs and outputs.

\subsection{Specifications and Control Problems}
%---------------------------------------------------------
Now, we discuss the behaviors of systems and their specifications.
Let $S$ be a system as defined in Definition \ref{def_system}.
The internal and external behaviors of $S$ are subsets of the set of all (possibly infinite) internal and external prefixes of $S$, i.e., $B_{int}(S) \subseteq \bigcup_{n\in \N \cup \{ \infty \}}  \mathsf{PREFS}^n_{int}(S)$ and $B_{ext}(S) \subseteq \bigcup_{n\in \N \cup \{ \infty \}}  \mathsf{PREFS}^n_{ext}(S)$. Specifications are defined next.

\begin{definition}[Specification]
	\label{def_specification}
	Let $S$ be a system as defined in Definition \ref{def_system}.
	Let $\Gamma_S := \pi_{Y}(B_{ext}(S))$ be the set of all output sequences of $S$.
	A \emph{specification} $\psi \subseteq \Gamma_S$ is a set of output sequences that must be enforced on $S$.
	System $S$ satisfies $\psi$ (denoted by $S \models \psi$) iff $\pi_{Y}(B_{ext}(S)) \subseteq \psi$.
\end{definition}

Specifications can adopt formal requirements encoded as linear temporal logic (LTL) \cite{Pnueli77} formulae or automata on finite strings.
Classical requirements like \emph{invariance} (often referred to as \emph{safety}) and \emph{reachability} can be readily included.
Given a safe set of observations $F \subseteq Y$, we denote by $\Safe(F)$ the \emph{safety specification} and we define it as follows:
\begin{equation}
	\nonumber
	\Safe(F) := \{y_0 y_1 y_2 \cdots \in \Gamma_S \;\vert\; \forall k \geq 0 \; (y_k \in F)\}.
\end{equation}
The safety objective requires that the output of $S$ always remains within subset $F$.
Using LTL, such a safety specification is encoded as the formula $\Box F$. 
Similarly, for a target set of observations $T \subseteq Y$, we denote by $\Reach(T)$ the \emph{reachability specification} and we define it as follows:
\begin{equation}
	\nonumber
	\Reach(T) := \{y_0 y_1 y_2 \cdots \in \Gamma_S \;\vert\; \exists k \geq 0 \; (y_k \in T)\}.
\end{equation}
The reachability objective requires that the output of $S$ visits, at least once, some elements in $T$.
Such a reachability specification is encoded as the LTL formula $\Diamond T$.

Specifications like \emph{infinitely often} ($\Box \Diamond G$) and \emph{eventually forever} (a.k.a. \emph{persistence}) ($\Diamond \Box G$), for a set of observations $G \subseteq Y$, can be defined in a similar way.
It is also possible to extend the specifications to include timing constrains.
For example, $\Safe_{[a,b]}(F)$ and $\Reach_{[a,b]}(T)$ require that the output follows the specifications during the time steps $k \in \{a, a+1, \cdots, b\}$.
Such time-constrained specifications can be encoded in the form of metric temporal logic (MTL) formulae \cite{KRon_MTL}.

\begin{remark}
	For state-based systems, specifications are reduced automatically to sequences of states, since external and internal behaviors of systems coincide.
	In such a case, the satisfaction condition in Definition \ref{def_specification} should be checked against internal behaviors.
\end{remark}

Now, we \oldagain{formulate a general concept of}\newagain{introduce the} control problem considered in this article. \newagain{We then introduce controllers and their domains.}

\begin{problem}[Control Problem]
	\label{def_control_problem}
	Consider a system $S$ as defined in Definition \ref{def_system}.
	Let $\psi$ be a given specification on $S$ following Definition \ref{def_specification}.
	We denote by the tuple $(S,\psi)$ the control problem of finding a system $C$ such that $C \times S \models \psi$.
\end{problem}

\oldagain{Throughout the rest of the article, system $C$ is referred to as the \emph{controller} solving the control problem $(S,\psi)$. We define controllers formally next.}

\begin{definition}[Controller]
	\label{def_controller}
	Given a control problem $(S,\psi)$ as defined in Problem \ref{def_control_problem}, a controller solving the control problem is a feedback-composable system
	\begin{equation}
		\nonumber
		C := (X_C, X_{C,0}, U_C, \underset{C}{\longrightarrow}, Y_C, H_C),
	\end{equation}	
	where $U_C := Y$ and $Y_C := U$.
	All of $X_C$, $X_{C,0}$, $\underset{C}{\longrightarrow}$, and $H_C$ are constructed such that $C \times S \models \psi$.
\end{definition}

The domain of controller $C$ is the set of initial states of the controlled systems that can be controlled to solve the main control problem. We define it formally next.

\begin{definition}[Domain of Controller]
	\label{def_controller_domain}
	Consider a controller $C$ solving \old{a given control problem }$(S, \psi)$, as defined in Definition \ref{def_controller}.
	%A state $x \in X_0$ is said to $\psi$-\emph{satisfiable} iff  $C \times S^{(\{x\})} \models \psi$.
	The domain of \old{the controller}\new{$C$} is denoted by $\mathcal{D}(C) \subseteq X_0$ and \old{it is }defined as follows:
	\begin{equation}
		\nonumber
		\mathcal{D}(C) := \{ x \in X_0 \;\vert\;  C \times S^{(\{x\})} \models \psi \}.
	\end{equation}
\end{definition}

%%%%%%%%%%%%%%%%%%%%%%%%%%%%%%%%%%%%%%%%%%%%%%%%%%%%%%%%%%%%%%%%%%%%%%%%%%%%%%%%%%%%%%%%%%%%%%%%%%%
\section{Output-Feedback Refinement Relations}
%%%%%%%%%%%%%%%%%%%%%%%%%%%%%%%%%%%%%%%%%%%%%%%%%%%%%%%%%%%%%%%%%%%%%%%%%%%%%%%%%%%%%%%%%%%%%%%%%%
\label{sec_OFRR}
%In symbolic control, relations are used to ensure the existence of controllers for concrete systems on the basis that there exit controllers for their symbolic models.
We first revise FRRs \cite{G.Reissig_etal_FRR_TAC}\old{ and show that they are not suitable for output-based systems}\new{ and then introduce OFRRs}.
\old{Then, we introduce OFRRs to relate two systems based on their outputs.}

\begin{definition}[FRR]
	\label{def_FRR}
	Consider two state-based systems $S_i := (X_i, X_{i,0}, U_i, \underset{i}{\longrightarrow}, X_i, id_{X_i})$, $i \in \{1,2\}$, and assume that $U_2 \subseteq U_1$.
	A strict relation $Q \subseteq X_1 \times X_2$ is an FRR from $S_1$ to $S_2$ if all of the followings hold for all $(x_1, x_2) \in Q$:
	\begin{enumerate}
		\item[(i)] $U_{S_2}(x_2) \subseteq U_{S_1}(x_1)$,
		\item[(ii)] $u \in U_{S_2}(x_2) \implies Q(\Post{x_1}{u}{S_1}) \subseteq \Post{x_2}{u}{S_2}$, and
		\item[(iii)] $x_1 \in X_{1,0} \implies x_2 \in X_{2,0}$.	
	\end{enumerate}
	When $Q$ is an FRR from $S_1$ to $S_2$, this is denoted by $S_1 \preccurlyeq_Q S_2$.
\end{definition}

FRRs are introduced to resolve common shortcomings in alternating (bi-)simulation relations (ASR) and their approximate versions.
As discussed in \cite{G.Reissig_etal_FRR_TAC}, using ASR results in controllers that require exact state information of concrete systems while only quantized state information is usually available.
Additionally, the refined controllers contain symbolic models of original systems as building blocks inside them, which makes the implementation much more complex.
On the other hand, controllers designed for systems related via FRRs require only quantized-state information.
They can be feedback-composed with original systems through static quantizers and they do not require the symbolic models as building blocks inside them.
Such features simplify refining and implementing the synthesized symbolic controllers.

Unfortunately, FRRs are only applicable to state-based systems.
Basically, a controller synthesized for the outputs of a symbolic model can not be refined to work with its original system.
This is because there is no mapping from the outputs of the original system to the outputs of its symbolic model.
Consequently, outputs of original systems received by the refined controllers cannot be matched to outputs of symbolic models used previously to synthesize the symbolic controllers.
%However, since output maps in Definition \ref{def_system} do not affect the state evolution of systems, but rather provide observations for their states, it is possible to extend FRR to output-based systems. 
We introduce OFRRs as extensions of FRRs so that one can construct symbolic models, synthesize symbolic controllers and refine them for output-based systems.

%Consider two \emph{output-based} systems $S_1$ and $S_2$ as defined in Definition \ref{def_system}.
%As no confusion will occur in the next subsections, when we say that systems $S_1$ and $S_2$ are related via an FRR $Q \subseteq X_1 \times X_2$ (i.e., $S_1 \preccurlyeq_Q S_2$), this refers to the situation that their state-based versions (see \eqref{eq_state_based_system}) are related via that FRR (i.e., $S_{1,X_1} \preccurlyeq_Q S_{2,X_2}$).
If $S_1$ and $S_2$ are \emph{output-based} systems, we use $S_1 \preccurlyeq_Q S_2$ to denote that $Q \subseteq X_1 \times X_2$ is an FRR from $S_{1,X_1}$ to $S_{2,X_2}$.

\begin{definition}[OFRR]
	\label{def_OFRR}
	Consider two output-based systems $S_i := (X_i, X_{i,0}, U_i, \underset{i}{\longrightarrow}, Y_i, H_i)$, $i \in \{1,2\}$, such that $U_2 \subseteq U_1$.
	Let $Q \subseteq X_1 \times X_2$ be an FRR such that $S_1 \preccurlyeq_Q S_2$.
	A relation $Z \subseteq Y_1 \times Y_2$ is an OFRR if all of the followings hold:
	% Note: Z does not need to be declared strict. Condition (ii) below and the construction of the symbolic models later will ensure that Z is strict.
	\newagain{\begin{enumerate}
		\item[(i)]   For any $(y_1, y_2) \in Z,\; \bar{U}_{S_2}(y_2) \subseteq \bar{U}_{S_1}(y_1)$,
		\item[(ii)]  For any $(x_1, x_2) \in Q,\;  \exists (y_1, y_2) \in Z \;\text{s.t.}\; y_1 = H_1(x_1) \;\land\; y_2=H_2(x_2)$, and 
		\item[(iii)] For any $(y_1, y_2) \in Z,\; \exists (x_1, x_2) \in Q \;\text{s.t.}\; x_1 \in H_1^{-1}(y_1) \;\land\; x_2 \in H_2^{-1}(y_2)$.
	\end{enumerate}}
\end{definition}

Condition (i) ensures the admissibility of inputs of $S_2$ for $S_1$.
This is not restrictive for output-based systems representing control systems as we show later in Remark \ref{rmk_concrete_system_chars}.
%In particular, if $S_1$ embeds a control system $\Sigma$, one has $U_{S_1}(x) = U_1$, for all $x \in X_1$, which implies that $\bar{U}_{S_1}(y) = U_1$, for all $y \in Y_1$.
%Therefore, condition (i) holds when $S_1$ represents a control system.
Conditions (ii) and (iii) ensure that observed outputs correspond to evolving states that obey a valid FRR between the two systems.
For the sake of a simpler presentation, we slightly abuse the notation hereinafter and use $S_1 \preccurlyeq_Z S_2$ to indicate the existence of OFRR $Z$ from $S_1$ to $S_2$.
%They also preserve the alignment between the two relations $Q$ and $Z$.

We provide a simple example to illustrate the importance of conditions (ii) and (iii).
\begin{example}
	\label{ex_OFRR_conditions}
	Consider system $S := (X, X, U, \longrightarrow, Y, H)$, where $X := \{x_1,x_2,x_3,x_4\}$, $U$ and $\longrightarrow$ are some sets, $Y := \{y_1,y_2,y_3,y_4\}$, and $H(x_i) := y_i$, for all $i \in \{1,2,3,4\}$.
	Also consider system $\sym{S} := (\sym{X}, \sym{X}, \sym{U}, \overset{\sym{}}{\longrightarrow}, \sym{Y}, \sym{H})$, where $\sym{X} := \{x_{q_1}, x_{q_2}, x_{q_3}\}$, $\overset{\sym{}}{\longrightarrow}$ and $\sym{U} \subseteq U$ are some sets, $\sym{Y} := \{y_{q_1}, y_{q_2}, y_{q_3}\}$, and $\sym{H}(x_{q_i}) := y_{q_i}$, for all $i \in \{1,2,3\}$.
	Consider a relation $Q := \{ (x_1, x_{q_1}), (x_2, x_{q_1}), (x_3, x_{q_2}), (x_4, x_{q_2}) \}$ and assume that the settings of $U$, $\longrightarrow$, $\sym{U}$, and $\overset{\sym{}}{\longrightarrow}$ ensure that $S \preccurlyeq_Q \sym{S}$.
	We inspect two relations $Z_1 \subseteq Y \times \sym{Y}$ and $Z_2 \subseteq Y \times \sym{Y}$ lacking, respectively, conditions (ii) and (iii) in Definition \ref{def_OFRR}:
	\begin{enumerate}
		\item A relation $Z_1 := \{ (y_1, y_{q_1}), (y_3, y_{q_2}), (y_4, y_{q_2}) \}$ violates condition (ii).
		Note that $(x_2, x_{q_1}) \in Q$ has no corresponding element in $Z_1$ that satisfies the condition.
		Consequently, if $S$ is at state $x_2$, its output $y_2 = H(x_2)$ can not be mapped to one of the outputs of $\sym{S}$.
		Condition (ii) ensures that, as system $S$ evolves, there always exit related observations in set $\sym{Y}$.
		
		\item A relation $Z_2 := \{ (y_1, y_{q_1}), (y_2, y_{q_1}), (y_3, y_{q_2}), (y_4, y_{q_2}), (y_3, y_{q_3}) \}$ violates condition (iii). 
		More precisely, $(y_3, y_{q_3}) \in Z$ has no corresponding element in $Q$ that satisfies condition (iii).
		Now, $\{y_{q_2} ,y_{q_3}\} = Z(y_3)$ makes it ambiguous to map the the output $y_3$ from $S$ to $\sym{S}$.
		Condition (iii) makes sure that outputs can be mapped unambiguously from $S$ to $\sym{S}$.
	\end{enumerate}
\end{example}

The following proposition provides sufficient conditions for the existence of OFRR.

\begin{proposition}
	\label{prop_OFRR_exists}
	%{\color{red}I may change this and make that show the existence of a Q in case of a given Z. No need for this uniqueness thing.}
	Consider two systems $S_i := (X_i, X_{i,0}, U_i, \underset{i}{\longrightarrow}, Y_i, H_i)$, $i \in \{1,2\}$ having $U_2 \subseteq U_1$.
	Let $Q \subseteq X_1 \times X_2$ be an FRR such that $S_1 \preccurlyeq_Q S_2$\newagain{, $Y_2$ partitions $Y_1$, and
	\begin{equation}
		\label{eq_alignment_preservation_condition}
		y \in Y_2 \implies H_1(Q^{-1}(H_2^{-1}(y))) \equiv y.
	\end{equation}}
	Then, there exists a unique OFRR $Z \subseteq Y_1 \times Y_2$ corresponding to FRR $Q$ such that $S_1 \preccurlyeq_Z S_2$. 
\end{proposition}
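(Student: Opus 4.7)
The plan is to construct the unique OFRR explicitly by matching each pair in $Q$ with its image under the output maps and then to verify the three defining conditions in a convenient order, leveraging the partition hypothesis and the alignment condition \eqref{eq_alignment_preservation_condition} to mediate between concrete and abstract outputs. Concretely, interpreting each $y\in Y_2$ as a subset of $Y_1$ (as the partition hypothesis suggests), I would define
\begin{equation}
\nonumber
Z := \{ (H_1(x_1), H_2(x_2)) \in Y_1 \times Y_2 \;\vert\; (x_1,x_2) \in Q \},
\end{equation}
which is equivalently $Z = \{(y_1,y_2)\in Y_1\times Y_2 \;\vert\; y_1\in y_2\}$ once condition \eqref{eq_alignment_preservation_condition} is invoked.

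First I would handle conditions (ii) and (iii) of Definition~\ref{def_OFRR} since they flow directly from the construction. Condition (ii) is immediate from the definition of $Z$. For condition (iii), given $(y_1,y_2)\in Z$ with $y_1\in y_2$ (viewed as a subset of $Y_1$), the alignment equality $H_1(Q^{-1}(H_2^{-1}(y_2)))\equiv y_2$ produces an $x_1\in Q^{-1}(H_2^{-1}(y_2))$ with $H_1(x_1)=y_1$, hence some $x_2\in H_2^{-1}(y_2)$ with $(x_1,x_2)\in Q$, which is the required witness.

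Next I would verify condition (i), which I expect to be the main obstacle since it mixes the admissibility quantifiers $\bar U_{S_i}$ with the FRR/OFRR structure. Take $u\in\bar U_{S_2}(y_2)$ and an arbitrary $x_1\in H_1^{-1}(y_1)$; I must show $u\in U_{S_1}(x_1)$. Using strictness of $Q$, pick $x_2\in X_2$ with $(x_1,x_2)\in Q$. By construction $(y_1, H_2(x_2))\in Z$, so $y_1\in H_2(x_2)$ and $y_1\in y_2$ as subsets of $Y_1$; the partition hypothesis then forces $H_2(x_2)=y_2$, i.e., $x_2\in H_2^{-1}(y_2)$. Hence $u\in U_{S_2}(x_2)$, and FRR condition (i) from Definition~\ref{def_FRR} yields $u\in U_{S_1}(x_1)$. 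Since $x_1$ was arbitrary, $u\in\bar U_{S_1}(y_1)$.

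Finally, for uniqueness, I would argue that any OFRR $Z'$ associated with the same $Q$ must coincide with $Z$. Condition (ii) applied to $Z'$ forces every pair $(H_1(x_1),H_2(x_2))$ with $(x_1,x_2)\in Q$ into $Z'$, giving $Z\subseteq Z'$; condition (iii) applied to $Z'$ writes every element of $Z'$ as $(H_1(x_1),H_2(x_2))$ for some $(x_1,x_2)\in Q$, giving $Z'\subseteq Z$. The technical subtlety to be careful about throughout is the dual role of elements of $Y_2$ as both abstract outputs and as equivalence classes (subsets) of $Y_1$; clarifying this identification up front should make all three conditions straightforward and make uniqueness essentially a tautology once the construction is fixed.
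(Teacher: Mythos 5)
Your proposal is correct and follows essentially the same route as the paper: the same explicit construction of $Z$ from $Q$ via the output maps, and the same two-inclusion uniqueness argument that plays conditions (ii) and (iii) off against each other. The one place you go beyond the paper is in actually verifying condition (i) of Definition~\ref{def_OFRR} --- the paper merely asserts that the constructed $Z$ satisfies (i)--(iii) --- and your argument there correctly pinpoints that this is exactly where the hypothesis that $Y_2$ partitions $Y_1$ and the alignment condition \eqref{eq_alignment_preservation_condition} are needed (to force $H_2(x_2)=y_2$ for the $x_2$ obtained from strictness of $Q$).
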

\begin{proof}
	\label{proof_prop_OFRR_exists}
	Let $Q \subseteq X_1 \times X_2$ be an FRR.	
	We first prove by construction that $Z$ exists.
	Let $Z$ be as follows:
	\begin{align}
		\nonumber
		Z := \{ & (y_1, y_2) \;\vert\; \\ \nonumber 
		          &y_1 = H_1(x_1) \;\land\; y_2 = H_2(x_2) \;\text{for some}\; (x_1, x_2) \in Q \},
	\end{align}
	which satisfies conditions (i)-(iii) in Definition \ref{def_OFRR}.
	
	Now, we prove that $Z$ is unique.
	Consider two OFRRs $Z_1$ and $Z_2$ having the same underlying FRR $Q$.
	We show that they are equal.
	Consider any $(y_{1,1}, y_{1,2})\in Z_1$.
	We know from condition (iii) in the definition of OFRR $Z_1$ that there exists $(x_1, x_2) \in Q$ such that $x_1 \in H_1^{-1}(y_{1,1})$ and $x_2 \in H_2^{-1}(y_{1,2})$.
	We also know from condition (ii) in the definition of OFRR $Z_2$ that there exists $(y_1, y_2) \in Z_2$ such that $y_1 = H_1(x_1)$ and $y_2 = H_2(x_2)$.
	Clearly, $y_1 = y_{1,1}$ and $y_2 = y_{1,2}$ since the output maps are single-valued.
	This implies that $(y_{1,1}, y_{1,2})\in Z_2$ and, hence, $Z_1 \subseteq Z_2$.
	One can, similarly, show that $Z_2 \subseteq Z_1$ which proves that $Z_1 = Z_2$.
\end{proof}

The following corollary shows that OFRR and FRR coincide if the related systems are state-based.
\begin{corollary}
	\label{corr_OFRR_equals_FRR}
	Consider two systems $S_i:= (X_i, X_{i,0}, U_i, \underset{i}{\longrightarrow}, X_i, id_{X_i})$, $i \in \{1,2\}$, such that $U_2 \subseteq U_1$.
	Let $Q \subseteq X_1 \times X_2$ be an FRR such that $S_1 \preccurlyeq_Q S_2$.
	Then, $Z$ and $Q$ coincide (i.e., $Q = Z$). 
\end{corollary}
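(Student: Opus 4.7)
The plan is to verify directly that the FRR $Q$ itself satisfies all three conditions of an OFRR when the two systems are state-based, and then to appeal to the uniqueness part of Proposition \ref{prop_OFRR_exists} to conclude that this is the only OFRR associated with $Q$.

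First I would exploit the state-based structure to collapse the output machinery. Since $Y_i = X_i$ and $H_i = id_{X_i}$ for $i \in \{1,2\}$, the preimage $H_i^{-1}(y)$ reduces to the singleton $\{y\}$ for every $y \in Y_i$. Consequently, the map $\bar{U}_{S_i}$ in this case coincides with $U_{S_i}$, i.e., $\bar{U}_{S_i}(y) = U_{S_i}(y)$ for every $y \in Y_i$. This identification is what allows conditions of OFRR and FRR to be matched directly.

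Next I would check the three conditions of Definition \ref{def_OFRR} with $Z := Q$. For condition (i), take any $(y_1, y_2) \in Q$; then $\bar{U}_{S_2}(y_2) = U_{S_2}(y_2) \subseteq U_{S_1}(y_1) = \bar{U}_{S_1}(y_1)$, where the inclusion follows from condition (i) of the FRR $Q$ in Definition \ref{def_FRR}. For condition (ii), given $(x_1, x_2) \in Q$, the pair $(y_1, y_2) := (H_1(x_1), H_2(x_2)) = (x_1, x_2)$ is itself in $Z = Q$, as required. For condition (iii), given $(y_1, y_2) \in Z = Q$, the pair $(x_1, x_2) := (y_1, y_2)$ belongs to $Q$ and satisfies $x_1 \in H_1^{-1}(y_1) = \{y_1\}$ and $x_2 \in H_2^{-1}(y_2) = \{y_2\}$. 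Hence $Q$ is an OFRR.

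Finally I would invoke the uniqueness argument from the proof of Proposition \ref{prop_OFRR_exists}, which relies only on conditions (ii) and (iii) of OFRR together with the single-valuedness of the output maps, and not on the partition hypothesis. Consequently, any OFRR with underlying FRR $Q$ must equal $Q$, so $Z = Q$. I do not anticipate a real obstacle here; the only delicate point is confirming that the uniqueness argument from the proposition carries over without relying on its additional partitioning hypothesis, but a quick inspection of that argument shows this is the case.
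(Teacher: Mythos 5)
Your proof is correct, and it reaches the conclusion by a slightly different route than the paper. The paper's own proof is a one-line element chase: given the OFRR $Z$ with underlying FRR $Q$, condition (ii) of Definition \ref{def_OFRR} with $H_i = id_{X_i}$ gives $Q \subseteq Z$, and condition (iii) gives $Z \subseteq Q$. You instead first verify that $Q$ itself satisfies all three OFRR conditions (settling existence, which the paper leaves implicit and which does not follow from the existence part of Proposition \ref{prop_OFRR_exists} as stated, since that part carries the extra hypothesis that $Y_2$ partitions $Y_1$ together with condition \eqref{eq_alignment_preservation_condition}), and then appeal to the uniqueness half of Proposition \ref{prop_OFRR_exists} to get $Z = Q$. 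Your explicit check that the uniqueness argument uses only conditions (ii) and (iii) and the single-valuedness of the output maps — and not the partition hypothesis — is exactly the delicate point, and it holds. Mathematically the two arguments share the same core: the uniqueness proof you invoke, specialized to $Z_1 = Z$ and $Z_2 = Q$, is precisely the paper's two-inclusion chase. What your version buys is a self-contained existence-plus-uniqueness statement; what it costs is reliance on a proposition whose hypotheses you must partially set aside, which the paper's direct argument avoids.
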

\begin{proof}
	\label{proof_corr_OFRR_equals_FRR}
	The proof is straightforward since we have $Q \subseteq Z$ and $Z \subseteq Q$ as a result of making $H_i = id_{X_i}$, $i \in \{1,2\}$ and using Definition \ref{def_OFRR}.
\end{proof}

\new{The following proposition shows that, when two systems are related via an OFRR $Z$ and as we observe one of the systems, we can always find corresponding outputs of the other system such that the successor outputs of both systems are in $Z$.}
\old{The following proposition shows that OFRRs ensure synchronized runs of underlying state-base systems.}
\new{Such a feature is used to prove the output-based behavioral inclusion from original systems to symbolic ones in Subsection \ref{ssec_controller_synthesis}.}
\begin{proposition}
	\label{prop_OFRR_evolution}
	Consider two systems $S_i := (X_i, X_{i,0}, U_i, \underset{i}{\longrightarrow}, Y_i, H_i)$, $i \in \{1,2\}$ having $U_2 \subseteq U_1$.
	Let $Z \subseteq Y_1 \times Y_2$ be an OFRR s.t. $S_1 \preccurlyeq_Z S_2$. 
	Then, for any $(y_1,y_2) \in Z$ we have:
	\begin{align}		
	\nonumber
	&\forall u \in \bar{U}_{S_2}(y_2) \; \forall y'_1 \in  \overline{\mathsf{Post}}_{u}^{S_1}(y_1) \; \exists y'_2 \in \overline{\mathsf{Post}}_{u}^{S_2}(y_2) \; \text{s.t.} \\ \nonumber
	&((y'_1, y'_2) \in Z).
	\end{align}
\end{proposition}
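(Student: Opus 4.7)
The plan is to lift the observation-level claim to the state level via the underlying FRR $Q$, obtain a matching state successor through FRR condition (ii), and project the resulting pair back to the output level using OFRR condition (ii). Fix $(y_1,y_2)\in Z$, $u\in\bar{U}_{S_2}(y_2)$, and $y_1'\in\overline{\mathsf{Post}}_u^{S_1}(y_1)$. By the definition of $\overline{\mathsf{Post}}$, unpack $y_1'$ into states: there exist $\tilde x_1\in H_1^{-1}(y_1)$ and $\tilde x_1'\in\Post{\tilde x_1}{u}{S_1}$ with $H_1(\tilde x_1')=y_1'$. From OFRR condition (i), $u\in\bar{U}_{S_2}(y_2)\subseteq\bar{U}_{S_1}(y_1)$, so $u$ is admissible at every state in $H_1^{-1}(y_1)$ and at every state in $H_2^{-1}(y_2)$.

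Next, OFRR condition (iii) applied to $(y_1,y_2)\in Z$ produces a pair $(x_1,x_2)\in Q$ with $H_1(x_1)=y_1$ and $H_2(x_2)=y_2$. The intermediate goal is to secure a pair $(\tilde x_1,\tilde x_2)\in Q$ with $H_2(\tilde x_2)=y_2$, so that FRR condition (ii) can be applied from $\tilde x_1$ toward $y_2$-successors. I plan to justify this by combining the image characterization $Z=(H_1\times H_2)(Q)$, a direct consequence of OFRR conditions (ii) and (iii), with strictness of $Q$; if the witness $\tilde x_1$ chosen above is not yet $Q$-related to a state of $y_2$, it is replaced by another preimage of $y_1$ that is, without losing the fact that $y_1'$ lies in $\overline{\mathsf{Post}}_u^{S_1}(y_1)$.

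With $(\tilde x_1,\tilde x_2)\in Q$ in place, FRR condition (ii) yields $Q(\Post{\tilde x_1}{u}{S_1})\subseteq\Post{\tilde x_2}{u}{S_2}$ since $u\in U_{S_2}(\tilde x_2)$. By strictness of $Q$, pick $\tilde x_2'\in Q(\tilde x_1')$; then $\tilde x_2'\in\Post{\tilde x_2}{u}{S_2}$, and therefore $y_2':=H_2(\tilde x_2')\in\overline{\mathsf{Post}}_u^{S_2}(y_2)$. Finally, OFRR condition (ii) applied to $(\tilde x_1',\tilde x_2')\in Q$ delivers $(y_1',y_2')=(H_1(\tilde x_1'),H_2(\tilde x_2'))\in Z$, which is the desired conclusion.

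The main obstacle I anticipate is the matching step in the second paragraph, namely certifying the existence of $\tilde x_2\in Q(\tilde x_1)\cap H_2^{-1}(y_2)$ for an arbitrary witness $\tilde x_1$ realizing $y_1'$. This does not follow from any single OFRR condition in isolation; the argument must combine strictness of $Q$ with the joint content of conditions (ii) and (iii) to ensure that the $Y_1$- and $Y_2$-projections of $Q$ remain coherent at the specific pair $(y_1,y_2)$. Once this coherence is established, the remainder of the proof reduces to mechanical application of the FRR transition inclusion and the definitions of $\overline{\mathsf{Post}}$ and $Z$.
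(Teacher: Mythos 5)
Your skeleton is the same as the paper's: use OFRR condition (iii) to drop from $(y_1,y_2)\in Z$ to a pair $(x_1,x_2)\in Q$, apply FRR condition (ii) to obtain a matching state successor $x_2'$, and lift $(x_1',x_2')\in Q$ back to the output level with OFRR condition (ii), checking at the end that $H_2(x_2')\in\overline{\mathsf{Post}}_{u}^{S_2}(y_2)$ because $x_2\in H_2^{-1}(y_2)$. Where you diverge is in insisting that the argument cover an \emph{arbitrary} witness $\tilde x_1\in H_1^{-1}(y_1)$ realizing $y_1'$, and the obstacle you flag there is genuine: your proposed repair does not close it. Strictness of $Q$ together with condition (ii) of Definition \ref{def_OFRR} only yields \emph{some} pair $(y_1,H_2(\tilde x_2))\in Z$, and since Definition \ref{def_OFRR} does not force $Z$ to be single-valued, $H_2(\tilde x_2)$ need not equal the given $y_2$. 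Swapping $\tilde x_1$ for a different preimage $x_1$ of $y_1$ that \emph{is} $Q$-matched over $y_2$ does not help either, because $\overline{\mathsf{Post}}_{u}^{S_1}(y_1)$ is a union over all preimages of $y_1$ and the particular output $y_1'$ need not lie in $H_1(\Post{x_1}{u}{S_1})$ for the replacement state.

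You should know, however, that the paper's own proof does not resolve this point; it silently avoids it. It fixes the single pair $(x_1,x_2)$ delivered by condition (iii) and then ranges only over $x_1'\in\Post{x_1}{u}{S_1}$, so it establishes the claim for $y_1'\in H_1(\Post{x_1}{u}{S_1})$ rather than for every $y_1'\in\overline{\mathsf{Post}}_{u}^{S_1}(y_1)$. In the concrete construction of Theorem \ref{thrm_OFRR_exists_for_all_Q_quotent} the distinction evaporates: there $Q$ and $Z$ are induced by equivalence relations aligned via \eqref{cond_eq_rel}, so every preimage of $y_1$ is $Q$-related to a state over the same $y_2$ and your matching step goes through verbatim. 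So your instinct about where the difficulty lies is exactly right; a complete proof at the level of Definition \ref{def_OFRR} either needs an additional hypothesis of this alignment flavor or must restrict the quantifier in the statement to successors of the witness supplied by condition (iii).
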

\begin{proof}
	\label{proof_prop_OFRR_evolution}
	Consider any $(y_1, y_2) \in Z$ and any $u \in \bar{U}_{S_2}(y_2)$.
	We know by condition (i) in Definition \ref{def_OFRR} that $u \in \bar{U}_{S_1}(y_1)$.	
	We also know from condition (iii) in Definition \ref{def_OFRR} that there exists $(x_1, x_2) \in Q$ such that $y_1 = H_1(x_1)$ and $y_2 = H_2(x_2)$.
	Now, consider any $x'_1 \in \Post{x_1}{u}{S_1}$.
	Also, consider the output of $x'_1$ which is $y'_1 = H_1(x'_1) \in H(\Post{x_1}{u}{S_1}) \subseteq \overline{\mathsf{Post}}_{u}^{S_1}(y_1)$.
	
	We know from Definition \ref{def_FRR} for $Q$ that $Q(x'_1) \subseteq \Post{x_2}{u}{S_2}$ which implies that there exists $x'_2 \in X_2$ such that $(x'_1, x'_2) \in Q$.
	From Definition \ref{def_OFRR} for $Z$, there exists $(y'_1, y^*) \in Z$ with $y^* = H_2(x'_2)$.
	What remains is to show that $y^* \in \overline{\mathsf{Post}}_{u}^{S_2}(y_2)$.
	By definition, we have $\overline{\mathsf{Post}}_{u}^{S_2}(y_2) = H_2(\Post{H_2^{-1}(y_2)}{u}{S_2})$. 
	We also know from Definition \ref{def_OFRR} that $x_2 \in H_2^{-1}(y_2)$ which implies that $x'_2 \in \Post{H_2^{-1}(y_2)}{u}{S_2}$.
	Note that $y^* = H_2(x'_2)$ implies that $y^* \in H_2(\Post{H_2^{-1}(y_2)}{u}{S_2}) = \overline{\mathsf{Post}}_{u}^{S_2}(y_2)$.
\end{proof}

%%%%%%%%%%%%%%%%%%%%%%%%%%%%%%%%%%%%%%%%%%%%%%%%%%%%%%%%%%%%%%%%%%%%%%%%%%%%%%%%%%%%%%%%%%%%%%%%%%%
\section{Output-Feedback Symbolic Control}
%%%%%%%%%%%%%%%%%%%%%%%%%%%%%%%%%%%%%%%%%%%%%%%%%%%%%%%%%%%%%%%%%%%%%%%%%%%%%%%%%%%%%%%%%%%%%%%%%%%
\label{sec_outfb_sym_control}

We first introduce control systems.
Then, we construct symbolic models of them and synthesize their symbolic controllers using OFRRs.

\subsection{Control systems}
%---------------------------------------------------------
\begin{definition}[Control System]
	\label{def_control_system}
	A control system is a tuple $\Sigma := (\mathcal{X}, \mathcal{U}, f, \mathcal{Y}, h)$, where 
	$\mathcal{X} \subseteq \R^n$ is the state set; 
	$\mathcal{U} \subseteq \R^m$ is an input set; 
	$f: \mathcal{X} \times \mathcal{U} \rightarrow \mathcal{X}$ is a continuous map satisfying the following Lipschitz assumption: for each compact set $\mathsf{X} \subseteq \mathcal{X}$, there exists a constant $L \in \R^+$ such that  
	\begin{equation}
		\nonumber		
		\Vert f(x_1,u) - f(x_2,u) \Vert \leq L \Vert x_1-x_2 \Vert,
	\end{equation}
	for all $x_1,x_2 \in \mathsf{X}$ and all $u \in \mathcal{U}$;
	$\mathcal{Y} = \R^q$ is the output set; and 
	$h: \mathcal{X} \rightarrow \mathcal{Y}$ is an output (a.k.a. observation) map.
\end{definition}

Let $\mathsf{U}$ be the set of all functions of time from $]a,b[ \subseteq \R$ to $\mathcal{U}$ with $a<0$ and $b>0$.
We define a trajectory of $\Sigma$ by \old{the}\new{a} locally absolutely continuous curve $\xi$ : $]a,b[ \rightarrow \mathcal{X}$  if there exists a $v \in \mathsf{U}$ that satisfies $\dot{\xi}(t) = f(\xi(t), v(t))$ at any $t \in ]a,b[$.
We redefine $\xi : [0,t] \rightarrow \mathcal{X}$ for trajectories over closed intervals with the understanding that there exists a trajectory $\xi' : ]a,b[ \rightarrow \mathcal{X}$ for which $\xi = \xi'|_{[0,t]}$ with $a<0$ and $b > t$.  
$\xi_{xv}(t)$ denotes the state reached at time $t$ under input $v$ and with the initial condition \new{$\xi_{xv}(0) = x$}. 
Such a state is uniquely determined since the assumptions on $f$ ensure the existence and uniqueness of its trajectories \cite{E.D.Sontag_MCT}.
System $\Sigma$ is said to be \textit{forward complete} if every trajectory is defined on an interval of the form $]a,\infty[$. 
Here, we consider forward complete control systems.
\old{We also define $\zeta : [0,t] \to \mathcal{Y}$ for output trajectories of $\Sigma$ such that, at any time $t \in [0,t]$, the output of $\Sigma$ satisfies $\zeta(t) = h(\xi_{xv}(t))$.}\new{We also define $\zeta : [0,t] \to \mathcal{Y}$ as an output trajectory of $\Sigma$ if there exists a trajectory $\xi_{xv}$ over $[0,t]$ such that at any time $\tilde{t} \in [0,t]$ we have that $\zeta(\tilde{t}) = h(\xi_{xv}(\tilde{t}))$.}

\subsection{Control Systems as Systems}		
%--------------------------------------------------------- 
Let $\Sigma$ be a control system as defined in Definition \ref{def_control_system}. 
The sampled version of $\Sigma$ (a.k.a. concrete system) is a system
\begin{align}
	\label{eq_concrete_output_system}
	S_\tau(\Sigma) := (X_\tau, X_\tau, U_\tau, \underset{\tau}{\longrightarrow}, Y_\tau, H_\tau),
\end{align}
that encapsulates the information contained in $\Sigma$ at sampling times $k\tau$, for all $k \in \N$, where $X_\tau \subseteq \mathcal{X}$, $U_\tau$ is the set of piece-wise constant curves of length $\tau$ defined as follows:
\new{\begin{align}
	\nonumber
	U_\tau 	&:= \{v_{\tau}: [0,\tau[ \to \mathcal{U} \;\vert\; \forall t\in[0,\tau[\;(v_{\tau}(t) = v_{\tau}(0))\},
\end{align}}
\new{$Y_\tau := \{y_\tau \in \mathcal{Y} \;\vert\; \exists x_\tau \in X_\tau \; (y_\tau = h(x_\tau)) \}$}, $H_\tau := h$, and a transition $(x_\tau, v_\tau,x'_\tau) \in \underset{\tau}{\longrightarrow} $ iff there exists a trajectory $\xi : [0, \tau] \rightarrow \mathcal{X}$ in $\Sigma$ such that $\xi_{x_\tau v_\tau}(\tau) = x'_\tau$. 
We sometimes use $S_\tau$ to refer to the sampled-data system $S_\tau(\Sigma)$.

\begin{remark}
	\label{rmk_concrete_system_chars}
	System $S_\tau$ is deterministic since any trajectory of\old{ the forward complete control system} $\Sigma$ is uniquely determined. 
	Sets $X_\tau$ and $U_\tau$ are uncountable, and hence, $S_\tau$ is not symbolic.
	Since all trajectories of $\Sigma$ are defined for all inputs and all states, we have $U_{S_\tau}(x_\tau) = U_\tau$, for all $x_\tau \in X_\tau$, and $\bar{U}_{S_\tau}(y) = U_\tau$, for all $y \in Y_\tau$.
\end{remark}

System $S_\tau$ is an output-based system. 
Any system feedback-composed with (or, serially composed after) $S_\tau$ has no access to its states, but rather to its outputs.
Throughout this article, we also consider a state-based version of $S_\tau$ (denoted by $S_{\tau, X}(\Sigma)$) and defined as follows:
\begin{align}
	\label{eq_state_based_system}
	S_{\tau, X}(\Sigma) := (X_\tau, X_\tau, U_\tau, \underset{\tau}{\longrightarrow}, X_\tau, id_{X_{\tau}}).
\end{align}

\subsection{Symbolic Models of Control Systems}		
%---------------------------------------------------------  
We utilize OFRRs (and their underlying FRRs) to construct symbolic models that approximate $S_\tau$.
\old{Underlying FRRs of OFRRs ensure synchronized state-based evolution between concrete systems and their symbolic models.
OFRRs, on the other hand, ensure that outputs of symbolic models can be identified from outputs of their concrete systems.}
Given a control system $\Sigma$, let $S_\tau$ be its sampled-data representation, as defined in \eqref{eq_concrete_output_system}.
A symbolic model of $S_{\tau}$ is a system:
\begin{equation}
	\label{eq_abstract_output_system}
	\sym{S} :=(\sym{X}, \sym{X},\sym{U},\underset{q}{\longrightarrow}, \sym{Y}, \sym{H}),
\end{equation}
where 
$\sym{X} := X_\tau/\bar{Q}$, 
$\bar{Q}$ is a finite equivalence relation on $X_\tau$,
$\sym{U}$ is a finite subset of $U_{\tau}$,
$(\sym{x}, \sym{u}, \sym{x}') \in \underset{q}{\longrightarrow}$ if there exist $x \in \sym{x}$ and $x' \in \sym{x}'$ such that $(x,\sym{u},x') \in \underset{\tau}{\longrightarrow}$, 
$\sym{Y} := H_\tau(X_\tau) / \bar{Z}$,
where $\bar{Z}$ is a finite equivalence relation on $Y_\tau$, 
\old{$\sym{H}(\sym{x}) := H_\tau(x_q)$,} 
\new{$\sym{H}(\sym{x}) := \{ \sym{y} \in \sym{Y} \;\vert\; \sym{y} \cap H_{\tau}(\sym{x}) \neq \emptyset \}$,}
and \oldagain{the following }condition \newagain{\eqref{eq_alignment_preservation_condition}} holds \newagain{for $S_1 := S_{\tau}$ and $S_2 := \sym{S}$}.
\old{\begin{align}
	\label{eq_alignment_preservation_condition}
	 \{\sym{y} \in \sym{Y}\;\vert\;& \exists \sym{x} \in \sym{X} \text{ s.t. }\\
	 \nonumber
	 &  (\sym{x} = H^{-1}_{\tau}(\sym{y}) \;\land\; \sym{y} = H_{\tau}(\sym{x})) \} = \sym{Y}.
\end{align}}
% \old{\begin{align}
% 	\label{eq_alignment_preservation_condition}
% 	H_\tau(X_\tau) / \bar{Z} = \{ H_\tau(\sym{x}) \;\vert\; \sym{x} \in \sym{X} \}.
% \end{align}}
\oldagainnostrike{\begin{align}
	%\label{eq_alignment_preservation_condition}
	\sym{y} \in \sym{Y} \implies H_\tau(H_q^{-1}(\sym{y})) \equiv \sym{y}.
\end{align}}

\old{
%{\color{red}
Condition \eqref{eq_alignment_preservation_condition} ensures the alignment between equivalence classes of $\bar{Z}$ (e.g., $\sym{y} \in \sym{Y}$) and $\bar{Q}$.
}

\old{
\begin{remark}
	\label{rmk_using_covers_is_doable}
	For a simpler presentation of the rest of this section (Theorems \ref{thrm_OFRR_exists_for_all_Q_quotent} and \ref{thrm_external_behavioral_inclusion}) and the implementations in Section \ref{sec_examples}, we choose to construct $\sym{X}$ and $\sym{Y}$ using equivalence relations.
	Nevertheless, the presented results can be extended such that $\sym{X}$ and $\sym{Y}$ are covers of $X_\tau$ and $Y_\tau$, respectively.
	Note that the construction of symbolic models using FRR in \cite{G.Reissig_etal_FRR_TAC} originally defines $\sym{X}$ as a cover of $X_\tau$.
\end{remark}
}

Starting with a given equivalence relation $\bar{Z}$ on $Y_\tau$, one can construct the underlying equivalence relation $\bar{Q}$ on $X_\tau$ using the following relation condition for any $(x_a,x_b) \in \bar{Q}$:
\old{\begin{align}
	\label{cond_eq_rel}
	x_a \sim x_b \; \iff \; &\exists (y_a,y_b) \in Y_{\tau} \times Y_{\tau} \text{ s.t. } \\
	\nonumber
	& (x_a \in H^{-1}_{\tau}(y_a) \;\land\; x_b \in H^{-1}_{\tau}(y_b) \;\land\; \\
	\nonumber
	& [y_a] = [y_b]),
\end{align}}
\begin{equation}
	\label{cond_eq_rel}
	x_a \sim x_b \; \iff \; (H_{\tau}(x_a),H_{\tau}(x_b)) \in \bar{Z},
\end{equation}
which ensures that condition \eqref{eq_alignment_preservation_condition} is satisfied.
\old{One can readily verify that condition \eqref{cond_eq_rel} implies that $\bar{Q}$ is indeed an equivalence relation as it satisfies reflexivity, symmetry, and transitivity conditions of equivalence relations.}
The following theorem shows that the above introduced construction of $\sym{S}$ implies the existence of some OFRR $Z$ such that $S_\tau \preccurlyeq_Z \sym{S}$.

\begin{theorem}
	\label{thrm_OFRR_exists_for_all_Q_quotent}
	Let $S_\tau$ be defined as in \eqref{eq_concrete_output_system}.
	Also, let $\sym{S}$ be defined as in \eqref{eq_abstract_output_system} for some equivalence relations $\bar{Q}$ on $X_\tau$ and $\bar{Z}$ on $H_\tau(X_\tau)$ \oldagain{satisfying condition }\oldagainnostrike{\eqref{eq_alignment_preservation_condition}}. 
	Then, 
	\begin{equation}
		\nonumber
		Z := \{ (y,[y]) \in Y_\tau \times \sym{Y} \;\vert\; y \in H_\tau(X_\tau) \},
	\end{equation}
	is an OFRR such that $S_\tau \preccurlyeq_Z \sym{S}$ and
	\begin{equation}
		\nonumber
		Q := \{ (x,[x]) \in X_\tau \times \sym{X} \;\vert\; x \in X_\tau \},
	\end{equation}	
	is its underlying FRR.
\end{theorem}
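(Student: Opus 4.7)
My plan is to verify the three conditions of Definition \ref{def_FRR} for $Q$ and then the three conditions of Definition \ref{def_OFRR} for $Z$, exploiting Remark \ref{rmk_concrete_system_chars} (which trivializes the admissible-input requirements for $S_\tau$) and the alignment condition \eqref{eq_alignment_preservation_condition} that is imposed on $\bar{Q}$ and $\bar{Z}$ by the construction in \eqref{eq_abstract_output_system}.

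First I would show that $Q = \{(x,[x]) \mid x \in X_\tau\}$ is an FRR from the state-based view of $S_\tau$ to that of $\sym{S}$. Strictness is immediate, since every $x \in X_\tau$ is related to its own class $[x]$. For condition (i), Remark \ref{rmk_concrete_system_chars} gives $U_{S_\tau}(x) = U_\tau$, and since $U_{\sym{S}}([x]) \subseteq \sym{U} \subseteq U_\tau$, the inclusion follows. For condition (ii), I would fix $(x,[x]) \in Q$, $u \in U_{\sym{S}}([x])$, and any $x' \in \Post{x}{u}{S_\tau}$; then $(x,u,x') \in \underset{\tau}{\longrightarrow}$, and since $x \in [x]$ and $x' \in [x']$, the definition of $\underset{q}{\longrightarrow}$ in \eqref{eq_abstract_output_system} yields $([x],u,[x']) \in \underset{q}{\longrightarrow}$, so $Q(x') = \{[x']\} \subseteq \Post{[x]}{u}{\sym{S}}$. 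Condition (iii) is immediate because the state-based views satisfy $X_0 = X$.

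Next I would verify the three OFRR conditions for $Z = \{(y,[y]) \mid y \in H_\tau(X_\tau)\}$. Condition (i) again uses Remark \ref{rmk_concrete_system_chars}: $\bar{U}_{S_\tau}(y) = U_\tau$, so every $u \in \bar{U}_{\sym{S}}([y]) \subseteq \sym{U} \subseteq U_\tau$ is admitted. For condition (ii), given $(x,[x]) \in Q$, I would take $y_1 = H_\tau(x) \in H_\tau(X_\tau)$ and $y_2 = [y_1]$; the alignment \eqref{eq_alignment_preservation_condition} guarantees that $\sym{H}([x])$ equals $[H_\tau(x)] = y_2$, so $(y_1,y_2) \in Z$. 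For condition (iii), given any $(y,[y]) \in Z$, I would pick any $x \in H_\tau^{-1}(y)$; then $(x,[x]) \in Q$ and, by alignment again, $\sym{H}([x]) = [H_\tau(x)] = [y]$, so $[x] \in \sym{H}^{-1}([y])$.

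The main obstacle is the careful use of the alignment condition \eqref{eq_alignment_preservation_condition}: it is precisely this condition that makes $\sym{H}$ effectively single-valued on each $[x]$, so that the chosen $y_2$ in the verification of OFRR conditions (ii) and (iii) is unambiguously determined by $[x]$ and pairs consistently with $y_1 = H_\tau(x)$. Without it, the outputs of states in a single equivalence class $[x]$ could straddle several $\bar{Z}$-classes, so no single $y_2$ would pair with $y_1$ in $Z$, and the two pieces $Q$ and $Z$ of the construction would cease to be compatible.
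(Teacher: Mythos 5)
Your proposal is correct and follows essentially the same route as the paper's proof: verify the three FRR conditions for $Q$ directly (with conditions (i) and (iii) discharged via Remark \ref{rmk_concrete_system_chars} and the fact that all states are initial, and condition (ii) via the definition of $\underset{q}{\longrightarrow}$), then verify the three OFRR conditions for $Z$ using the alignment condition \eqref{eq_alignment_preservation_condition} for (ii) and (iii). Your write-up is in fact slightly more careful than the paper's in spelling out why $\sym{H}$ is effectively single-valued on each class and in choosing an arbitrary $x \in H_\tau^{-1}(y)$ for condition (iii), but the underlying argument is the same.
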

\begin{proof}
	First, we show that $Q$ is an FRR.
	Clearly, conditions (i) and (iii) in Definition \ref{def_FRR} hold since $S_\tau$ represents a control system.
	See Remark \ref{rmk_concrete_system_chars} for more details.
	We show that condition (ii) holds.
	Consider any $(x,[x]) \in Q$ and any input $\sym{u} \in \sym{U}([x])$.
	Also consider any successor state $x' \in \Post{x}{\sym{u}}{S_\tau}$.
	Remark that $x \in [x]$ and $x' \in [x']$ since $Q$ is an equivalence relation.
	Now, from the definition of $\sym{S}$ in \eqref{eq_abstract_output_system}, we know that there exits a corresponding transition $([x],\sym{u},[x'])$ in $\underset{q}{\longrightarrow}$.
	Since, $[x'] \in Q(x)$, by the definition of $Q$, we have that $[x'] \in \Post{[x]}{\sym{u}}{\sym{S}}$.
	Consequently,  $Q$ is an FRR from $S_{\tau, X_{\tau}}$ to ${\sym{S}}_{,\sym{X}}$.
	
	Now, we show that $Z$ is an OFRR.
	Again, condition (i) in Definition \ref{def_OFRR} holds since $S_\tau$ represents a control system.
	
	We show that condition (ii) in Definition \ref{def_OFRR} holds.
	Consider any $(x,[x]) \in Q$.
	Since $x \in X_\tau$, there exists one observation $y := H_\tau(x)$.
	Note that $[x] \in \sym{X}$.
	Now, by the definition of $\sym{Y}$ in \eqref{eq_abstract_output_system}, we know there exists $[y] \in \sym{Y}$ such that $[y] = \sym{H}([x])$.
	Finally, by the definition of $Z$, which is based on the equivalence relation $\bar{Z}$, we have that $(y,[y]) \in Z$, and this, consequently, satisfies condition (ii) in Definition \ref{def_OFRR}.
	
	We show that condition (iii) in Definition \ref{def_OFRR} holds.
	Consider any $(y,[y]) \in Z$.
	Note that $y \in H_\tau(X_\tau)$ (i.e., inside the the image of $X_\tau$ using $H_\tau$).
	From the definition of system $S_\tau$ in \eqref{eq_concrete_output_system}, we know that there exits $x \in X_\tau$ such that $x = H^{-1}_{\tau}(y)$.
	Also, we know from condition \eqref{eq_alignment_preservation_condition} and the definition of $\sym{X}$ that there exits $[x] \in \sym{X}$ such that $[x] = H^{-1}_{\tau}([y])$.
	Finally, by the definition of $Q$, which is based on the equivalence relation $\bar{Q}$, we conclude that $(x,[x]) \in Q$, and this, consequently, satisfies condition (iii) in Definition \ref{def_OFRR}.

	Now, recall Proposition \ref{prop_OFRR_exists}, and set $S_1 := S_{\tau}$ and $S_2 := \sym{S}$.
	Hence, we have that $S_\tau \preccurlyeq_Z \sym{S}$.
\end{proof}

\old{
\begin{remark}
	\label{rmk_Z_Q_coincide_in_state_based}
	If the following state-based version of the symbolic model is considered:
	\begin{align}
		\nonumber
		{\sym{S}}_{,\sym{X}} :=(\sym{X}, \sym{X},\sym{U},\underset{q}{\longrightarrow}, \sym{X}, id_{\sym{X}}),
	\end{align}
	the result from Corollary \ref{corr_OFRR_equals_FRR} applies and the OFRR $Z$ coincides with the FRR $Q$.
\end{remark}
}

\subsection{Synthesis and Refinement of Symbolic Controllers}
%---------------------------------------------------------  
\label{ssec_controller_synthesis}
\oldagain{Consider a concrete system $S_{\tau}$ and let $\sym{S}$ be its \emph{symbolic model}, as introduced in }\oldagainnostrike{\eqref{eq_abstract_output_system}}\oldagain{, such that $\sym{S}$ is related to $S_{\tau}$ via some OFRR $Z$.}
Let $\sym{\psi}$ be a given output-based specification on $\sym{S}$\newagain{ as introduced in \eqref{eq_abstract_output_system}}.
$\psi_\tau$ is the corresponding concrete specification that should be enforced on $S_{\tau}$ and it is interpreted as follows:
\begin{align}
	\label{eq_from_concrete_specs_to_abstract_specs}
	%\sym{\psi} := \{ Z(s) \;\vert\; s \in \psi_\tau  \}.
	\psi_\tau := \{ \bar{s} \in \Gamma_{S_\tau} \;\vert\; 
	& \exists s \in \sym{\psi} \; \forall i \in \{0,1,\cdots,\vert s \vert-1\} \; \\ \nonumber 
	& ( s_i = Z(\bar{s}_i)) \}.
\end{align}
Here, $S_{\tau}$ and $\psi_\tau$ represent together a \emph{concrete control problem} $(S_{\tau},\psi_\tau)$, whereas $(\sym{S},\sym{\psi})$ represents an \emph{abstract control problem}.
To algorithmically design controllers solving $(S_{\tau},\psi_\tau)$, we utilize $(\sym{S},\sym{\psi})$ to automatically synthesize a \emph{symbolic controller} $\sym{C}$ that can be refined to solve $(S_{\tau},\psi_\tau)$.
Later in Section\old{s \ref{meth_game_based}, \ref{mthd_concrete_domain_observers} and} \ref{meth_detector}, we propose \old{some methodologies}\new{a methodology} for synthesizing $\sym{C}$, which is then refined with a suitable interface to a controller $C_\tau$ that solves the concrete control problem $(S_\tau,\psi_\tau)$.

Now, we \old{explain the refinement process.
We }show that OFRRs preserve the behavioral inclusion from concrete systems to symbolic models.
\old{More specifically, we show in the next result that the internal (resp., external) behavior of a refined concrete closed-loop $C_\tau \times S_\tau$ is included in the internal (resp., external) behaviors of its corresponding symbolic closed-loop $\sym{C} \times \sym{S}$.}

\begin{theorem}
	\label{thrm_external_behavioral_inclusion}
	Consider systems $S_\tau$ and $\sym{S}$ as introduced in \eqref{eq_concrete_output_system} and \eqref{eq_abstract_output_system}, respectively, where $Z$ is an OFRR and $S_\tau \preccurlyeq_Z \sym{S}$.
	Let $\sym{C}$ be a controller that solves $(\sym{S},\sym{\psi})$.
	Then,
	\new{\begin{enumerate}
		\item[(i)] $(\sym{C} \circ Z)$ is feedback-composable with $S_\tau$;
		\item[(ii)] $Z(B_{int}((\sym{C} \circ Z) \times S_{\tau})) \subseteq B_{int}(\sym{C} \times \sym{S})$; and
		\item[(iii)] $Z(B_{ext}((\sym{C} \circ Z) \times S_{\tau})) \subseteq B_{ext}(\sym{C} \times \sym{S})$;
	\end{enumerate}}
\end{theorem}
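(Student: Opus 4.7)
The plan is to establish (i) directly from properties of $S_\tau$, and then use (i) to prove (ii) and (iii) by a prefix-length induction that lifts concrete runs to symbolic ones through the OFRR $Z$ and its underlying FRR $Q$.

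For (i), I would invoke Remark~\ref{rmk_concrete_system_chars}: $S_\tau$ is total and satisfies $\bar{U}_{S_\tau}(y) = U_\tau$ for every $y \in Y_\tau$. Since $\sym{U} \subseteq U_\tau$, every input emitted by $\sym{C}$ and relayed through $\sym{C} \circ Z$ is admissible at every state of $S_\tau$, so $\Post{x_\tau}{\sym{u}}{S_\tau} \neq \emptyset$ whenever $\sym{u}$ arrives from the interface. The premise $\Post{x_1}{y_2}{S_1}=\emptyset$ in Definition~\ref{def_feedback_composition} is therefore never triggered, making the required implication vacuously true and yielding feedback-composability.

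For (ii) and (iii), I would induct on the length of prefixes. In the base case, for any initial state $x_{\tau,0}$ of $S_\tau$ with output $y_{\tau,0}$, Theorem~\ref{thrm_OFRR_exists_for_all_Q_quotent} combined with OFRR conditions (ii) and (iii) gives a matched initial symbolic state $\sym{x}_0$ and output $\sym{y}_0$ with $(x_{\tau,0}, \sym{x}_0) \in Q$ and $(y_{\tau,0}, \sym{y}_0) \in Z$. For the inductive step, given matched outputs $(y_{\tau,k}, \sym{y}_k) \in Z$, the controller $\sym{C}$ observes the same $\sym{y}_k$ in both closed-loops: in the refined loop the $Z$-interface translates $y_{\tau,k}$ into $\sym{y}_k$, while in the symbolic loop $\sym{C}$ is fed $\sym{y}_k$ directly, so $\sym{C}$ issues the same input $\sym{u}_k \in \bar{U}_{\sym{S}}(\sym{y}_k) \subseteq \bar{U}_{S_\tau}(y_{\tau,k})$ in both cases. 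Applying Proposition~\ref{prop_OFRR_evolution} to $y_{\tau,k+1} \in \overline{\mathsf{Post}}_{\sym{u}_k}^{S_\tau}(y_{\tau,k})$ yields some $\sym{y}_{k+1} \in \overline{\mathsf{Post}}_{\sym{u}_k}^{\sym{S}}(\sym{y}_k)$ with $(y_{\tau,k+1}, \sym{y}_{k+1}) \in Z$. Assembling these matched pairs via Proposition~\ref{prop_FB_runs} produces an external run of $\sym{C} \times \sym{S}$ whose projection is precisely the $Z$-image of the refined run, establishing (iii). Part (ii) then follows by additionally lifting each matched output pair back to a $Q$-related state pair using OFRR condition (iii), which is possible because the same reasoning that yields $\sym{y}_{k+1}$ at the output level is guaranteed by the underlying FRR at the state level.

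The main obstacle, in my view, is the existential nature of Proposition~\ref{prop_OFRR_evolution}: it asserts the existence of a $Z$-related symbolic successor, not uniqueness. One must argue that the witness the proposition provides is genuinely realizable along some symbolic run of $\sym{C} \times \sym{S}$ and is not in conflict with the reactive behavior of $\sym{C}$; since $\sym{C}$'s transitions depend only on $\sym{y}_k$ (which is common to both loops) and produce a deterministic or nondeterministic choice that both loops can match, selecting the witness as the symbolic successor in the reconstructed run resolves this. A secondary subtlety is interpreting $Z$ applied to behaviors as its natural lifting acting on the output-valued components of composed states, an abuse of notation already licensed by the paper's projection convention, and verifying that $\sym{C} \circ Z$ genuinely depends on $y_{\tau,k}$ only through its $Z$-image so that the two closed-loops remain synchronized step by step.
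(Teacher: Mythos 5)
Your treatment of parts (ii) and (iii) follows essentially the same route as the paper: decompose the closed-loop run via Proposition \ref{prop_FB_runs}, apply Proposition \ref{prop_OFRR_evolution} inductively along the run to produce $Z$-matched abstract outputs, and reassemble via Proposition \ref{prop_FB_runs}; your observation that the existential witness of Proposition \ref{prop_OFRR_evolution} must be realized inside an actual run of $\sym{C} \times \sym{S}$ is a genuine subtlety the paper also has to absorb. One caveat: the paper obtains (ii) directly at the state level by invoking the FRR behavioral-inclusion theorem of \cite{G.Reissig_etal_FRR_TAC} for $S_{\tau,X_\tau} \preccurlyeq_Q \sym{S}$, whereas your plan to recover (ii) from the output-level matching by pulling outputs back through condition (iii) of Definition \ref{def_OFRR} goes the wrong way: that condition only supplies \emph{some} $Q$-related state pair compatible with the observed outputs, not the states actually traversed by the run. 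The internal inclusion should be its own induction on $Q$ using condition (ii) of Definition \ref{def_FRR}.

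The genuine gap is in part (i). With $S_1 := \sym{C} \circ Z$ and $S_2 := S_\tau$, Definition \ref{def_feedback_composition} requires $\Post{x_C}{y_\tau}{\sym{C}\circ Z} = \emptyset \implies \Post{x_\tau}{\sym{u}}{S_\tau} = \emptyset$, i.e., the refined controller may refuse the plant's output only if the plant is itself blocked. Your argument establishes that $S_\tau$ is total, hence $\Post{x_\tau}{\sym{u}}{S_\tau} \neq \emptyset$ always; but that is the \emph{consequent} of the required implication, and making the consequent uniformly false does not render the condition vacuous --- it forces the antecedent to be uniformly false, i.e., it obliges you to prove that $\sym{C}\circ Z$ is never blocked by any output $S_\tau$ can emit. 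What you verified vacuously is the converse implication, which would show that $S_\tau$ is feedback-composable with $\sym{C}\circ Z$, not the stated direction. The missing content is precisely where the OFRR matters: strictness of $Z$ (every concrete output $y_\tau$ has an image in $Z(y_\tau)$, which follows from condition (ii) of Definition \ref{def_OFRR} and the construction in Theorem \ref{thrm_OFRR_exists_for_all_Q_quotent}) together with the fact that $\sym{C}$, being feedback-composable with the total system $\sym{S}$, accepts every abstract output. These combine to give $\Post{x_C}{y_\tau}{\sym{C}\circ Z} \neq \emptyset$ for all $y_\tau \in Y_\tau$, which is what actually closes part (i); the paper derives exactly this implication from the feedback-composability of $\sym{C}$ with $\sym{S}$ and condition (i) of Definition \ref{def_OFRR}.
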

\begin{proof}
	\label{proof_thrm_external_behavioral_inclusion}
	
	%[proof of (i)]
	Proof of (i):
	Let system $C_\tau$ be of the form
	\begin{equation}
		\nonumber
		C_\tau := \sym{C} \circ Z := (X_C, X_{C,0}, U_C, \underset{C}{\longrightarrow}, Y_C, H_C),
	\end{equation}
	for some sets $X_C$, $X_{C,0}$, $U_C$, $\underset{C}{\longrightarrow}$, and $Y_C$, and a map $H_C$.
	Now, based on the given assumptions and \new{\cite[Definition III.2]{G.Reissig_etal_FRR_TAC}}, we have that $Y_C \subseteq \sym{U}$ and $Y_\tau \subseteq U_C$.
	Since $\sym{U} \subseteq U_\tau$, we know that $Y_C \subseteq U_\tau$.
	From Definition \ref{def_FRR} and since $\sym{C}$ is feedback-composable with $\sym{S}$, we get 
	\begin{align}
		\nonumber
		&\sym{y} = \sym{H}(\sym{x}) \;\land\; \sym{u} = H_{\sym{C}}(x_{\sym{C}}) \;\land\; \Post{x_{\sym{C}}}{\sym{y}}{\sym{C}} = \emptyset \\
		\nonumber
		&\implies \Post{\sym{x}}{\sym{u}}{\sym{S}} = \emptyset.
	\end{align}
	From condition (i) in Definition \ref{def_OFRR} and considering $Z$ as a serially composed static map with $\sym{C}$, we get
	\begin{align}
	\nonumber
		&y_\tau = H_\tau(x_\tau) \;\land\; \sym{u} = H_{C}(x_{C}) \;\land\; \Post{x_{C}}{y_\tau}{C_\tau} = \emptyset \\
		\nonumber
		&\implies \Post{x_\tau}{\sym{u}}{S_\tau} = \emptyset,
	\end{align}	
	which completes the proof of (i).
	
	%[proof of (ii)]
	Proof of (ii): 
	The results in \cite[Theorem V.4]{G.Reissig_etal_FRR_TAC} are directly applicable here since $S_{\tau, X_\tau}$ and $S_{q, X_q}$ are \old{simple systems}\new{state-based systems} that are related via an FRR.
	This completes the proof of (ii).
	
	%[proof of (iii)]
	To proof (iii), 
	%consider an instance of $C$ feedback composed with $S_\tau$.
	%Also consider an instance of $\sym{C}$ feedback composed with $\sym{S}$.
	%Consider that both instances ($C$ and $\sym{C}$) are initiated with same initial states.
	%Since $C$ is internally the same as $\sym{C}$ but interfaced with $Z$, both controllers will evolve synchronously when they are supplied with same inputs. 
	consider any external run $r_{C_\tau \times S_{\tau}, ext} \in B_{ext}(C_\tau \times S_{\tau})$ defined as:
	\begin{equation}
		\nonumber
		r_{C_\tau \times S_{\tau}, ext} := 
			(\sym{u}^0, y_\tau^0)0
			(\sym{u}^1, y_\tau^1)0 \cdots 
			(\sym{u}^i, y_\tau^i)0 \cdots,
	\end{equation}
	where $i \in \N$.
	According to Proposition \ref{prop_FB_runs}, there exist two external runs:	
	\begin{equation}
		\nonumber
		r_{C_\tau, ext}  := 
			\sym{u}^0 y_\tau^0 
			\sym{u}^1 y_\tau^1 \cdots 
			\sym{u}^i y_\tau^i \cdots, \text{ and }
	\end{equation}		
	\begin{equation}	
		\label{eq_proof_thrm_bi_external_runs}	
		r_{S_{\tau}, ext} := 
			y_\tau^0 \sym{u}^0 
			y_\tau^1 \sym{u}^1 \cdots 
			y_\tau^i \sym{u}^i \cdots, 
	\end{equation}
	where $i \in \N$.
	
	Notice how the output sets $Y_{\tau}$ and $\sym{Y}$ are constructed in \eqref{eq_concrete_output_system} and \eqref{eq_abstract_output_system}, respectively.
	Both of them use map $H_{\tau}$ to project the state set $X_{\tau}$.
	Then, one can easily show that $Z$ is a strict relation.
	\new{Now, using the given relation $Z$ and }\old{Consequently, }for any $y_\tau^i$, $i \in \N$, \new{we know that }there exists a corresponding $\sym{y}^i \in \sym{Y}$ such that $(y_\tau^i,\sym{y}^i) \in Z$.	\new{This allows us to apply $Z$ on the concrete output elements of each of the runs in \ref{eq_proof_thrm_bi_external_runs}.}
	%We know from the definitions of the system $S_\tau$ in (\ref{eq_concrete_output_system}) and the system $\sym{S}$ in (\ref{eq_abstract_output_system}) that for all $(x_\tau^i, \sym{x}^i) \in Q$, we have $y_\tau^i = H_{\tau}(x_\tau^i) = \sym{H}(\sym{x}^i) = \sym{y}^i$, for all $i \in \N$.	
	
	Now, by applying Proposition \ref{prop_OFRR_evolution} inductively to \eqref{eq_proof_thrm_bi_external_runs} starting with $(y_\tau^0, \sym{y}^0) \in Z$, we conclude that the following external run $r_{\sym{S}, ext} \in B_{ext}(\sym{S})$ exits:
	\begin{equation}		
		\nonumber
		r_{\sym{S}, ext} := 
			\sym{y}^0 \sym{u}^0 
			\sym{y}^1 \sym{u}^1 \cdots 
			\sym{y}^i \sym{u}^i \cdots.
	\end{equation}
	
	Also, since map $Z$ is strict, and it interfaces the input to $\sym{C}$, one can assume that run $r_{C_\tau, ext} \in B_{ext}(C_\tau)$ is synchronized with an run $r_{\sym{C}, ext} \in B_{ext}(\sym{C})$ given by:
	\begin{equation}
		\nonumber
		r_{\sym{C}, ext}  := 
		\sym{u}^0 \sym{y}^0 
		\sym{u}^1 \sym{y}^1 \cdots 
		\sym{u}^i \sym{y}^i \cdots.
	\end{equation}	
	
	Again, according to Proposition \ref{prop_FB_runs}, the two runs $r_{\sym{C}, ext}$ and $r_{\sym{S}, ext}$ imply the existence of the external run of the feedback-composed system $\sym{C} \times \sym{S}$:	
	\begin{equation}
		\nonumber
		r_{\sym{C} \times \sym{S}, ext} := 
			(\sym{u}^0, \sym{y}^0)0
			(\sym{u}^1, \sym{y}^1)0 \cdots 
			(\sym{u}^i, \sym{y}^i)0 \cdots,
	\end{equation}
	where $i \in \N$, which proves that $r_{(\sym{C} \circ Z) \times S_{\tau}, ext} \in B_{ext}(\sym{C} \times \sym{S})$, and completes the proof of (iii).
	
	%We can prove the same way that considering any external run $r_{\hat{C} \times \hat{S}, ext} \in B_{ext}(\hat{C} \times \hat{S})$, we will also have that $r_{\hat{C} \times \hat{S}, ext} \in B_{ext}(\hat{C} \times S_\tau)$ which in turn completes the proof of (iii).
\end{proof}

The following corollary shows that internal behavioral inclusion from a concrete closed-loop to a symbolic closed-loop implies an external behavioral inclusion.
\begin{corollary}
	\label{corr_internal_behavior_implies_extrnal_behavior}
	Let $S_\tau$ and $\sym{S}$ be as introduced in \eqref{eq_concrete_output_system} and \eqref{eq_abstract_output_system}, respectively, where $Z$ is an OFRR and $S_\tau \preccurlyeq_Z \sym{S}$.
	Then,
	\begin{align}
		\nonumber
		B_{int}((\sym{C} &\circ Z) \times S_{\tau}) \subseteq B_{int}(\sym{C} \times \sym{S}) \implies \\ \nonumber
		& B_{ext}((\sym{C} \circ Z) \times S_{\tau}) \subseteq B_{ext}(\sym{C} \times \sym{S}).
	\end{align}
\end{corollary}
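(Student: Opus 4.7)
My plan is to derive the external inclusion directly from the assumed internal inclusion by projecting both sides through their respective closed-loop output maps. The key observation is built into the definition of external runs in Subsection \ref{systems}: every external run arises from some internal run by replacing each state with its output via $H$. Consequently, the map $B_{int}(\cdot) \mapsto B_{ext}(\cdot)$ is monotone, i.e., external behaviors are just the image of internal behaviors under the (state-wise applied) closed-loop output map, and set inclusion is preserved by taking images.

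Concretely, I would fix an arbitrary external prefix $r_{ext} \in B_{ext}((\sym{C} \circ Z) \times S_\tau)$ and produce a witnessing internal prefix $r_{int} \in B_{int}((\sym{C} \circ Z) \times S_\tau)$ whose componentwise image under the concrete closed-loop output map $(H_C, H_\tau)$ is $r_{ext}$. By the hypothesis (with state components in $X_\tau$ identified with their equivalence classes in $\sym{X} = X_\tau/\bar{Q}$ as dictated by the underlying FRR $Q$ from Theorem \ref{thrm_OFRR_exists_for_all_Q_quotent}), the corresponding internal prefix lies in $B_{int}(\sym{C} \times \sym{S})$. Then I would apply the symbolic closed-loop output map $(H_{\sym{C}}, \sym{H})$ to produce an external prefix in $B_{ext}(\sym{C} \times \sym{S})$.

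It remains to verify that this symbolic external prefix matches $r_{ext}$ under the identification provided by $Z$. This is where Theorem \ref{thrm_OFRR_exists_for_all_Q_quotent} is essential: since $Z = \{(y,[y]) \mid y \in H_\tau(X_\tau)\}$, applying $\sym{H}$ to the equivalence class $[x]$ yields exactly $[H_\tau(x)] = Z(H_\tau(x))$. Thus the diagram $H_\tau$ followed by the quotient $[\cdot]_{\bar{Z}}$ commutes with the state quotient $[\cdot]_{\bar{Q}}$ followed by $\sym{H}$, so the symbolic external prefix obtained above is precisely the $Z$-image of $r_{ext}$, giving the required inclusion.

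The main obstacle I expect is the notational subtlety that the internal state sets of the two closed loops literally differ ($X_C \times X_\tau$ versus $X_{\sym{C}} \times \sym{X}$), so the inclusion in the hypothesis must be read modulo the projection onto the quotient, and the conclusion must be read modulo $Z$ at the output level. Making this identification explicit and verifying that it commutes with the closed-loop output maps is the careful bookkeeping step; beyond it, the argument is merely ``apply $H$ to both sides of an inclusion.''
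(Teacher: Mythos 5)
Your proposal is correct and follows essentially the same route as the paper: the paper's own proof is a one-line remark that the corollary follows from the argument of Theorem~\ref{thrm_external_behavioral_inclusion}(iii) ``by mapping the internal sequences to external sequences,'' which is exactly your strategy of pushing the assumed internal inclusion through the closed-loop output maps and checking that $\sym{H}([x]) = Z(H_\tau(x))$ commutes with the quotient structure. Your treatment is in fact more explicit than the paper's about the identifications needed (reading the hypothesis modulo $Q$ and the conclusion modulo $Z$), which is a point the paper glosses over.
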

\begin{proof}
	\label{proof_internal_behavior_implies_extrnal_behavior}
	The proof is similar to that of part (iii) in Theorem \ref{thrm_external_behavioral_inclusion} by mapping the internal sequences to external sequences.
\end{proof}

\begin{remark}
	\label{rmk_direct_refinement}
	Given two systems $S_\tau$ and $\sym{S}$ such that $S_\tau \preccurlyeq_Z \sym{S}$, for some OFRR $Z$, a controller $\sym{C}$ that solves the abstract control problem $(\sym{S}, \sym{\psi})$ can be refined to solve the concrete control problem $(S_\tau, \psi_\tau)$ using $Z$ as a static map.
\end{remark}

\newagain{
\begin{remark}
	\label{rmk_general_framework}
	%The results from this section provide a general framework for any methodology that can synthesize an output-based controller (cf. Definition \ref{def_controller}) w.r.t. an output-based specification (cf. Definition \ref{def_specification}) over the abstract output space of an output-based symbolic model (following the definition in \eqref{eq_abstract_output_system}).
	% This section provides a general framework for output-feedback symbolic control. Theorem \ref{thrm_external_behavioral_inclusion} and Corollary \ref{corr_internal_behavior_implies_extrnal_behavior} demonstrate that any methodology which synthesizes a controller for the outputs of $\sym{S}$ w.r.t. an output-based specification, can use the refined controller to enforce the the specifications on the outputs of $S_{\tau}$.
    Theorem \ref{thrm_external_behavioral_inclusion} and Corollary \ref{corr_internal_behavior_implies_extrnal_behavior} provide general results for output-feedback symbolic control. 
    They can be applied to any methodology that can synthesize controllers (cf. Definition \ref{def_controller}) for the outputs of symbolic models (cf. the definition in \eqref{eq_abstract_output_system}) to enforce output-based specifications (cf. Definition \ref{def_specification}).
\end{remark}

The next three sections provide example methodologies that realize the introduced framework.
}

%\begin{remark}
%	\label{rmk_behavioral_incl_simple_systems}
%	When we deal with simple systems $S_{\tau,X} \preccurlyeq_Q {\sym{S}}_{,\sym{X}}$ the following should be noticed:
	%\begin{itemize}
		%\item the controller $\sym{C}$ handles the state of the system rather than its output,
		%\item since the concrete system $S_{\tau,X}$ and the symbolic system $\hat{S}_{\hat{X}}$ has two different state spaces, the map $Q$ need to be introduced when refining the controller $\hat{C}$ (designed at symbolic domain) to work with the concrete system (i.e. the feedback composition would be $\hat{C} \times (Q \circ S_{\tau,X})$), and
		%\item Only first two implications (i) and (ii) hold for Theorem \ref{thrm_external_behavioral_inclusion} when the feedback composition is considered as $\hat{C} \times (Q \circ S_{\tau,X})$.
	%\end{itemize}
%\end{remark}

\begin{figure}
	\centering
	\includegraphics[width=0.5\textwidth]{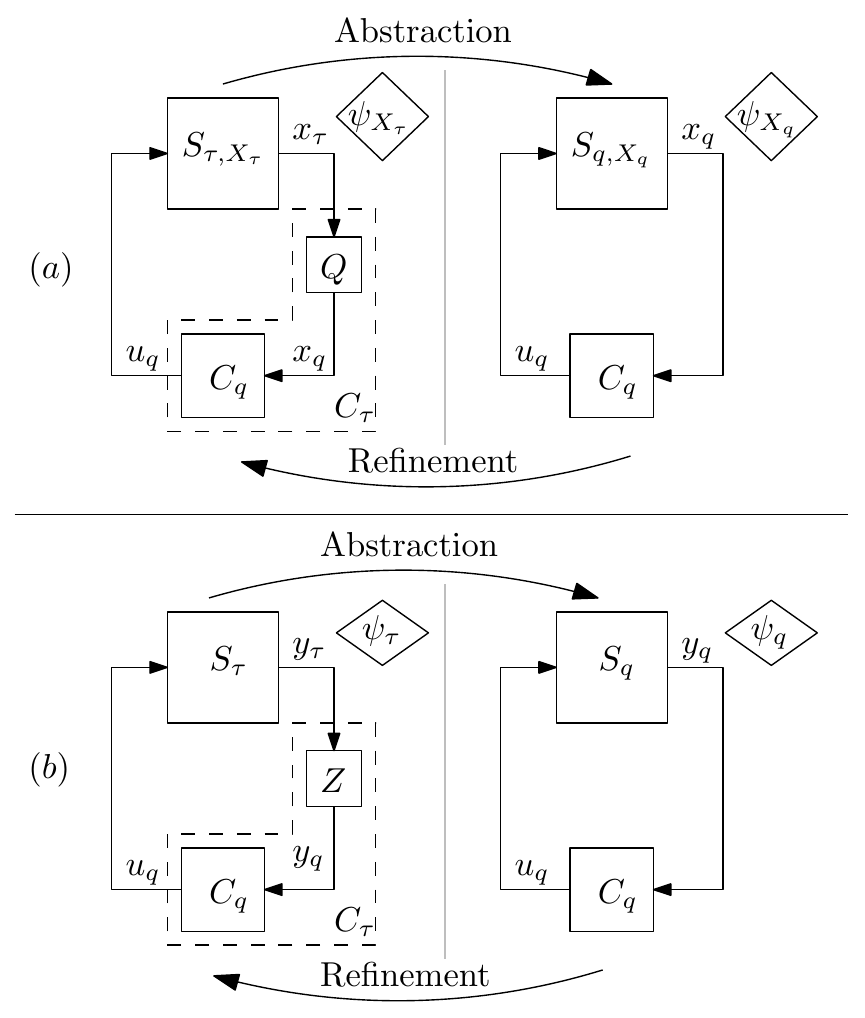}
	\caption{
		Synthesis and refinement of (a) state-based and (b) output-based systems.
		Systems are represented by rectangles, specifications by diamonds and interconnections by arrows.
	}	
	\label{fig_synthesis_refinement_map}
\end{figure}

Figure \ref{fig_synthesis_refinement_map} provides an illustration for the synthesis and refinement of symbolic controllers for state-based and output-based systems.
For state-based systems, the refined controller is the symbolic controller serially composed after the map $Q$, i.e. $C_\tau = Q \circ \sym{C}$ \cite{G.Reissig_etal_FRR_TAC}.
For output-based systems, the refined controller is the symbolic controller serially composed after the map $Z$, i.e. $C_\tau = Z \circ \sym{C}$.

The presented results serve as a generalized framework that formulates the synthesis and refinement of symbolic controllers for output-based systems.
What remains is to provide specific implementations that show how symbolic controllers are synthesized and refined.
In the following sections, we present three different methodologies to serve this purpose.

%%%%%%%%%%%%%%%%%%%%%%%%%%%%%%%%%%%%%%%%%%%%%%%%%%%%%%%%%%%%%%%%%%%%%%%%%%%%%%%%%%%%%%%%%%%%%%%%%%%
\section{Methodology 1: Games of Imperfect Information}
%%%%%%%%%%%%%%%%%%%%%%%%%%%%%%%%%%%%%%%%%%%%%%%%%%%%%%%%%%%%%%%%%%%%%%%%%%%%%%%%%%%%%%%%%%%%%%%%%%%
\label{meth_game_based}
\newcommand{\PlayerOne}{{\tt Player1}}
\newcommand{\PlayerTwo}{{\tt Player2}}

\begin{figure}
	\centering
	\includegraphics[width=0.6\textwidth]{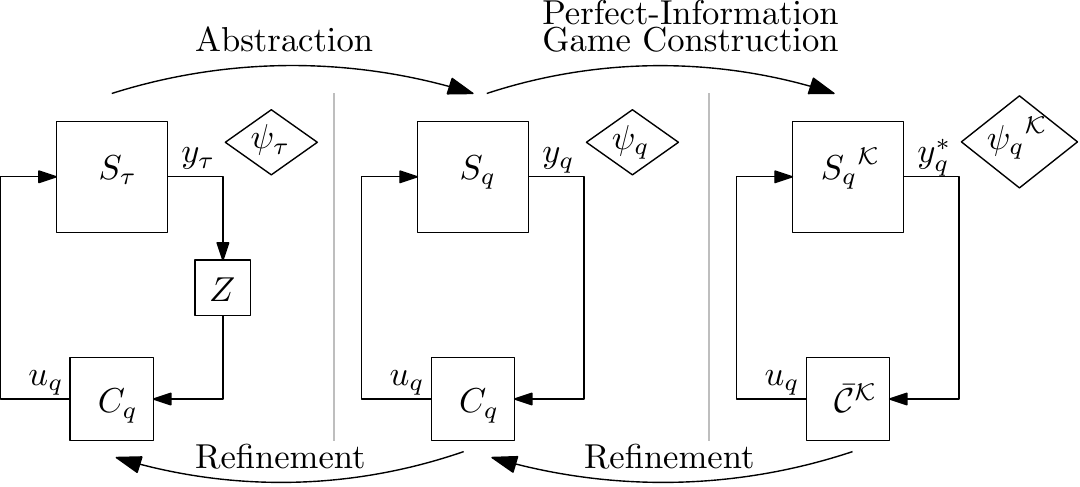}
	\caption{
		Output-based symbolic control using two-player games with imperfect information.
	}	
	\label{fig_games_method}
\end{figure}

Two-player games on graphs arise in many computer science problems \cite{ROTH1978_twoperson_games}.
We utilize the results in \cite{REIF1984274,Chatterjee2006_omegareg_games,Berwanger2008} and construct perfect-information (a.k.a. knowledge-based) games from output-based symbolic models.
Then, we solve the abstract control problem (or \emph{the game}) as presented in \cite{Berwanger2008}.
We then refine the synthesized controller in two steps: 
1) the symbolic controller synthesized for the game structure is refined to work with the symbolic model, and 
2) Theorem \ref{thrm_external_behavioral_inclusion} is used to refine the controller once again for the concrete system.
Figure \ref{fig_games_method} provides a high-level overview of this methodology.

\subsection{Output-based Symbolic Control using Two-player Games}
%---------------------------------------------------------	
We assume having a symbolic model $\sym{S}$, as defined in \eqref{eq_abstract_output_system}, related via an OFRR $Z$ to a sampled output-based system $S_{\tau}$, as defined in \eqref{eq_concrete_output_system}.
The following assumptions are required \cite{Chatterjee2006_omegareg_games}:
\begin{enumerate}
	\item the abstract system $\sym{S}$ is total; and
	\item the set $\{ \sym{H}^{-1}(\sym{y}) \vert \sym{y} \in \sym{Y} \}$ partitions $\sym{X}$.
\end{enumerate}

The first assumption is not restrictive since all inputs are admissible to all states in control systems (see Remark \ref{rmk_concrete_system_chars}).
The second assumption is already satisfied as we consider quotient systems, based on the definition of the symbolic model in \eqref{eq_abstract_output_system} and the result from Theorem \ref{thrm_OFRR_exists_for_all_Q_quotent}.

The symbolic model $\sym{S}$ is seen as a \emph{game structure} of two players played in rounds.
The symbolic controller $C_q$ is named \PlayerOne{} and, at each game round, it selects an input $\sym{u} \in \sym{U}$ for the game structure $\sym{S}$.
A hypothetical player \PlayerTwo{}, or simply the symbolic model itself, responds by resolving the nondeterminism and selects a successor $\sym{x}'$ for the state $\sym{x}$ using the supplied input $\sym{u}$ such that $(\sym{x},\sym{u},\sym{x}') \in \underset{q}{\longrightarrow}$.

%We say that $\sym{S}$ is a game structure of perfect information if $\sym{Y} = \sym{X}$ and $\sym{H} = id_{\sym{X}}$.
$\sym{S}$ is considered as a game structure of imperfect information since \PlayerOne{} has no access to the states of the game.
%The output-based symbolic model $\sym{S}$ is then a game structure of imperfect information.
During the game play, only observations of the game structure are available to \PlayerOne{}.
Given an internal run (a.k.a. a \emph{play}) $r_{\sym{S}, int}$, we construct a corresponding external run $\sym{obs}$ as the unique sequence of observations:
\begin{align}
	\nonumber
	\sym{obs} :&= r_{\sym{S}, ext} \\ \nonumber
			   &= \sym{H}(\pi_{\sym{X}}(r_{\sym{S}, int})) = y_{q,0}y_{q,1} \cdots y_{q,n-1}y_{q,n} \cdots.
\end{align}

The  \emph{knowledge} associated with the prefix $\sym{obs}(n) := y_{q,0} y_{q,1} \cdots y_{q, n-1}y_{q, n}$ is given by the set:
\begin{align}
	\nonumber
	\mathcal{K}&(\sym{obs}(n)) := \{\Last{r_{\sym{S}, int}(n)} \;\vert\; \\ \nonumber
	& r_{\sym{S}, int}(n) \in \mathsf{PREFS}_{int}(\sym{S}) \;\land\; H(r_{\sym{S}, int}(n)) =  \sym{obs}(n)\},
\end{align}
which represents the set of possible underlying states expected at the end of the monitored observation sequence.
Having an initial knowledge $s_0 := X_{q,0} \subseteq \sym{X}$, the knowledge $s_i := \mathcal{K}(\sym{obs}(i))$, at any step $i \in \N$, $i \ge 1$, can be constructed iteratively \cite[Lemma 2.1]{Chatterjee2006_omegareg_games} using the received observation and the input \cite{Berwanger2008}:
\begin{equation}
\nonumber
s_i := \Post{s_{i-1}}{u_{q,i}}{\sym{S}} \cap \sym{H}^{-1}(\Last{\sym{obs}(i)}),
\end{equation}
where $u_{q,i}$ is the input at time step $i$.

\begin{remark}
	Since \PlayerOne{} generates the inputs, it can construct the knowledge at every step by having $s_0$ and monitoring the observations of the game structure.
\end{remark}

\subsection{Controller Synthesis and Refinement}
%---------------------------------------------------------	
Consider a concrete game $(S_\tau, \psi_\tau)$ and its corresponding abstract game $(\sym{S}, \sym{\psi})$, where $S_\tau \preccurlyeq_Z \sym{S}$ and the specification $\sym{\psi}$ is constructed from $\psi_\tau$ using $Z$ as a static map as introduced in \eqref{eq_from_concrete_specs_to_abstract_specs}.
The first goal is to synthesize a controller $\sym{C}$ that solves $(\sym{C} \times \sym{S})$.

A \emph{strategy} for \PlayerOne{} is a map $\mathcal{C}: \sym{Y}^* \to \sym{U}$ that accepts a sequence of observations and produces a control input.
$\mathcal{C}$ is said to be memoryless strategy (a.k.a. a static controller) if $\mathcal{C}(r\cdot\sym{y}) = \mathcal{C}(r'\cdot\sym{y})$ for all $r,r' \in \sym{Y}^*$.
A memoryless strategy $\mathcal{C}$ induces another strategy $\bar{\mathcal{C}}:\sym{Y}\to\sym{U}$ that works with the last element of the observation for which $\mathcal{C}(r) = \bar{\mathcal{C}}(\Last{r})$ for all $r \in \sym{Y}^*$.

Having a strategy $\mathcal{C}$, we denote by $\mathsf{Outcome}_{\sym{S}}(\mathcal{C})$ the set of all possible state sequences resulting from closing the loop between $\sym{S}$ and $\mathcal{C}$, and we define it as follows:
\begin{align}
	\nonumber
	&\mathsf{Outcome}_{\sym{S}}(\mathcal{C}) := \{ x_{q,0} x_{q,1} \cdots \;\vert\; x_{q,0} \in X_{q,0} \;\land\; \\ 
	\nonumber
	& (\forall i \geq 0, (x_{q,i}, u_{q, i}, x_{q,i+1}) \in \underset{q}{\longrightarrow} \;\land\; u_{q,i} = \mathcal{C}(\sym{obs}(i))) \}.
\end{align}
We say that game $(\sym{S}, \sym{\psi})$ is solvable when there exists a strategy $\mathcal{C}$ such that for all $r \in \mathsf{Outcome}_{\sym{S}}(\mathcal{C})$, we have $\sym{H}(r) \in \sym{\psi}$. 
The strategy is then called a winning strategy.

To check the existence of a winning strategy, we construct another game of perfect information \cite{REIF1984274,Berwanger2008}.
The knowledge-based perfect-information game structure is a system:
\begin{equation}
	\nonumber
	\sym{S}^\mathcal{K} := (X_\mathcal{K}, s_0, \sym{U}, \underset{\mathcal{K}}{\longrightarrow}, X_\mathcal{K}, id_{X_\mathcal{K}}),
\end{equation}
where $X_\mathcal{K} := 2^{\sym{X}}\backslash \emptyset$, and $(s_1, \sym{u}, s_2) \in \underset{\mathcal{K}}{\longrightarrow}$ iff there exists an observation $\sym{y} \in \sym{Y}$ such that:
\begin{equation}
	\label{eq_knowldge_construction}
	s_2 := \Post{s_1}{\sym{u}}{\sym{S}^\mathcal{K}} \cap \sym{H}^{-1}(\sym{y}).
\end{equation}

\begin{proposition}
	\label{prop_game_solving}
	\PlayerOne{} has a winning strategy in the game $\sym{S}$ starting at the initial set $X_{q,0}$ iff \PlayerOne{} has a winning strategy in $\sym{S}^{\mathcal{K}}$ starting at $X_{q,0}$.
\end{proposition}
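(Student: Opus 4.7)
The plan is to prove both directions of the equivalence by establishing a tight correspondence between plays in the imperfect-information game $\sym{S}$ and plays in the knowledge-based perfect-information game $\sym{S}^{\mathcal{K}}$, leveraging the iterative knowledge-update rule already recorded in \eqref{eq_knowldge_construction} and the surrounding text. The key observation is that, because $\{\sym{H}^{-1}(\sym{y}) \mid \sym{y} \in \sym{Y}\}$ partitions $\sym{X}$ (by the standing assumption), every nonempty knowledge set $s \in X_{\mathcal{K}}$ lies inside a unique $\sym{H}^{-1}(\sym{y})$, so a well-defined observation $\sym{H}_{\mathcal{K}}(s) := \sym{y}$ is associated with each knowledge state. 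This lets us lift $\sym{\psi}$ from $\sym{Y}$-sequences to $X_{\mathcal{K}}$-sequences and talk about winning in $\sym{S}^{\mathcal{K}}$ in the obvious way.

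For the forward direction, suppose \PlayerOne{} has a winning strategy $\mathcal{C}: \sym{Y}^* \to \sym{U}$ in $\sym{S}$. I would define a strategy $\mathcal{C}^{\mathcal{K}}$ on $\sym{S}^{\mathcal{K}}$ as follows: given a play prefix $s_0 s_1 \cdots s_n$ in $\sym{S}^{\mathcal{K}}$, extract the associated observation sequence $\sym{H}_{\mathcal{K}}(s_0)\cdots\sym{H}_{\mathcal{K}}(s_n)$ and set $\mathcal{C}^{\mathcal{K}}(s_0 \cdots s_n) := \mathcal{C}(\sym{H}_{\mathcal{K}}(s_0)\cdots\sym{H}_{\mathcal{K}}(s_n))$. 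A straightforward induction on the length of a play of $\sym{S}^{\mathcal{K}}$ under $\mathcal{C}^{\mathcal{K}}$, using the update rule \eqref{eq_knowldge_construction}, shows that every internal run $s_0 u_0 s_1 u_1 \cdots$ of $\sym{S}^{\mathcal{K}}\times\mathcal{C}^{\mathcal{K}}$ projects to an observation sequence $\sym{obs}$ for which there exists an internal run $x_{q,0} u_0 x_{q,1} \cdots$ of $\sym{S}$ with $x_{q,i} \in s_i$ for every $i$ and with $\sym{H}(x_{q,i}) = \sym{H}_{\mathcal{K}}(s_i)$; in particular this observation sequence is produced by a play of $\sym{S}\times\mathcal{C}$ and hence belongs to $\sym{\psi}$.

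For the converse direction, suppose \PlayerOne{} has a winning strategy $\mathcal{C}^{\mathcal{K}}$ in $\sym{S}^{\mathcal{K}}$. I would define a strategy on $\sym{S}$ by first rebuilding knowledge from observations: given an observation prefix $\sym{obs}(n) = y_{q,0}\cdots y_{q,n}$, let $s_0 := X_{q,0}$ and, recursively, $s_{i+1} := \Post{s_i}{u_{q,i+1}}{\sym{S}}\cap \sym{H}^{-1}(y_{q,i+1})$ where $u_{q,i+1} := \mathcal{C}^{\mathcal{K}}(s_0 \cdots s_i)$; then put $\mathcal{C}(\sym{obs}(n)) := \mathcal{C}^{\mathcal{K}}(s_0 s_1 \cdots s_n)$. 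By construction, $s_0 s_1 \cdots$ is a play of $\sym{S}^{\mathcal{K}}$ consistent with $\mathcal{C}^{\mathcal{K}}$, and by the known property that $\mathcal{K}(\sym{obs}(i))$ is computed exactly by the iterative rule, every internal run of $\sym{S}\times\mathcal{C}$ producing $\sym{obs}$ has its $i$-th state in $s_i$; the observation sequence $\sym{H}_{\mathcal{K}}(s_0)\sym{H}_{\mathcal{K}}(s_1)\cdots$ therefore coincides with $\sym{obs}$, which lies in $\sym{\psi}$ because $\mathcal{C}^{\mathcal{K}}$ is winning.

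The main conceptual obstacle is to make the correspondence between nondeterministic choices in $\sym{S}$ and the single successor picked in $\sym{S}^{\mathcal{K}}$ precise: one round of $\sym{S}^{\mathcal{K}}$ collapses all of \PlayerTwo{}'s moves that are observationally indistinguishable into one transition, so one has to show both that \emph{every} outcome of $\sym{S}$ under the lifted strategy corresponds to \emph{some} outcome of $\sym{S}^{\mathcal{K}}$ under $\mathcal{C}^{\mathcal{K}}$ (covering all nondeterminism), and conversely that \emph{every} $\sym{S}^{\mathcal{K}}$-play is actually realized by some genuine $\sym{S}$-play. Both reduce to an induction using the identity $s_{i+1} = \Post{s_i}{u}{\sym{S}}\cap \sym{H}^{-1}(y)$ together with nonemptiness of the knowledge sets produced along any play (guaranteed by $X_{\mathcal{K}} = 2^{\sym{X}}\backslash\emptyset$ and totality of $\sym{S}$). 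Once this correspondence is in place, both implications are immediate; the result is essentially an instance of the classical subset-construction reduction of \cite{REIF1984274,Chatterjee2006_omegareg_games,Berwanger2008} instantiated for the present symbolic-control setting.
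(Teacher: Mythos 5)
Your proposal is correct and follows the same route as the paper, which simply defers to the classical knowledge-based subset-construction equivalence of \cite{REIF1984274} (Proposition 2.1) and \cite{Chatterjee2006_omegareg_games} (Proposition 2.4); you have essentially unpacked that cited argument for this setting. The only points worth tightening are that the observation map $\sym{H}_{\mathcal{K}}$ is well defined only on knowledge sets reachable after the first observation update (not on arbitrary elements of $2^{\sym{X}}\backslash\emptyset$), and that extracting an infinite realizing run of $\sym{S}$ from the finite-prefix realizability of the knowledge sets uses finiteness of $\sym{X}$ (K\"onig's lemma), both of which are standard.
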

\begin{proof}
	The proof is similar to that in \cite[Proposition 2.1]{REIF1984274} and \cite[Proposition 2.4]{Chatterjee2006_omegareg_games}.
\end{proof}

In \cite[Algorithm 1]{Berwanger2008}, the game of imperfect information is solved using an antichain-based technique.
The technique is implemented in a tool named \ALPAGA{} \cite{ALPAGA}.
Using the tool, one can possibly synthesize a winning memoryless strategy $\bar{\mathcal{C}}^\mathcal{K}$ for the game $(\sym{S}^\mathcal{K}, \sym{\psi}^\mathcal{K})$, where $\sym{\psi}^\mathcal{K}$ is an extended version of $\sym{\psi}$ constructed by the same tool.
The memoryless strategy is refined to work with $\sym{S}$ by embedding it inside the symbolic controller $\sym{C}$:
\begin{equation}
\label{eq_controller_game}
\sym{C} := (X_{\sym{C}}, X_{\sym{C},0}, U_{\sym{C}}, \underset{\sym{C}}{\longrightarrow}, Y_{\sym{C}}, H_{\sym{C}}),
\end{equation}
where 
\begin{itemize}
	\item $X_{\sym{C}} := \sym{U} \times 2^{\sym{X}}$;
	\item $X_{\sym{C},0} := \{u_{q,0}\} \times \sym{X}$, where $u_{q,0} \in \sym{U}$;
	\item $U_{\sym{C}} := \sym{Y}$;
	\item $\underset{\sym{C}}{\longrightarrow} := \{((\sym{u}, x_{\sym{C}}),\sym{y},(\sym{u}', x_{\sym{C}}')) \vert x_{\sym{C}}' = \Post{x_{\sym{C}}}{\sym{u}}{\sym{S}} \cap \sym{H}^{-1}(\sym{y}) \;\land\; \sym{u}'=\bar{\mathcal{C}}^\mathcal{K}(x_{\sym{C}}') \}$;
	\item $Y_{\sym{C}} := \sym{U}$; and
	\item $H_{\sym{C}} := \pi_{\sym{U}}$.
\end{itemize}

\begin{remark}
	\label{rmk_game_controller_not_static}
	The strategy $\bar{\mathcal{C}}^\mathcal{K}$ synthesized via the knowledge-based game is static. 
	The refined game controller $\sym{C}$ contains the symbolic model $\sym{S}$ as a building block inside it, in order to compute the knowledge and, hence, it is not static anymore.
\end{remark}

The following theorem shows how the controller is refined and concludes this section.
%Having a system and its symbolic model related by an OFRR, one can synthesize a symbolic controller using the presented two-player game approach.
%The symbolic controller enforces the specification over the symbolic model.
%The refined symbolic controller enforces the specification over the concrete system.
\begin{theorem}
	Let $(S_\tau, \psi_\tau)$ be a concrete game and $(\sym{S}, \sym{\psi})$ be an abstract game, where $S_\tau \preccurlyeq_Z \sym{S}$ and $\sym{\psi}$ is a specification constructed from $\psi_\tau$ using $Z$ as a static map.
	If a controller $\sym{C}$, as defined in \eqref{eq_controller_game}, solves the game $(\sym{S}, \sym{\psi})$ then $(\sym{C} \circ Z)$ solves the game $(S_\tau, \psi_\tau)$.
\end{theorem}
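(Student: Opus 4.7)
The plan is to reduce the claim directly to Theorem~\ref{thrm_external_behavioral_inclusion}(iii) combined with the definition of $\psi_\tau$ in~\eqref{eq_from_concrete_specs_to_abstract_specs}. First, note that part (i) of Theorem~\ref{thrm_external_behavioral_inclusion} already establishes that $\sym{C}\circ Z$ is feedback-composable with $S_\tau$, so the composition $(\sym{C}\circ Z)\times S_\tau$ is well defined, which is the structural requirement imposed by Definition~\ref{def_controller}. All that remains is to verify the satisfaction condition $(\sym{C}\circ Z)\times S_\tau\models\psi_\tau$.

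Next, I would unfold the hypothesis via Definition~\ref{def_specification}: $\sym{C}$ solves $(\sym{S},\sym{\psi})$ means $\pi_{\sym{Y}}(B_{ext}(\sym{C}\times\sym{S}))\subseteq\sym{\psi}$. The goal, again by Definition~\ref{def_specification}, is to show $\pi_{Y_\tau}(B_{ext}((\sym{C}\circ Z)\times S_\tau))\subseteq\psi_\tau$. Pick any concrete output sequence $\bar{s}$ in the left-hand set; it is the output projection of some external run $\bar{r}\in B_{ext}((\sym{C}\circ Z)\times S_\tau)$. By Theorem~\ref{thrm_external_behavioral_inclusion}(iii), applying $Z$ elementwise to $\bar{r}$ yields a run $Z(\bar{r})\in B_{ext}(\sym{C}\times\sym{S})$, and projecting onto the output coordinate gives an abstract output sequence $s:=\pi_{\sym{Y}}(Z(\bar{r}))\in\sym{\psi}$ with $s_i=Z(\bar{s}_i)$ for every index $i$. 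Substituting this pair $(\bar{s},s)$ into the defining condition~\eqref{eq_from_concrete_specs_to_abstract_specs} of $\psi_\tau$ shows $\bar{s}\in\psi_\tau$. Since $\bar{s}$ was arbitrary, $(\sym{C}\circ Z)\times S_\tau\models\psi_\tau$, which is exactly what it means for $\sym{C}\circ Z$ to solve $(S_\tau,\psi_\tau)$.

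The main obstacle is not conceptually deep but purely notational: one must carefully justify the step ``apply $Z$ elementwise to the run $\bar{r}$ and obtain an abstract run whose output coordinate equals $s_i=Z(\bar{s}_i)$.'' Because $Z$ is only a relation, at each time step $i$ the witness $\sym{y}^i\in Z(y_\tau^i)$ has to be chosen consistently so that the resulting sequence is actually an external run of $\sym{C}\times\sym{S}$; this is precisely what the proof of Theorem~\ref{thrm_external_behavioral_inclusion}(iii), through an inductive application of Proposition~\ref{prop_OFRR_evolution} and Proposition~\ref{prop_FB_runs}, delivers. Once this bookkeeping is acknowledged, the theorem follows as a one-line corollary of Theorem~\ref{thrm_external_behavioral_inclusion} and the definition of $\psi_\tau$.
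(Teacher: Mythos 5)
Your proposal is correct and follows essentially the same route as the paper, whose proof is simply the one-line observation that the result ``follows directly from Proposition~\ref{prop_game_solving} and Theorem~\ref{thrm_external_behavioral_inclusion}''; you merely spell out the bookkeeping of combining Theorem~\ref{thrm_external_behavioral_inclusion}(iii) with the definition of $\psi_\tau$ in~\eqref{eq_from_concrete_specs_to_abstract_specs}. The only difference is that the paper additionally cites Proposition~\ref{prop_game_solving}, which is needed to justify that the construction in~\eqref{eq_controller_game} (embedding the knowledge-based strategy $\bar{\mathcal{C}}^{\mathcal{K}}$) actually yields a controller solving $(\sym{S},\sym{\psi})$ --- a fact the theorem statement already takes as a hypothesis, so your omission is harmless for the statement as literally phrased.
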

\begin{proof}
	The proof follows directly from Proposition \ref{prop_game_solving} and Theorem \ref{thrm_external_behavioral_inclusion}.
\end{proof}

%%%%%%%%%%%%%%%%%%%%%%%%%%%%%%%%%%%%%%%%%%%%%%%%%%%%%%%%%%%%%%%%%%%%%%%%%%%%%%%%%%%%%%%%%%%%%%%%%%%
\section{Methodology 2: Observers for Concrete Systems}
%%%%%%%%%%%%%%%%%%%%%%%%%%%%%%%%%%%%%%%%%%%%%%%%%%%%%%%%%%%%%%%%%%%%%%%%%%%%%%%%%%%%%%%%%%%%%%%%%%%
\label{mthd_concrete_domain_observers}
\begin{figure}	
	\centering
	\includegraphics[width=0.6\textwidth]{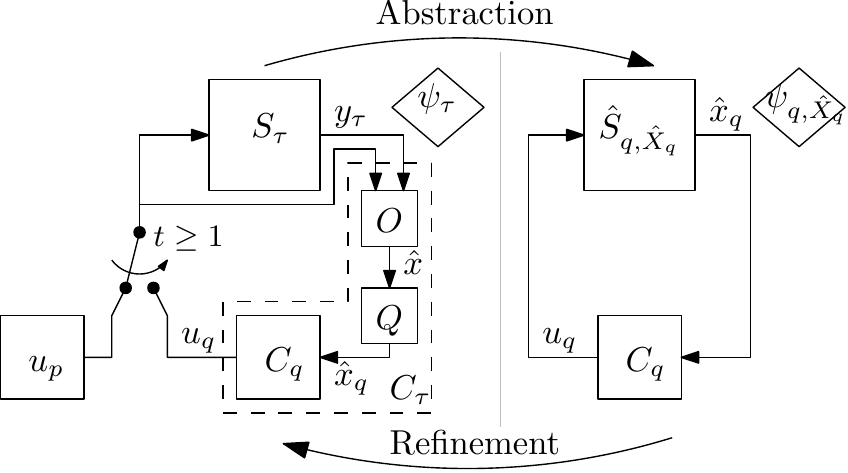}
	\caption{Symbolic control of output-based systems using an observer for the concrete system.}	
	\label{fig_observer_method}
\end{figure}

%\begin{figure}	
%	\centering
%	\includegraphics[width=0.25\textwidth]{figures/fig_observer_state_domains.pdf}
%	\caption{
%		\MZ{change the fig to reflect infinity norms.}
%		A demonstration in $\R^2$ for the state sets used in the observer-based methodology. 
%		$X_\tau$ is a bounded infinite state set of the concrete system.
%		$X_q$ is a finite partition over $X_\tau$.
%		$\sym{\hat{X}}$ is a finite cover that contains the elements of $X_q$ being expanded by $\epsilon$.
%		Different red line formats are used to differentiate between different cover elements of $\sym{\hat{X}}$.
%	}	
%	\label{fig_observer_state_domains}
%\end{figure}

Observers estimate state values of control systems by observing their input and output sequences.
We consider observers of concrete systems for output-based symbolic control.
We give an informal overview of the methodology and then present it, in details, in the following subsections.
Figure \ref{fig_observer_method} depicts the abstraction and refinement phases of the proposed methodology.
Consider the concrete system $S_\tau$, as introduced in \eqref{eq_concrete_output_system}, and its symbolic model $\sym{S}$, as introduced in \eqref{eq_abstract_output_system}.
We first design an observer $\mathcal{O}$ that estimates the states of $S_{\tau,X_\tau}$ with some upper bound $\epsilon \in \R^+$ for the error between actual states and observed ones.
%Note that, based on Corollary \ref{corr_OFRR_equals_FRR}, $S_{\tau,X_\tau}$ is related with $S_{q,\sym{X}}$ via an OFRR $Z=Q$.
A state-based symbolic model $\hat{S}_{q, \sym{\hat{X}}}$ is then related to the observed system and used for symbolic controller synthesis.
We show that $\hat{S}_{q, \sym{\hat{X}}}$ can be directly constructed from $\sym{S}$ by inflating each of its states (a state of $\sym{S}$ is a set in $X_\tau$) by $\epsilon$.
The synthesized symbolic controller $\sym{C}$ is finally refined with an interface that uses the observer.
%Here, observers are assumed to meet their precision during the first sampling period.
%Therefore, an open-loop control input $u_p$ is required to ensure that the system stays in the domain of the synthesized symbolic controller during the first sampling period.

\subsection{Observer Design}
%---------------------------------------------------------	
\label{sec_observer_Design}
Let $S_\tau$ be an output-based system and $\sym{S}$ be its symbolic model, as introduced in \eqref{eq_concrete_output_system} and \eqref{eq_abstract_output_system}, respectively, such that $S_\tau \curlyeqprec_Z \sym{S}$, where $Z$ is an OFRR.
Let $Q$ be the underlying FRR of $Z$.
Given a specification $\psi_\tau$, let $(S_\tau, \psi_\tau)$ be a concrete control problem.
%Remark that $\sym{X} = X_\tau/\bar{Z}$, where $\bar{Z}$ is the finite equivalence relation on $X_\tau$ associated with $Z$.
%Note that, by Theorem \ref{thrm_OFRR_exists_for_all_Q_quotent}, we know how to construct $Z$ and we know that it is based on $\bar{Z}$.
%In the following subsections, we show how to synthesize a symbolic controller that solves the control problem.
%We also show how to refine it to work with the concrete system.
We first introduce observers and show how they are composed with $S_\tau$.

\begin{definition}[Observers]
	\label{def_observer}
	Given a precision $\epsilon > 0$, an observer for concrete system $S_\tau$ is a system:
	\begin{equation}
		\nonumber
		\mathcal{O} := (\hat{X}, \hat{X}, \hat{U}, \underset{\mathcal{O}}{\longrightarrow}, \hat{X}, id_{\hat{X}}),
	\end{equation}
	where 
	$\hat{X} := X_\tau$, $\hat{U} := U_\tau \times Y_\tau$, and $\underset{\mathcal{O}}{\longrightarrow}$ is defined such that the following holds for all $x_{0} \in X_\tau$ and all $\hat{x}_0 \in \hat{X}$:
	\begin{align}
		\label{eq_meth1_bounded_observ}
		%&\forall x_{0} \in X_\tau \; \forall \hat{x}_0 \in \hat{X} \; \\ \nonumber
		&\forall r_{int} \in \mathsf{RUNS}_{int}(S_\tau^{(\{x_{0}\})}) \; \forall \hat{r}_{int} \in \mathsf{RUNS}_{int}(\mathcal{O}^{(\{\hat{x}_0\})}) \; \\ \nonumber
		&\;\;(\pi_{U_\tau}(r_{int}) = \pi_{U_\tau}(\hat{r}_{int}) \implies \\ \nonumber
		&\;\;\;\;\;\;\;\forall n \geq 1 \; (\Vert \Last{\hat{r}_{int}(n)} - \Last{r_{int}(n)} \Vert \leq \epsilon)).
	\end{align} 
	%The transition set $\underset{\mathcal{O}}{\longrightarrow}$ is defined such that when the observer $\mathcal{O}$ is composed with $S_\tau$, it satisfies \eqref{eq_meth1_bounded_observ}.
\end{definition}

Note that, for any linear time-invariant control systems, it is always possible to construct $\mathcal{O}$ by embedding a Luenberger observer with a suitable gain inside it \cite{Franklin:1997:DCD:550726}.
Additionally, for some classes of nonlinear systems, one can utilize high-gain observers \cite{RNC:RNC3051}.
We define the \emph{observed system} $\hat{S}_\tau$ as the system resulting from composing the observer $\mathcal{O}$ to the sampled-data system $S_\tau$ as follows:
\begin{equation}
	\nonumber
	\hat{S}_\tau := \mathcal{O} \triangleleft S_\tau,
\end{equation}
where $\triangleleft$ denotes the observation composition introduced in Definition \ref{def_observ_composition} and the output set of $\hat{S}_\tau$ is consequently equals to $\hat{X} = X_\tau$.
Here, $\hat{S}_\tau$ coincides with its state-based system $\hat{S}_{\tau, X_\tau}$ version and we use them interchangeably.
%(i.e. $\hat{S}_\tau = \hat{S}_{\tau, X_\tau}$)

In Definition \ref{def_observer}, the distance between the runs of $S_\tau$ and those of $\hat{S}_\tau$ is always upper bounded by $\epsilon$ after the first sampling period.
We synthesize symbolic controllers to solve $(S_\tau, \psi_\tau)$ only after the first sampling period.
In Subsection \ref{ssec_observers_blind_period}, we show how to handle the first sampling period.

\subsection{A symbolic model for $\hat{S}_\tau$}
%---------------------------------------------------------	
We approximate $\mathcal{O}$ with a static perturbation map, denoted by $\widetilde{\mathcal{O}}: X_{\tau} \rightrightarrows X_{\tau}$ ($\rightrightarrows$ denotes set-valued mapping), such that its perturbation is upper bounded by $\epsilon$.
Formally, we define map $\widetilde{\mathcal{O}}$ as follows for any $x \in X_\tau$:
\begin{equation}
	\nonumber
	\widetilde{\mathcal{O}}(x) := \{ \tilde{x} \in X_\tau \;\vert\; \Vert x - \tilde{x} \Vert \le \epsilon \}.
\end{equation}
One can simply show that $B(\mathcal{O} \triangleleft S_\tau) \subseteq B(\widetilde{\mathcal{O}} \circ S_{\tau, X_\tau})$.
Now let us recall the symbolic model $\sym{S}$ of $S_\tau$.
Note that the elements of $\sym{X}$ are disjoint subsets of $X_\tau$.
A symbolic model for $\hat{S}_\tau$ is constructed by inflating each state $\sym{x} \in \sym{\hat{X}}$ of $\sym{S}$ by $\epsilon$.
Formally, we denote by $\sym{\hat{S}}$ the symbolic model of the observed system $\hat{S}_\tau$ and we define it as follows:
\begin{equation}
	\label{eq_s_q_inflated}
	\sym{\hat{S}}:=(\sym{\hat{X}}, \sym{\hat{X}}, \sym{U}, \underset{\hat{q}}{\longrightarrow}, \sym{\hat{X}}, id_{\sym{\hat{X}}}),
\end{equation}
where $\sym{\hat{X}} := \{\underset{x_\tau \in \sym{x}}{\bigcup}\widetilde{\mathcal{O}}(x_\tau) \;\vert\; \sym{x} \in \sym{X} \}$,
and $(\sym{\hat{x}}, \sym{u}, \sym{\hat{x}}') \in \underset{\hat{q}}{\longrightarrow}$ if there exist 
$x \in \sym{\hat{x}}$ and $x' \in \sym{\hat{x}}'$ such that 
$((x,\hat{x}),\sym{u},(x',\hat{x}')) \in \underset{\hat{\tau}}{\longrightarrow}$ 
for some $\hat{x},\hat{x}' \in \hat{X}$,
and $\underset{\hat{\tau}}{\longrightarrow}$ is the transition relation of $\hat{S}_\tau$.
Notice that $\sym{\hat{S}}$ also coincides with its state-based version $\hat{S}_{q,\sym{\hat{X}}}$ and we use them interchangeably.
%(i.e. $\sym{\hat{S}} = \hat{S}_{q,\sym{X}}$)

\begin{remark}
	\label{rmk_observed_states_one2one}
	For any $\epsilon > 0$, the elements of $\sym{\hat{X}}$ form a cover of $X_\tau$ and its elements have one-to-one correspondence with the partition elements of $\sym{X}$.
\end{remark}

Now we derive a version of the given specification $\psi_\tau$ to be used later for controller synthesis.
First, a state-based abstract specification $\psi_{q,\sym{X}}$ is derived using maps $Z$ and $\sym{H}$ as follows: $\psi_{q,\sym{X}} := \sym{H}^{-1}(Z(\psi_\tau))$.
Here, we abuse the notation and apply $Z$ and $\sym{H}^{-1}$ to elements of state sequences in $\psi_\tau$.
Then, we define a map $\widetilde{\mathcal{O}}_q: \sym{X} \to \sym{\hat{X}}$ that accepts a partition element $\sym{x} \in \sym{X}$ and translates it to its corresponding cover element $\sym{\hat{x}} \in \sym{\hat{X}}$.
Using $\widetilde{\mathcal{O}}_q$, any state-based abstract specification $\psi_{q,\sym{X}}$ can be translated to an abstract specification $\psi_{q, \sym{\hat{X}}}$ as follows: $\psi_{q, \sym{\hat{X}}} := \widetilde{\mathcal{O}}_q(\psi_{q,\sym{X}})$.
%The relation between the different sets $X_\tau$, $\sym{X}$ and $\sym{\hat{X}}$ are depicted with an example in $\R^2$ in Fig. \ref{fig_observer_state_domains}.
Finally, we have $(\hat{S}_{q,\sym{\hat{X}}}, \psi_{q,\sym{\hat{X}}})$ as an observed-based abstract control problem and its construction is depicted with steps $\mathtt{(1)}$ to $\mathtt{(4)}$ in Fig. \ref{fig_observer_method_detailed}.

\begin{remark}
	\label{rmk_hints_for_choosing_epsilon}
	Although observer $\mathcal{O}$ is designed for $S_\tau$, the choice of $\epsilon$ should be based on states set $X_q$ in $S_q$.
	Selecting a larger value of $\epsilon$ increases the nondeterminism of transitions of $\hat{S}_{q,\hat{X}_q}$ making control problem $(\hat{S}_{q,\hat{X}_q}, \psi_{q,\hat{X}_q})$ unsolvable.
\end{remark}

\subsection{Controller Synthesis and Refinement}
%---------------------------------------------------------	
\label{ssec_observers_synth_refine}
\begin{figure}	
	\centering
	\includegraphics[width=0.6\textwidth]{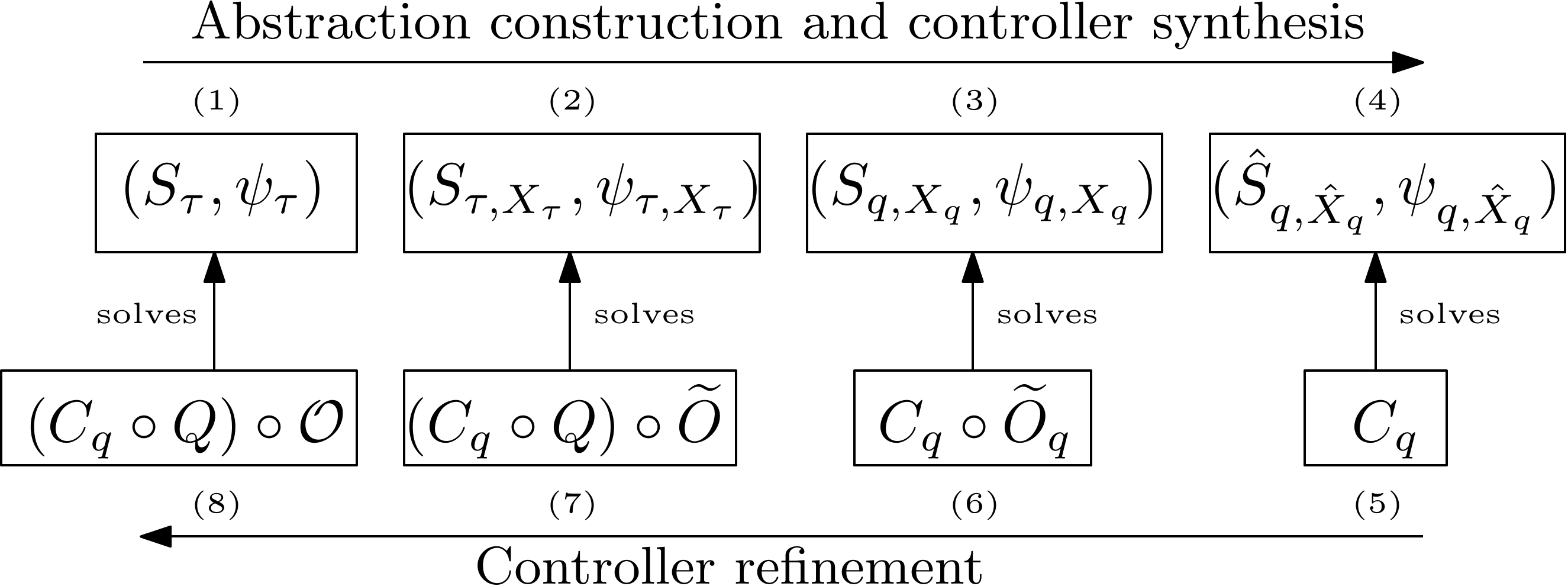}
	\caption{
		Construction of different control problems and their controllers for the observer-based methodology.
	}	
	\label{fig_observer_method_detailed}
\end{figure}

%We consider symbolic controllers that work only after the first sampling period.
%Given a concrete control problem $(S_{\tau}, \psi_{\tau})$, a state-based concrete problem $(S_{\tau, X_\tau}, \psi_{\tau,X_\tau})$ is constructed where $\psi_{\tau,X_\tau}$ is cast from $\psi_{\tau}$ by applying the inverse map $H_\tau^{-1}$ to all of its sequences.
%Using the OFRR $Z$, the auxiliary state-based control problem $(S_{q, X_q}, \psi_{q,X_q})$ is constructed from $(S_{\tau, X_\tau}, \psi_{\tau,X_\tau})$.
%Using the perturbation map $\widetilde{\mathcal{O}}_q$ the observed symbolic control problem $(\hat{S}_{q, \hat{X}_q}, \psi_{q,\hat{X}_q})$ is constructed from $(S_{q, X_q}, \psi_{q,X_q})$.

In the previous subsection, we demonstrated how $(S_\tau,\psi_\tau)$ is translated to $(\hat{S}_{q,\hat{X}_q}, \psi_{q,\hat{X}_q})$, as depicted in Fig. \ref{fig_observer_method_detailed}.
We know from Corollary \ref{corr_internal_behavior_implies_extrnal_behavior} that a controller designed for control problem in step $\mathtt{(2)}$ can be refined to solve control problem in step $\mathtt{(1)}$.
Here, we rely on two facts: the behavior of $S_{\tau,X_\tau}$ is the internal behavior of $S_\tau$, and $B(\mathcal{O} \triangleleft S_\tau) \subseteq B(\widetilde{\mathcal{O}} \circ S_{\tau, X_\tau})$.
The control problem in step $\mathtt{(3)}$ is a symbolic representation of the control problem in step $\mathtt{(2)}$ using the FRR $Q$.
The results from \cite{G.Reissig_etal_FRR_TAC} apply directly and any controller designed to solve the control problem in step $\mathtt{(3)}$ can be refined to solve the control problem in step $\mathtt{(2)}$ using $Q$ as a static quantization map.
The only missing link is how a controller designed to solve the control problem in step $\mathtt{(4)}$ is refined to solve the control problem in step $\mathtt{(3)}$.
%Note that $\hat{S}_{q,\hat{X}_q}$ is the $\epsilon$-inflated version of $S_{q,X_q}$ using the perturbation map $\widetilde{\mathcal{O}}$.
Note that in \eqref{eq_s_q_inflated}, we designed $\hat{S}_{q,\hat{X}_q}$ by inflating the states of $S_{q,X_q}$ using the perturbation map $\widetilde{\mathcal{O}}$.
Hence, we can use the results in \cite[Theorem VI.4]{G.Reissig_etal_FRR_TAC} to ensure the behavioral inclusion when refining the controller designed for the control problem in step $\mathtt{(4)}$.
We introduce a version of \cite[Theorem VI.4]{G.Reissig_etal_FRR_TAC} adapted to our notation.

\begin{theorem}
	\label{thrm_meth1_controllers_synth_and_refine}
	Let $(S_{q, X_q}, \psi_{q,X_q})$ be an abstract control problem .
	Consider an abstract observer-based control problem $(\hat{S}_{q, \hat{X}_q}, \psi_{q,\hat{X}_q})$ constructed using the map $\widetilde{\mathcal{O}}_q$.
	If a controller $\sym{C}$ solves $(\hat{S}_{q, \hat{X}_q}, \psi_{q,\hat{X}_q})$, then the controller $\sym{C} \circ \widetilde{\mathcal{O}}_q$ solves $(S_{q, X_q}, \psi_{q,X_q})$.
\end{theorem}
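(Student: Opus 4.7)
The plan is to view $\widetilde{\mathcal{O}}_q$ as inducing a (functional) feedback refinement relation from $S_{q, X_q}$ to $\hat{S}_{q, \hat{X}_q}$ and then invoke the standard FRR-based refinement result of \cite[Thm.~V.4]{G.Reissig_etal_FRR_TAC}, exactly as flagged in the paragraph preceding the statement. Concretely, I would define
\begin{equation}
\nonumber
R := \{ (\sym{x}, \sym{\hat{x}}) \in \sym{X} \times \sym{\hat{X}} \;\vert\; \sym{\hat{x}} = \widetilde{\mathcal{O}}_q(\sym{x}) \},
\end{equation}
and check the three FRR conditions of Definition \ref{def_FRR} for $R$. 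Condition (iii) is immediate by how initial states of $\hat{S}_{q,\hat{X}_q}$ are inherited from $S_{q, X_q}$ through $\widetilde{\mathcal{O}}_q$. Condition (i), namely $U_{\hat{S}_{q,\hat{X}_q}}(\sym{\hat{x}}) \subseteq U_{S_{q, X_q}}(\sym{x})$, follows from Remark \ref{rmk_concrete_system_chars}, since every input is admissible at every state in both systems.

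The key technical step is condition (ii): for any $(\sym{x}, \sym{\hat{x}}) \in R$ and $\sym{u} \in \sym{U}$, one must show $R(\Post{\sym{x}}{\sym{u}}{S_{q,X_q}}) \subseteq \Post{\sym{\hat{x}}}{\sym{u}}{\hat{S}_{q,\hat{X}_q}}$. I would unfold this through the concrete transition relation $\underset{\tau}{\longrightarrow}$: if $\sym{x}' \in \Post{\sym{x}}{\sym{u}}{S_{q,X_q}}$, there exist concrete states $x \in \sym{x}$, $x' \in \sym{x}'$ with $(x, \sym{u}, x') \in \underset{\tau}{\longrightarrow}$. Since $\sym{x} \subseteq \widetilde{\mathcal{O}}_q(\sym{x}) = \sym{\hat{x}}$ and $\sym{x}' \subseteq \widetilde{\mathcal{O}}_q(\sym{x}') =: \sym{\hat{x}}'$ (both by the definition of $\widetilde{\mathcal{O}}$ which only enlarges sets), the witnessing concrete pair $(x, x')$ also certifies $(\sym{\hat{x}}, \sym{u}, \sym{\hat{x}}') \in \underset{\hat{q}}{\longrightarrow}$, so $\sym{\hat{x}}' = R(\sym{x}') \in \Post{\sym{\hat{x}}}{\sym{u}}{\hat{S}_{q,\hat{X}_q}}$, as required. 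This step is where I expect the main subtlety: I must confirm that the inflated cover $\sym{\hat{X}}$ in \eqref{eq_s_q_inflated} indeed satisfies $\sym{x} \subseteq \widetilde{\mathcal{O}}_q(\sym{x})$ for every $\sym{x} \in \sym{X}$ (which is clear, since $\widetilde{\mathcal{O}}(x) \ni x$ for all $x$), and that the one-to-one correspondence noted in Remark \ref{rmk_observed_states_one2one} makes $\widetilde{\mathcal{O}}_q$ a well-defined function between the abstract state sets.

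With $R$ established as an FRR, the specifications align by construction: the problem in step $\mathtt{(4)}$ was built so that $\psi_{q, \sym{\hat{X}}} = \widetilde{\mathcal{O}}_q(\psi_{q,X_q})$, hence $R$-projection of any behavior satisfying $\psi_{q, \sym{\hat{X}}}$ satisfies $\psi_{q, X_q}$. Invoking \cite[Thm.~V.4]{G.Reissig_etal_FRR_TAC} with $R$ as the underlying FRR and $\widetilde{\mathcal{O}}_q$ playing the role of its static quantizing interface, the controller $\sym{C}$ that enforces $\psi_{q,\hat{X}_q}$ on $\hat{S}_{q, \hat{X}_q}$ refines, via serial composition with $\widetilde{\mathcal{O}}_q$, to a controller $\sym{C} \circ \widetilde{\mathcal{O}}_q$ that enforces $\psi_{q,X_q}$ on $S_{q, X_q}$. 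Feedback-composability of $\sym{C} \circ \widetilde{\mathcal{O}}_q$ with $S_{q,X_q}$ follows from the same argument used in part (i) of the proof of Theorem \ref{thrm_external_behavioral_inclusion}, specialised to the static single-valued map $\widetilde{\mathcal{O}}_q$.
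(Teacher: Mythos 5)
Your proposal is correct and, in substance, follows the route the paper intends: the paper omits this proof entirely, deferring to \cite[Theorem VI.4]{G.Reissig_etal_FRR_TAC}, and your reconstruction --- exhibiting the graph of $\widetilde{\mathcal{O}}_q$ as an explicit FRR from $S_{q,X_q}$ to $\hat{S}_{q,\hat{X}_q}$ and then invoking the generic refinement result \cite[Theorem V.4]{G.Reissig_etal_FRR_TAC} --- is exactly the mechanism underlying that cited theorem, since the inflated abstraction in \eqref{eq_s_q_inflated} has strictly more transitions than $S_{q,X_q}$ under the cell-wise correspondence. Two small steps deserve to be made explicit. First, in your verification of condition (ii), the witnessing pair $(x,\sym{u},x')\in\underset{\tau}{\longrightarrow}$ certifies $(\sym{\hat{x}},\sym{u},\sym{\hat{x}}')\in\underset{\hat{q}}{\longrightarrow}$ only after you also supply observer states $\hat{x},\hat{x}'$ with $(\hat{x},(\sym{u},H_\tau(x)),\hat{x}')\in\underset{\mathcal{O}}{\longrightarrow}$, because $\underset{\hat{q}}{\longrightarrow}$ is defined through the transition relation of $\mathcal{O}\triangleleft S_\tau$ rather than of $S_\tau$ itself; this is harmless but requires the totality of the observer. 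Second, your specification step needs the converse implication --- from $\widetilde{\mathcal{O}}_q(\sym{x}_0)\widetilde{\mathcal{O}}_q(\sym{x}_1)\cdots\in\psi_{q,\hat{X}_q}=\widetilde{\mathcal{O}}_q(\psi_{q,X_q})$ back to $\sym{x}_0\sym{x}_1\cdots\in\psi_{q,X_q}$ --- and this holds precisely because $\widetilde{\mathcal{O}}_q$ is a bijection between $\sym{X}$ and $\sym{\hat{X}}$ (Remark \ref{rmk_observed_states_one2one}); you mention the one-to-one correspondence only as a well-definedness check, but it is in fact the load-bearing hypothesis for transferring the specification. Neither point is a gap in substance.
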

\begin{proof}
	The proof is very similar to that of \cite[Theorem VI.4]{G.Reissig_etal_FRR_TAC} and is omitted here due to lack of space.
\end{proof}

%The controller $C_q$ is synthesized as discussed in \ref{ssec_controller_synthesis}.
%Now, using Theorem \ref{thrm_meth1_controllers_synth_and_refine}, \cite[Theorem V.4]{G.Reissig_etal_FRR_TAC}, and Corollary \ref{corr_internal_behavior_implies_extrnal_behavior}, the controller $\sym{C}$ is refined as depicted with steps (5) to (6) in Fig. \ref{fig_observer_method_detailed}.
%Its final version is given by $C_\tau = \sym{C} \circ Z$ and it solves the control problem $(\mathcal{O} \triangleleft S_\tau, \psi_\tau)$, as depicted in Fig. \ref{fig_observer_method}.

\subsection{The First Sampling Period}
%---------------------------------------------------------	
\label{ssec_observers_blind_period}
The symbolic controller $\sym{C}$ is only valid after the first sampling period.
One solution to ensure that the system is ready for $\sym{C}$ for times $t \ge \tau$, is to choose an input $u_p \in U_\tau$ and an initial state set $X_p \in X_\tau$ satisfying
\begin{align}
	\label{eq_obersever_blindperiod_cond}
	\forall x_0 \in X_p \;  & \forall x_\tau \in \Post{x_0}{u_p}{S_\tau} \; \exists x_q \in \mathcal{D}(\sym{C}) \text{ s.t. } \\ \nonumber
	                        &x_\tau \in x_q,
\end{align}
where $\mathcal{D}$ extracts the controller's domain as introduced in Definition \ref{def_controller_domain}.
Condition \eqref{eq_obersever_blindperiod_cond} ensures that states at times $t \ge \tau$ remains in $\mathcal{D}(\sym{C})$.
We then need to solve a special control problem $(S_{\tau}^{(X_p)}, \psi_p)$, where $\psi_p$ is defined as follows:
\begin{equation}
	\nonumber
	\psi_p := 
	\begin{cases} 
		\nonumber
		\Safe_{[0,1]}(H_{\tau}( X_D )), 	& \mbox{if } X_p \subseteq  \mathcal{D}(\sym{C})\\ 
		\nonumber
		\Reach_{[0,1]}(H_{\tau}(X_D )), 	& \mbox{if } \mathcal{D}(\sym{C}) \subset  X_p\\ 
	\end{cases},
\end{equation}
and $X_D := \underset{\sym{x} \in \mathcal{D}(\sym{C})}{\bigcup}  \sym{x}$.
The selection of $X_p$ is critical and depends on the dynamics of $\Sigma$.
A good strategy is to start with $X_p = X_D$ and expand (or shrink) it until condition \eqref{eq_obersever_blindperiod_cond} is met for some input $u_p$.
We discuss this again with an example in Section \ref{sec_examples}.

%%%%%%%%%%%%%%%%%%%%%%%%%%%%%%%%%%%%%%%%%%%%%%%%%%%%%%%%%%%%%%%%%%%%%%%%%%%%%%%%%%%%%%%%%%%%%%%%%%%
\section{Methodology 3: Constructing Detectors for Symbolic Models}
%%%%%%%%%%%%%%%%%%%%%%%%%%%%%%%%%%%%%%%%%%%%%%%%%%%%%%%%%%%%%%%%%%%%%%%%%%%%%%%%%%%%%%%%%%%%%%%%%%%
\label{meth_detector}
\begin{figure}	
	\centering
	\includegraphics[width=0.6\textwidth]{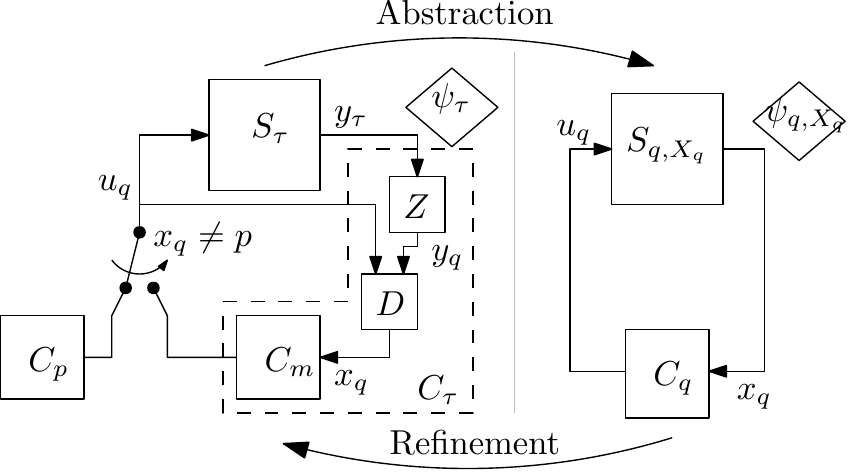}
	\caption{
		Output-feedback symbolic control using detectors.
	}	
	\label{fig_detector_method}
\end{figure}

We revise the notion of detectability of non-deterministic finite transition systems (NFTS) \cite{Zhang2020}, and use it to design detectors for $\sym{S}$.
First, we introduce non-deterministic finite automata (NFA).
Here, a system is intuitively called detectable if one can use sufficiently long input sequences and their corresponding output sequences to determine the current and all subsequent states of the system. 
We first show how to construct detectors to identify, in finite-time, the current and all subsequent states of detectable symbolic models.
We then synthesize symbolic controllers and refine them to enforce the given specifications on original systems.
The method is depicted schematically in Fig. \ref{fig_detector_method} and we summarize it as follows: 
1) construct an abstract control problem $(\sym{S}, \sym{\psi})$ from the concrete control problem $(S_\tau, \psi_\tau)$, where $\sym{S} \preccurlyeq_{Z} S_\tau$;
2) verify the detectability of $\sym{S}$;
3) if $\sym{S}$ is detectable, then design a detector $D$ to detect its state at the current time step;
4) use the state-based system $S_{q,\sym{X}}$ to synthesize a symbolic controller $C_q$ that is wrapped with some routing signals in a symbolic controller $C_m$; 
5) refine symbolic controller $C_m$ using $Z$ and $D$ as interfaces to controller $C_{\tau}$; and finally
6) since $D$ requires a priori known finite time to start detecting the states of $\sym{S}$, an open-loop controller $C_p$ is designed to keep the system in the domain of $C_q$, and a signal $p$ is required to switch between $C_p$ and $C_{\tau}$.

%As per the assumptions in \cite{Kuize_IFAC_detectability_2017}, $\sym{S}$ needs to be total, which is not restrictive for control systems, since their trajectories are defined for all inputs and all initial states.

\old{\subsection{Additional Notation}}
%---------------------------------------------------------	
\old{We introduce extra notations needed only for this section.}
\old{For better readability of the paper, we preferred to introduce these extra notations here rather than in the beginning of the paper.}
\old{Consider a system $S := (X, X_0, U, \longrightarrow, Y, H)$.
For all $x\in X$ and $\a\in U^*$ such that $|\a|\ge 1$, $x'\in X$ is called an
$\a$-successor of $x$, if there exist states $x_0,\dots,x_{|\a|}\in X$ such that $x_0=x$, $x_{|\a|}=x'$,
and $(x_i,\a_i,x_{i+1})\in\longrightarrow$ for all integers $0\le i\le |\a|-1$.
The set of $\a$-successors of a state $x\in X$ (resp., a subset $X'\subset X$) is denoted by $\Post{x}{\a}{}$
(resp., $\Post{X'}{\a}{}:=\cup_{x\in X'}\Post{x}{\a}{}$).
For all $x\in X$, $\a\in U^*$ and $\b\in Y^*$ such that $|\a|+1=|\b|$, $x'\in X$ is called an
$(\a,\b)$-successor of $x$, if there exist states $x_0,\dots,x_{|\a|}\in X$ such that $x_0=x$, $x_{|\a|}=x'$,
$H(x_{|\a|})=\b_{|\a|}$, and $H(x_i)=\b_i$ and $(x_i,\a_i,x_{i+1})\in\longrightarrow$ for all integers $0\le i\le |\a|-1$.
The set of $(\a,\b)$-successors of a state $x\in X$ (resp., a subset $X'\subset X$) is denoted by $\Post{x}{\a}{\b}$
(resp., $\Post{X'}{\a}{\b}:=\cup_{x\in X'}\Post{x}{\a}{\b}$).}
%At the beginning of observation, the set of possible initial states for an observation $y\in Y$ is denoted by $\Post{x_0}{\e}{y}:=\left\{ x\in X_0| H(x)=y\right\}$ {\color{red} needed ?}.

\movedagain{\begin{definition}
	An NFA $\A$ is a tuple $\A := (\mathcal{Q},\mathbf{\Delta},\d,q_0,F)$,
where $\mathcal{Q}$ is a finite set of states, $\mathbf{\Delta}$ is a finite set of labels (which is an alphabet), $\d\subset \mathcal{Q}\times \mathbf{\Delta}\times \mathcal{Q}$ is the transition relation, $q_0\in \mathcal{Q}$ is the initial state, and $F\subset \mathcal{Q}$ is a set of final states. 
\end{definition}}

The transition relation $\d$ of NFA $\A$ is extended to $\d^*\subset \mathcal{Q}\times \mathbf{\Delta}^*\times \mathcal{Q}$ in the usual way:
for all $q,q'\in \mathcal{Q}$, $(q,\e,q')\in\d^*$ iff $q=q'$; and for all $q,q'\in \mathcal{Q}$ and $\sigma_0\dots\sigma_{n-1}\in\mathbf{\Delta}^*
\setminus\{\e\}$, $(q,\sigma_0\dots\sigma_{n-1},q')\in\d^*$ iff there exists $q_1,\dots,q_{n-1}\in \mathcal{Q}$
such that $(q,\sigma_0,q_1), (q_1,\sigma_1,q_2),\dots,(q_{n-1},\sigma_{n-1},q')\in\d$.
Hereinafter, we use $\d$ to denote $\d^*$, as no confusion shall occur.
A state $q\in \mathcal{Q}$ is said to be reachable from a state $q'\in \mathcal{Q}$, if there exists $\sigma\in\mathbf{\Delta}^*$ such that
$(q',\sigma,q)\in\d$. A state $x\in \mathcal{Q}$ is called reachable from a subset $\mathcal{Q}'$ of $\mathcal{Q}$, if $x$ is reachable from
some states of $\mathcal{Q}'$.
A sequence $q_0,\dots,q_n\in \mathcal{Q}$ is called a path, if there exist $\sigma_0,\dots,\sigma_{n-1}\in\mathbf{\Delta}$
such that $(q_0,\sigma_0,q_1),\dots,(q_{n-1},\sigma_{n-1},q_n)\in\d$.
A path $q_0,\dots,q_n\in \mathcal{Q}$ is called a cycle, if $q_0=q_n$.
\oldagain{Note that $\d\subset \mathcal{Q}\times\mathbf{\Delta}^*\times \mathcal{Q}$ is equivalently represented as a map $\d:\mathcal{Q}\times\mathbf{\Delta}^*\longrightarrow 2^{\mathcal{Q}}$: for all $q,q'\in \mathcal{Q}$ and $\sigma\in \mathbf{\Delta}^*$, $(q,\sigma,q')\in\d$ iff $q'\in\d(q,\sigma)$.}

We borrow the concept of limit points from the theory of cellular automata \cite{KariCALectureNote} and use it for NFAs.
Limit points are defined as the points that can be visited at each time step.
If one regards an NFA $\A$ as a system in which each state is initial, and regard each state of $\A$ as a point, then limit points are exactly the states reachable from some cycles. 
The limit set of $\A$ consists of limit points and we denote it by $LP(\A)$. 
%For more details, we refer the reader to \cite{Sipser2006TheoryofComputation}.

\subsection{Detectability of Symbolic Models}
%---------------------------------------------------------	
Consider a concrete control problem $(S_{\tau}, \psi_{\tau})$ and its abstract control problem $(\sym{S}, \sym{\psi})$ such that $\sym{S} \preccurlyeq_{Z} S_{\tau}$, for some OFRR $Z$, and $\sym{\psi}$ is constructed as introduced in \eqref{eq_from_concrete_specs_to_abstract_specs}.
We first introduce the concept of detectability\old{ (a.k.a. arbitrary-experiment detectability \cite{Zhang2020})} for symbolic models.

\begin{definition}[Detectability of Symbolic Models]
	\label{def_detectability_NFTS}
	A symbolic model $S_q$, as defined in \eqref{eq_abstract_output_system}, is said to be detectable if there exists $N \in \R^+$ such that for all input sequences $\a\in U^{*}$, $\vert \a \vert \ge N$, and all output sequences $\b\in Y^{*}$, $\vert \b \vert = |\a|+1$, we have that $\vert \Post{X_q}{\a}{\b} \vert \le 1$.
\end{definition}

Verifying the detectability of $\sym{S}$ is essential in the current methodology.
We introduce Algorithm \ref{alg1:detectability_NFTS} that takes $\sym{S}$ as input, and returns NFA $\A$ which is used to \oldagain{infer}\newagain{check} the detectability of $\sym{S}$\oldagain{ and to design a detector $D$ for its states}.

\begin{algorithm}
	\label{alg1:detectability_NFTS}
	Receive a symbolic model $\sym{S} := (X_q,X_q,U_q,\underset{q}{\longrightarrow},Y_q,H)$, and initiate an NFA $\A := (\mathcal{Q},\mathbf{\Delta},\d,q_0,F)$, where
	$\mathcal{Q} :=\{\diamond\}$, $\diamond$ is a dummy symbol, $\mathbf{\Delta} := \d := F := \emptyset$, and $q_0 := \diamond$.
	$\mathcal{Q}_1 := \emptyset$, $\mathcal{Q}_2 := \emptyset$. 
	Let $\phi$ be a dummy symbol not in $Y_q$.
	\begin{enumerate}
		\item \label{item1:detectability_NFTS}
			For each $y\in Y_q$, denote $X_y:=\{x\in X_q|H_q(x)=y\}$,
			\begin{enumerate}
				\item if $|X_y|=1$, then
					$\mathcal{Q}_1:=\mathcal{Q}_1\cup\{X_y\}$, $\mathbf{\Delta}:=\mathbf{\Delta}\cup\{(\phi,y)\}$, $\d:=\d\cup\{(\diamond,(\phi,y),
					X_y)\}$,
				\item else if $|X_y|>1$, then $\mathcal{Q}_1:=\mathcal{Q}_1\cup\{Z\subset X_y||Z|=2\}$,
					$\mathbf{\Delta}:=\mathbf{\Delta}\cup\{(\phi,y)\}$, for each $Z\subset X_y$ satisfying that $|Z|=2$,
					$\d:=\d\cup\{(\diamond,(\phi,y),Z)\}$.
			\end{enumerate}
			${\mathcal{Q}}:={\mathcal{Q}}\cup \mathcal{Q}_1$, $\mathcal{Q}_2:=Q_2\cup \mathcal{Q}_1$, $\mathcal{Q}_1:=\emptyset$.
		\item \label{item2:detectability_NFTS}
			If $\mathcal{Q}_2=\emptyset$, stop. 
			Else, for each $q_2\in \mathcal{Q}_2$, denote $y_0:=H_q(x)$, where $x\in q_2$,
			for each $u\in U_q$ and each $y\in Y_q$,
			\begin{enumerate}
				\item if $|\post_u^{y_0y}(q_2)|=1$, then
					$\mathbf{\Delta}:=\mathbf{\Delta}\cup\{(u,y)\}$, $\d:=\d\cup\{(q_2,(u,y),\post_u^{y_0y}(q_2))\}$,
					if $\post_u^{y_0y}(q_2)\notin {\mathcal{Q}}$ then $\mathcal{Q}_1:=\mathcal{Q}_1\cup\{\post_u^{y_0y}(q_2)\}$,
				\item else if $|\post_u^{y_0y}(q_2)|>1$, then $\mathbf{\Delta}:=\mathbf{\Delta}\cup\{(u,y)\}$, 
					for each $Z\subset\post_u^{y_0y}(q_2)$ satisfying $|Z|=2$, 
					$\d:=\d\cup\{(q_2,(u,y),Z\}$,
					if $Z\notin {\mathcal{Q}}$ then $\mathcal{Q}_1:=\mathcal{Q}_1\cup\{Z\}$.
			\end{enumerate}
			${\mathcal{Q}}:={\mathcal{Q}}\cup \mathcal{Q}_1$, $\mathcal{Q}_2:=\emptyset$, $\mathcal{Q}_2:=\mathcal{Q}_1$, $\mathcal{Q}_1:=\emptyset$.
		\item \label{item3:detectability_NFTS}
			Go to Step \eqref{item2:detectability_NFTS}. (Since $X_q$,$U_q$, and $Y_q$ are finite, the algorithm will terminate.)
	\end{enumerate}
\end{algorithm}

%If we regard the NFA that Algorithm \ref{alg1:detectability_NFTS} returns as a system in which each state is initial, and regard each state of the NFA as a point of the system, then limit points can be defined as below by borrowing the concept of limit points of cellular automata \cite{KariCALectureNote}.
%\begin{definition}[Limit Points]
%	Consider an NFA that Algorithm \ref{alg1:detectability_NFTS} returns as a nondeterministic system.
%	Regarding each state of the NFA as initial, and regarding each state as a point of the system, then 
%	the limit points are defined as points that can be visited at each time step. The limit set consists of limit points.
%\end{definition}
%It can be seen that limit points are exactly the points reachable from some cycle.
%According to this concept, after we put a symbolic system $S := (X,X_0,U,\longrightarrow,Y,H)$ into Algorithm \ref{alg1:detectability_NFTS},
%each state of the obtained NFA has a successor, hence reaches a state of the limit set, and
%the NFA has a nonempty limit set, since the symbolic system is total.

Let $\A$ be the NFA resulting from Algorithm \ref{alg1:detectability_NFTS} after setting $\sym{S}$ as input. 
The smallest natural number $T_t$ such that each pair of input sequence of length $T_t$ and output sequence of length $T_t+1$ changes the initial state of the $\A$ to a state of its limit set is called the \emph{transient period}.
More precisely, we denote by $T_t$ the transient period of $\sym{S}$ and define it as follows:

\begin{equation}
	\begin{split}
		T_t :=& \min\left\{ t\in\N \;\vert\;
		\forall u_1,\dots,u_t\in U_q \;
		\forall y_0,\dots,y_t\in Y_q \; \right.\\
		&\qquad (\d(\diamond,(\phi,y_0)(u_1,y_1)\dots(u_t,y_t)) \ne \emptyset \implies \\
		&\qquad \d(\diamond,(\phi,y_0)(u_1,y_1)\dots(u_t,y_t)) \subseteq LP(\A) \left.)\right\}.
	\end{split}
	\nonumber
\end{equation}

The next theorem provides a tool to check the detectability of $\sym{S}$ and provides a time index $T_t$ after which one can identify the states of the system.

\begin{theorem}\label{thm1:detectability_NFTS}
	Let $\sym{S}$ be a symbolic model as introduced in \eqref{eq_abstract_output_system}.
	Let $\A$ be the NFA resulting from running Algorithm \ref{alg1:detectability_NFTS} with $\sym{S}$ as input and $T_t$ be its transient period.
	Then, 
	\begin{itemize}
		\item [(i)] $\sym{S}$ is detectable iff in $\A$, each state reachable from some cycle is a singleton, and
		%\item [(i)] if $\sym{S}$ is detectable, then $N = T_t$.
		\item [(ii)] if $\sym{S}$ is detectable, then for all input sequences $\a\in U_q^{*}$, $\vert \a \vert \ge T_t$, and all output sequences $\b\in Y_q^{*}$, $\vert \b \vert = |\a|+1$, we have that $\vert \Post{X_q}{\a}{\b} \vert \le 1$.
	\end{itemize}
\end{theorem}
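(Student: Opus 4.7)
The plan is to reinterpret Algorithm \ref{alg1:detectability_NFTS} as the construction of a ``pair observer'' NFA that tracks, along every input/output history, all at-most-two-element subsets of the corresponding state estimate. Concretely, my first step is to prove by induction on $t$ that for every $u_1,\dots,u_t \in U_q$ and every $y_0,\dots,y_t \in Y_q$, the set
\[
\d\bigl(\diamond,(\phi,y_0)(u_1,y_1)\cdots(u_t,y_t)\bigr) \subseteq \mathcal{Q}
\]
equals $\{Z \subseteq \Post{X_q}{u_1\cdots u_t}{y_0\cdots y_t} \mid 1 \le |Z| \le 2\}$. The base case $t=0$ is exactly Step \ref{item1:detectability_NFTS}, and the inductive step follows from Step \ref{item2:detectability_NFTS} because both branches of the algorithm enumerate, for each $(u,y)$, precisely the singletons and unordered pairs of the successor estimate. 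This correspondence is the technical core of the proof and is the step I expect to require the most care.

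Once this bijection is in place, part (i) follows from a finiteness argument. For the ``only if'' direction, assume for contradiction a pair-state $q_2 = \{x_1,x_2\}$ is reachable from some cycle. Then for every $N$ one can concatenate the path to the cycle, $N$ traversals of the cycle, and the path from the cycle to $q_2$, producing I/O labels $(\a,\b)$ of arbitrary length with $\{x_1,x_2\} \subseteq \Post{X_q}{\a}{\b}$, contradicting detectability. For the ``if'' direction, suppose every state of $\A$ reachable from a cycle is a singleton. Since $\A$ is finite, any sequence long enough to exceed the number of non-limit states must end in $LP(\A)$; hence the reached $\A$-state is a singleton, and by the correspondence above $|\Post{X_q}{\a}{\b}| \le 1$.

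Part (ii) is then immediate from the definition of the transient period $T_t$: by its defining property, every I/O history of length at least $T_t$ leads $\A$ only into states of $LP(\A)$, which under detectability (and part (i)) are singletons. Translating back through the correspondence yields $|\Post{X_q}{\a}{\b}| \le 1$ whenever $|\a| \ge T_t$.

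The main obstacle, as noted, is the invariant linking $\A$-states to subsets of the estimate $\Post{X_q}{\a}{\b}$. One subtlety is that Step \ref{item2:detectability_NFTS} of the algorithm only expands transitions from pair-states $q_2$ that share a common observation $y_0 = H_q(x)$ for $x \in q_2$; I must verify in the inductive step that this is consistent, i.e., that every pair-state actually arising in $\mathcal{Q}$ is a subset of some $H_q^{-1}(y)$, which follows because pair states are introduced in Step \ref{item1:detectability_NFTS} from $X_y$ and preserved by Step \ref{item2:detectability_NFTS} via the intersection with $H_q^{-1}(y)$ implicit in $\post_u^{y_0 y}$. Apart from this bookkeeping, the remainder of the argument is essentially a translation between the NFA-level and the estimator-level viewpoints.
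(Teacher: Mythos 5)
Your overall architecture---reduce the theorem to a correspondence between the states of $\A$ reachable under a label sequence $(\phi,y_0)(u_1,y_1)\cdots(u_t,y_t)$ and the current-state estimate $\Post{X_q}{u_1\cdots u_t}{y_0\cdots y_t}$, then argue by pumping and pigeonhole---is the same route the paper takes (its printed proof now just cites Theorem~8.1 and Proposition~8.1 of \cite{Zhang2020}, but the underlying argument constructs exactly these pair-paths). The problem is that the invariant you yourself flag as the technical core is false as an equality, and its base case already fails. Step \eqref{item1:detectability_NFTS} of Algorithm \ref{alg1:detectability_NFTS} adds, when $|X_y|>1$, \emph{only} the two-element subsets of $X_y$ as $(\phi,y)$-successors of $\diamond$; the singletons $\{x\}\subseteq X_y$ are not reachable under $(\phi,y)$. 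Likewise Step \eqref{item2:detectability_NFTS} adds a singleton successor only when $|\post_u^{y_0y}(q_2)|=1$. So for $|X_{y_0}|\ge 2$ the set $\d(\diamond,(\phi,y_0))$ is the set of two-element subsets of $X_{y_0}$, not $\{Z\subseteq X_{y_0} \mid 1\le |Z|\le 2\}$, and the induction cannot be carried out as stated.

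The repair is local: the two one-sided statements your argument actually consumes are true and provable by the same induction, namely (a) every state in $\d(\diamond,(\phi,y_0)(u_1,y_1)\cdots(u_t,y_t))$ is a nonempty subset of the estimate of cardinality at most two, and the union of these states equals the estimate; and (b) every two-element subset of the estimate is itself a reachable state of $\A$ under that label sequence. Proving (b) requires the one nontrivial case where both elements of the target pair are successors of the \emph{same} state $x$ of the previous estimate, which you handle by padding $\{x\}$ with an arbitrary second element of the previous estimate; this is precisely the $q_i''$ construction in the source proof. Statement (a) gives your pumping direction of (i); statement (b) gives the converse direction of (i) and part (ii). Note that (ii) is \emph{not} immediate from the definition of $T_t$: knowing that every $\A$-state reached after $T_t$ steps is a singleton limit point does not by itself bound the estimate, since two distinct reachable singletons would still yield an estimate of size two; it is exactly (b) that rules this out, by exhibiting a two-element limit point and contradicting (i). Replace the claimed equality by the pair (a)--(b) and your proof goes through; as written, the central lemma is false.
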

\begin{proof}
	\new{The proof of (i) is given in \cite[Theorem 8.1]{Zhang2020}.}
	\old{First, we show (i).
    Note that the number of states of the NFA $\A=(\mathcal{Q},\mathbf{\Delta},\d,\diamond,\emptyset)$ that 
    Algorithm \ref{alg1:detectability_NFTS} returns is no greater than $N:=|X|(|X|-1)/2+|X|+1$.
	
	Assume that the system is not detectable.
	Then there exists an input sequence $\a\in U^*$ of length greater than $N$ and an output sequence $\b\in Y^*$ of length $|\a|+1$ such that $\post_\a^\b(X_0)$ has cardinality bigger than $1$, and for each integer $0\le i <|\a|$, $\post_{\a[0,i]}^{\b[0,i+1]}(X_0)$ is not empty.
    Choose $q_{|\a|}\subset \post_{\a}^{\b}(X_0)$ satisfying $|q_{|\a|}|=2$.
    For all $j=2,\dots,|\a|$, choose nonempty $q_{|\a|-j+1}\subset \post_{\a[0,|\a|-j]}^{\b[0,|\a|-j+1]}(X_0)$ satisfying $|q_{|\a|-j+1}|=2$ if $|\post_{\a[0,|\a|-j]}^{\b[0,|\a|-j+1]}(X_0)|\ge 2$, and for each $x\in q_{|\a|-j+2}$, $(x',\a(|\a|-j+1),x)\in\longrightarrow$ for some $x'\in q_{|\a|-j+1}$.
    Choose $q_{0}\subset X_0$ satisfying $|q_{0}|=2$ if $|X_0|\ge 2$, and for each $x\in q_1$, $(x',\a(0),x)\in\longrightarrow$ for some $x'\in q_{0}$.
    Then $q_{0},\dots,q_{|\a|}$ are states of the NFA $\A$, $q_{i}$ is reachable from $q_{i-1}$ for all integers $0< i\le|\a|$, and there exist $0\le j< k\le|\a|$ such that $q_j=q_k$ by the pigeon-hole principle \cite{Pigeonhole}.
    Hence, $q_{|\a|}$ is reachable from the cycle $q_j,\dots,q_k$, i.e., $q_{|\a|}$ is a limit point of $\A$ with cardinality $>1$. Hence the ``if'' part holds.

	Assume that the system is detectable. 
	We are given an arbitrary state $q$ of the NFA $\A$ and assume that $q$ is reachable from a cycle.
	Then there exist states $q_1,\dots,q_p\in \mathcal{Q}$ such that $q_1$ is reachable from $\diamond$, $q_{i+1}$ is reachable from $q_i$ for all integers $1\le i<p$, $q$ is reachable from $q_p$, and $q_j=q_k$ for some integers $1\le j<k\le p$.
	By the definition of $\A$, there exist $u_1,\dots,u_p\in U$ and $y_0,\dots,y_p\in Y$ such that $(\diamond,(\phi,y_0),q_1), (q_1,(u_1,y_1),q_2),\dots,(q_p,(u_p,y_p),q)\in\d$. 
	Since the system is detectable, for sufficiently large integer $\tilde{n}$, the set $\post_{ \mathbb{U}_{\tilde{n}}}^{\mathbb{Y}_{\tilde{n}}}(X_0)$ has cardinality $\le 1$, where $\mathbb{U}_{\tilde{n}}=u_1\dots u_{j-1}(u_{j}\dots u_{k-1})^{\tilde{n}}u_k\dots u_p$, $\mathbb{Y}_{\tilde{n}}=y_0y_1\dots y_{j-1}\\(y_{j}\dots y_{k-1})^{\tilde{n}}y_k\dots y_p$, and $(\cdot)^{\tilde{n}}$ means the concatenation of $\tilde{n}$ copies of $\cdot$. 
	We also have $\emptyset\ne q\subset \post_{ \mathbb{U}_{\tilde{n}}}^{\mathbb{Y}_{\tilde{n}}}(X_0)$, then $|q|=1$, which shows that (i) holds.
	}

	\new{The proof of (ii) is given in \cite[Proposition 8.1]{Zhang2020}.}
	\old{Now, we show (ii).
    Since the symbolic system is detectable, all limit points of $\A$ are singletons by part (i).
    To prove part (ii), we only need to show that for all input sequences $\a\in U^*$ of length $T_t$ and output sequences $\b\in Y^*$ of length $|\a|+1$, $\d\left( \diamond, (\phi,\b(0))(\a(0),\b(1))\dots (\a(T_t-1),\b(T_t)) \right)$ has cardinality less than or equal to $1$.
    Note that the union of elements of $\d( \diamond, (\phi,\b(0))(\a(0),\b(1))\dots (\a(T_t-1),\b(T_t)) )$ equals $\post_{\a}^{\b}(X_0)$.
	Suppose on the contrary that there exist $u_1,\dots,u_{T_t}\in U$, $y_0,\dots,y_{T_t}\in Y$, $q_0,\dots,q_{T_t},q_0',\dots,q_{T_t}'\subset X$ such that $0<|q_i|\le 2$ and $0<|q_i'|\le 2$ for all integers $0\le i\le T_t$, $|q_{T_t}|=|q_{T_t}'|=1$, $q_{T_t}\ne q_{T_t}'$, $(\diamond,(\phi,y_0),q_0), (\diamond,(\phi,y_0),q_0')\in\d$, $(q_j,(u_{j+1},y_{j+1}),q_{j+1}),(q_j',(u_{j+1},y_{j+1}),q_{j+1}')\in\d$ for all integers $0\le j\le T_t-1$. 
	For each integer $0\le i\le T_t$, choose $x_i\in q_i$, $x_i'\in q_i'$ such that $(x_0,u_1,x_1)\in\longrightarrow$, $(x_0',u_1,x_1')\in\longrightarrow$, $\dots$, $(x_{T_t-1},u_{T_t},x_{T_t})\in\longrightarrow$, $(x_{T_t-1}',u_{T_t},x_{T_t}')\in\longrightarrow$.
    %$(x_j,u_{j+1},x_{j+1}),(x_j',u_{j+1},x_{j+1}')\in\longrightarrow$ for all integers $0\le j\le T_t-1$.
	Denote $q_i'':=\left\{ x_i,x_i' \right\}$, for all $i=0,\dots,T_t$. 
	Then, $|q_{T_t}''|=2$.
    If $|q_0''|=1$, and there exists $x\in X\setminus q_0''$ satisfying $H(x)=H(x_0)$, add at most one such $x$ into $q_0''$, i.e., $q_0'':=q_0''\cup\{x\}$, and in this case $|q_0''|=2$.
    Then, we modify $q_1'',\dots,q_{T_t-1}''$ successively as follows.
    For all integers $j=1,\dots, T_t-1$, if $|q_j''|=1$, and there exists $x\in X\setminus q_j''$ satisfying $H(x)=H(x_j)$, and $(x',u_j,x)\in\longrightarrow$ for some $x'\in q_{j-1}''$, we add at most one such $x$ into $q_j''$, i.e., $q_j'':=q_j''\cup\{x\}$, and in this case $|q_j''|=2$.
    As a result, $q_0'',\dots,q_{T_t}''$ are states of $\A$, and $(\diamond,(\phi,y_0),q_0'')\in\d$, $(q_j'',(u_{j+1},y_{j+1}),q_{j+1}'')\in\d$ for all integers $0\le j\le T_t-1$.
    By the definition of transient period, $q_{T_t}''$ is a limit point of $\A$. But $q_{T_t}''$ has cardinality $2$, which contradicts that all limit points of $\A$ are singletons.
    Hence, (ii) holds. 	
	}
\end{proof}

%\subsection{The Detection NFA}
%---------------------------------------------------------
%We use Algorithm \ref{alg1:detectability_NFTS} and Theorem \ref{thm1:detectability_NFTS} to construct a detection NFA $\A$ for a detectable symbolic system. 
%Given a detectable symbolic system $(X,X_0,U,\longrightarrow,Y,h)$ and feeding it into Algorithm  \ref{alg1:detectability_NFTS}, one obtains an NFA $(\mathcal{Q},\mathbf{\Delta},\d,\diamond,\emptyset)$. Denoted by $T_t$, the  transient period of the NFA, the following proposition holds.
%\begin{proposition}
%	\label{prop1_detectability_NFTS}
%	Given a detectable symbolic system $S = (X,X_0,U,\longrightarrow,Y,h)$,
%	let $T_t$ be the transient period of the NFA $\A = (\mathcal{Q},\mathbf{\Delta},\d,\diamond,\emptyset)$
%	that Algorithm \ref{alg1:detectability_NFTS}
%	returns after receiving the system $S$. Then for all input sequences $\a\in U^*$ of length $\ge T_t$ and
%	output sequences $\b\in Y^*$ of length $|\a|+1$, $\post_{\a}^{\b}(X_0)$ has cardinality $\le 1$.
%\end{proposition}
%\begin{proof}
%	The proof is given in \cite[Proposition 8]{Kuize_IFAC_detectability_2017}.
%\end{proof}
%From Theorem \ref{thm1:detectability_NFTS}(ii), we know that, for a detectable symbolic system, the corresponding NFA that Algorithm \ref{alg1:detectability_NFTS} returns can detect the current state of $S$.
%In detail, the detector returns the state of the symbolic system at any time after $T_t$.

\subsection{Controller Synthesis and Refinement}
%---------------------------------------------------------
Consider a \newagain{detectable} symbolic model $\sym{S}$\oldagain{ and assume it is detectable}.
We show how to \oldagain{use $\A$ to }design a detector for \oldagain{$\sym{S}$ starting from time index $T_t$}\newagain{it}.
Let $\A := (\mathcal{Q}, \mathbf{\Delta}, \delta, \{\diamond\}, F)$ be the NFA resulting from Algorithm \ref{alg1:detectability_NFTS} with $\sym{S}$ as input. 
We introduce the detector system as follows:
\begin{equation}
	\label{eq_detector_system}
	D := (X_D, X_{D,0}, \sym{U} \times \sym{Y}, \underset{D}{\longrightarrow}, Y_D, H_D),
\end{equation}
where
\begin{itemize}
	\item $X_D := \sym{X} \times \mathcal{Q} \times \{0,1\}$;
	\item $X_{D,0} := \{ (x_q, \diamond, 0) \;\vert\; x_q \in X_q \}$;
	\item $\underset{D}{\longrightarrow} := \{((x_q, q, 0),(u_q,y_q),(x_q', q', 1)) \;\vert\; (x_q,u_q,x_q') \in \underset{q}{\longrightarrow} \;\land\; (q,(u_q,y_q),q') \in \delta \;\land\; \vert q' \vert \leq 1 \} \cup \{((x_q, q, f),(u_q,y_q),(x_q', q', f)) \;\vert\; (x_q,u_q,x_q') \in \underset{q}{\longrightarrow} \;\land\; (q,(u_q,y_q),q') \in \delta \;\land\; (\vert q' \vert > 1 \;\lor\; f = 1 )\}$;
	\item $Y_D := \sym{X} \cup \{p\}$, where $p$ is a dummy symbol denoting incomplete detection of the state of $\sym{S}$; and
	\item $H_D$ is defined as follows:
	\begin{equation*}
		H_D((x_q, q, f)) := 
			\begin{cases}
				x_q 	& f = 1\\
				p 		& f = 0
			\end{cases}
			.
	\end{equation*}
\end{itemize}

\begin{remark}
	\label{rmk_detector_output_after_Tt}
	After $T_t$ sampling periods of providing inputs and observations of $\sym{S}$ to $D$, we have that:
	\begin{itemize}
		\item[(1)] $H_D(x_D) \neq p$, for any $x_D \in X_D$,
		\item[(2)] $H_D(x_D)$ provides the detected current state of $\sym{S}$, and
		\item[(3)] $B_{ext}(D \triangleleft (Z \circ S_\tau)) = B_{int}(\sym{S})$.
	\end{itemize}	
\end{remark}

%In the following Theorem, we show how the detected system behave in comparison to the symbolic model.
%\begin{theorem}
%	\label{thrm_meth2_behaviors_equivalence_with_detector}
%	Give a total detectable symbolic model $\sym{S}$ that follows the construction in \eqref{eq_abstract_output_system} for a concrete system $S_\tau$ as given in \eqref{eq_concrete_output_system} having $S_\tau \preccurlyeq_{Z} \sym{S}$, with $Z \in Y_\tau \times \sym{Y}$ as an OFRR.
%	Consider a detector $D$ as defined in \eqref{eq_detector_system} for $S_q$.
%	After $T_t$ sampling periods, the following holds:
%	\begin{equation}
%	\nonumber
%	B_{ext}(D \triangleleft (Z \circ S_\tau)) = B_{int}(\sym{S}).
%	\end{equation}
%\end{theorem}
%\begin{proof}
%	Using Remark \ref{rmk_detector_output_after_Tt}, the proof is straightforward.
%\end{proof}

A controller \old{$C_q := (X_{\sym{C}}, X_{\sym{C},0}, U_{\sym{C}}, \underset{C_q}{\longrightarrow}, Y_{\sym{C}}, H_{\sym{C}})$}\new{$C_q$, as defined in Definition \ref{def_controller},} can be synthesized to solve $(\sym{S}, \sym{\psi})$, as discussed in \ref{ssec_controller_synthesis}.
\new{Then, using Theorems \ref{thrm_external_behavioral_inclusion} and Remark \ref{rmk_detector_output_after_Tt}(3), $C_q$ is refined using the detector system $D$ and the static map $Z$ as interface, as shown in Fig. \ref{fig_detector_method}.}
We only need to encapsulate $C_q$, in the following system $C_m$, to handle the detection signal $p$:

\begin{equation}
	C_m := (X_{C_m}, X_{C_m,0}, U_{C_m}, \underset{C_m}{\longrightarrow}, Y_{C_q} \cup \{ \kappa \}, H_{C_m}),
\end{equation}
where
\begin{itemize}	
	\item $\kappa$ is a dummy symbol for unavailability of control inputs;
	\item $X_{C_m} := X_{C_q} \cup \{0,1\}$;
	\item $X_{C_m,0} := \{ (x_{C_q},0) \;\vert\; x_{C_q} \in X_{C_q,0} \}$;
	\item $U_{C_m} := U_{C_q} \cup \{p\}$, where $p$ is the symbol from \eqref{eq_detector_system};
	\item $\underset{C_m}{\longrightarrow} := \{((x_{C_q}, 0),u_{C_m},((x'_{C_q}, 1)) \;\vert\; (x'_{C_q},u_{C_m},x'_{C_q}) \in \underset{C_q}{\longrightarrow} \;\land\; u_{C_m} \neq p \} \cup \{((x_{C_q}, 1),u_{C_m},((x'_{C_q}, 1)) \;\vert\; (x'_{C_q},u_{C_m},x'_{C_q}) \in \underset{C_q}{\longrightarrow} \;\land\; u_{C_m} \neq p \} \cup \{((x_{C_q}, 0),u_{C_m},((x_{C_q}, 0))  \;\vert\; u_{C_m} = p \}$; and
	\item $H_{C_m}((x_{C_q}, f)) :=
			\begin{cases}
				H_{C_q}(x_{C_q})	& f = 1\\
				\kappa 				& f = 0
			\end{cases}.$
\end{itemize}

%\begin{remark}
%	$C_m$ is no more than a container of $C_q$ and after it generates the first non-dummy input (i.e., $H_{C_m}((x_{C_q}, f)) \neq \kappa$), the behaviors of $C_q$ and $C_m$ coincide.
%\end{remark}

\oldagain{Now, we take care of}\newagain{To handle} the time period $[0, T_t]$, \old{ preceding the first detection.
The system needs to be maintained in the domain of the main controller.}
%We handle the problem as introduced in Subsection \ref{ssec_observers_blind_period}.
\oldagain{One simply needs to}\newagain{we need to find} a static open-loop controller $C_p$ that solves $(S_{\tau}^{(X_p)}, \psi_p)$, where
\begin{equation}
	\nonumber
	\psi_p := 
		\begin{cases} 
			\nonumber
			\Safe_{[0,T_t]}(H_{\tau}( X_D )), 	& \mbox{if } X_p \subseteq  \mathcal{D}(\sym{C})\\ 
			
			\nonumber
			\Reach_{[0,T_t]}(H_{\tau}(X_D )), 	& \mbox{if } \mathcal{D}(\sym{C}) \subset  X_p\\ 
		\end{cases},
\end{equation}
and $X_D := \underset{\sym{x} \in \mathcal{D}(\sym{C})}{\bigcup}  \sym{x}$.
$C_p$ encapsulates at least one control input sequence $\tilde{u}_p \in \sym{U}^{T_t}$ that results in an output sequence $\tilde{y}_p \in Y^{T_t}_{\tau}$ such that $\tilde{y}_p \in \psi_p$.
One direct approach to find $\tilde{u}_p$ is via an exhaustive search in $\sym{U}^{T_t}$.

%%%%%%%%%%%%%%%%%%%%%%%%%%%%%%%%%%%%%%%%%%%%%%%%%%%%%%%%%%%%%%%%%%%%%%%%%%%%%%%%%%%%%%%%%%%%%%%%%%%	
\section{Case Studies}
%%%%%%%%%%%%%%%%%%%%%%%%%%%%%%%%%%%%%%%%%%%%%%%%%%%%%%%%%%%%%%%%%%%%%%%%%%%%%%%%%%%%%%%%%%%%%%%%%%%	
\label{sec_examples}
We provide different examples to demonstrate the practicality and applicability of the presented methodologies.
%We consider control systems with dynamics given as in Definition \ref{def_control_system}.
%For each example, we construct symbolic models, synthesize symbolic controllers and refine them as presented in each methodology.
Implementations of all examples are done using available open-source toolboxes and some customized \CPP{} programs developed for each methodology.
All closed-loop simulations of refined controllers are done in \MATLAB{}.
We use a PC (Intel Xeon E5-1620 3.5 GHz and 32 GB RAM) for all the examples.

In all of the examples, given a concrete system $S_\tau$\old{, as defined in \eqref{eq_concrete_output_system}}, we construct a symbolic model $S_q$\old{, as defined in \eqref{eq_abstract_output_system}}.
We use tool \SCOTS{} \cite{SCOTS} to construct\old{ the underlying state-based system} $S_{q,X_q}$.
%Notice that \SCOTS{} is the only tool that can construct state-based symbolic models using FRRs.
\SCOTS{} can only construct $S_{q,X_q}$ with an FRR $Q$ in the form:
\begin{align}
	\nonumber
	Q &:= \{ (x_\tau,x_q) \;\vert\; x_\tau \in X_\tau \cap x_q \land x_q \in X_q \},
\end{align}
where $X_q$ is a partition on $X_\tau$ constructed by a uniform quantization parameter $\eta \in \R^n$.
Declaring $\eta$ is sufficient to define $X_q$ and $Q$.
$X_q$ is a set of polytopes of identical shapes forming a partition on $X_\tau$.
This is a limited structure in constructing $\sym{S}$ that we must comply with\old{ in all of the examples}.
Another restriction imposed by \SCOTS{} is the need to use easily invertible output maps $h$ such that $H_q^{-1}(y_q)$, $y_q \in Y_q$, complies with the hyper-rectangular structure of $X_q$ needed by \SCOTS.
\old{
In all examples, the bounds and quantization parameters for states and inputs sets used to construct $S_{q,X_q}$ are selected based on initial simulations of $S_{\tau}$ using \MATLAB.}

\subsection{Output-Feedback Symbolic Control using Games of Imperfect Information}
%---------------------------------------------------------	
%We consider two different examples to illustrate the methodology presented in Section \ref{meth_game_based}.
We consider one example to illustrate the methodology presented in Section \ref{meth_game_based}.
%We start with a small example of a robot with deterministic dynamics to highlight some important remarks related to the methodology.
%Then, we consider one more complex example.
%For both examples, we use tool \SCOTS{} to construct a symbolic model $\sym{S}$.
In this example, after constructing $S_q$, tool \ALPAGA{} \cite{ALPAGA} is used to construct the knowledge-based game $\sym{S}^\mathcal{K}$ and synthesize a winning strategy $\bar{\mathcal{C}}^\mathcal{K}$.
%The interface between \SCOTS{} and \ALPAGA{} is developed in \CPP{}.
%\ALPAGA{} also synthesizes the winning memoryless strategy $\bar{\mathcal{C}}^\mathcal{K}$.
We refine the strategy as previously depicted in Fig. \ref{fig_games_method}.
%First, we construct the knowledge-based symbolic controller $\sym{C}$ as defined in \eqref{eq_controller_game}.
%Then, we refine the symbolic controller again with the static output map $Z$ to work with the concrete system $S_\tau$ benefiting from Theorem \ref{thrm_external_behavioral_inclusion}.

%\subsubsection{The DC Motor Example}
%\label{ex_game_dcmotor}
Consider the following dynamics of a DC motor:
\begin{equation}
	\nonumber
	\begin{bmatrix} 
	\dot{x_1}\\
	\dot{x_2}\\
	\dot{x_3}
	\end{bmatrix} 
	=  
	\begin{bmatrix} 
	\frac{-R}{L}& 0& \frac{-K}{L}\\ 
	0& 0& 1\\
	\frac{K}{J}& 0& \frac{-b}{J}
	\end{bmatrix} 
	\begin{bmatrix} 
	x_1\\
	x_2\\
	x_3
	\end{bmatrix} 	
	+
	\begin{bmatrix} 
	\frac{1}{L}\\
	0\\
	0
	\end{bmatrix} 	
	\upsilon,
\end{equation}
where $x_1$ is the armature current, $x_2$ is the rotation angle of the rotor, $x_3$ is the angular velocity of the rotor, $\upsilon$ is the input voltage, $L := 5\times 10^{-2}$ is the electric inductance of the motor coil, $R := 5$ is the resistance of the motor coil, $J := 5\times 10^{-4}$ is the moment of inertia of the rotor, $b := 1\times 10^{-2}$ is the viscous friction constant, and $K := 0.1$ is both the torque and the back EMF constants.
We consider a state set $X_\tau := [-0.6,0.6]\times[-0.3,0.3]\times[-4.8,4.8]$ and an input set $U_\tau := [-4.25,4.25]$.
One sensor is attached to the motor's rotor and it can measure $x_2$. 
Hence, the output is as follows:
\begin{equation}
	\nonumber
	y := 
	\begin{bmatrix} 
	0& 1& 0
	\end{bmatrix} 	
	\begin{bmatrix} 
	x_1\\
	x_2\\
	x_3
	\end{bmatrix},
\end{equation}
and, consequently, $Y_\tau := [-0.3,0.3]$.
We consider a reachability specification with a target set $T := [0.18, 0.3]$.

To construct $S_q$, we consider an abstract output spaces $\sym{Y} := \{y_0,y_2,y_3,\cdots,y_{30}\}$ that forces a partition on $Y_\tau$.
Here, each $y_q \in Y_q$ represents one subset in $Y_\tau$ from 31 subsets by dividing $Y_\tau$ equally using a quantization parameter $0.02$.
More precisely, we use an OFRR:
\begin{equation} 
	\nonumber
	Z := \{ (y_\tau, y_q) \in Y_\tau \times Y_q \;\vert\; y_q = y_{ \lfloor (y_\tau+0.3)/0.02 \rfloor }\}.
\end{equation}
With such a $Y_q$, the abstract specification is to synthesize a controller to reach any of the symbolic outputs $y_{24}, y_{25}, \cdots, y_{30}$.
To construct $S_{q,X_q}$, we use the following parameters in \SCOTS: a state quantization vector $(0.3,0.02,1.6)$, an input quantization parameter $0.75$, and a sampling time $\tau := 0.05$ seconds.
\SCOTS{} constructs $S_{q,X_q}$ in 2 seconds with $X_q$ having $1085$ elements (each representing a hyper-rectangle in $X_\tau$) and $S_{q,X_q}$ having $88501$ transitions.
We then define $\sym{H}$ as follows:
\begin{equation} 
	\nonumber
	\sym{H}((x_{q,1}, x_{q,2}, x_{q,3})) := y_{ \lfloor (x_{q,2}+0.3)/0.02 \rfloor},
\end{equation}
which satisfies condition \eqref{eq_alignment_preservation_condition}.
We pass $\sym{S}$ to \ALPAGA{} which takes around $24$ hours to construct $S_q^{\mathcal{K}}$ and synthesize $\bar{C}^{\mathcal{K}}$, which is then refined as discussed in Fig. \ref{fig_games_method}.

\begin{figure}
	\centering
	\includegraphics[width=0.6\textwidth]{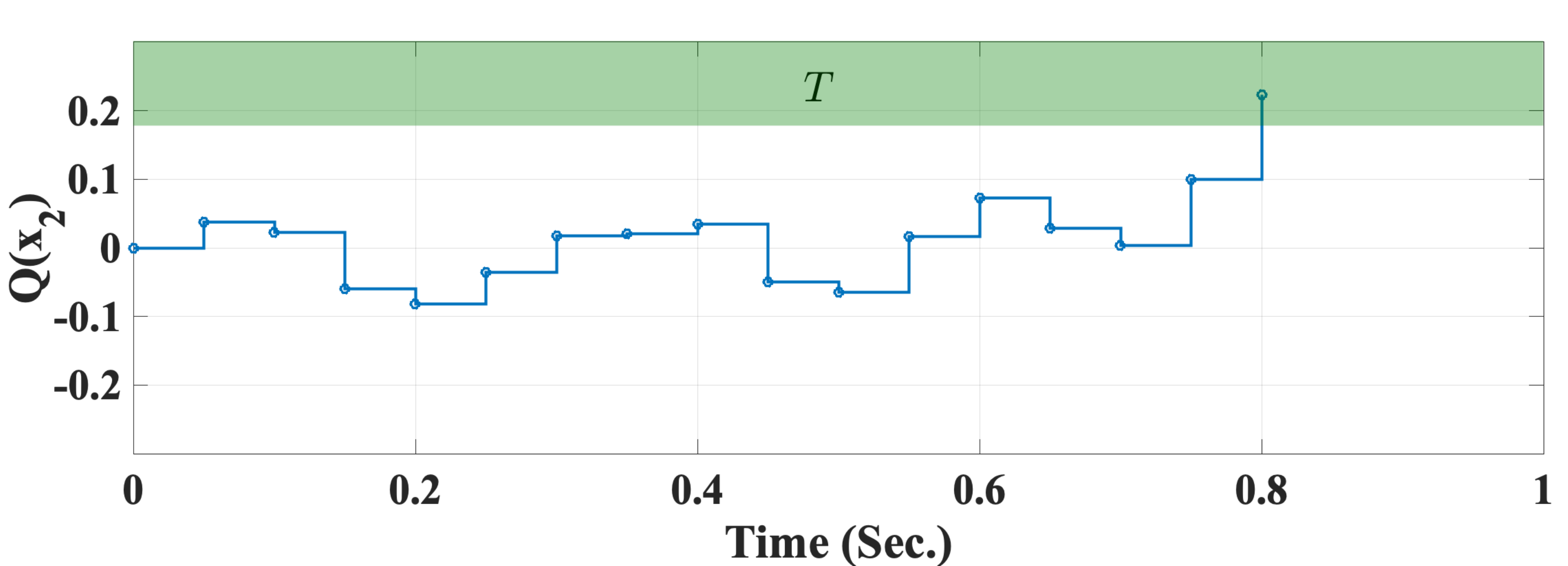}
	\caption{The output of the DC motor example.}	
	\label{fig_ex_game_dcmotor}
\end{figure}

The closed-loop behavior is simulated in \MATLAB{} and the output is depicted in Fig. \ref{fig_ex_game_dcmotor}.
%The refined controller is simply the knowledge-tracking system in \eqref{eq_controller_game} and $Z = 1_{Y_\tau}$ as a static map .
The target region is highlighted with a green rectangle.
The actual initial state of the system is set to $(0,0,0)$, which is of course unknown to the controller.
%The initial knowledge set $s_0$, which known to the controller, is selected to be the whole state space.
%The simulation ends when the target is reached.
%The experiment is repeated several times and each time we choose random inputs from the set of control inputs offered by the controller.

\subsection{Output-Feedback Symbolic Control using Observers}
%---------------------------------------------------------	
\label{ex_observer_dint}
As an example for the methodology presented in Section \ref{mthd_concrete_domain_observers}, consider the double-integrator model:
\begin{equation}
	\label{eq_dcmotor} 
	\begin{bmatrix} 
		\dot{x_1}\\
		\dot{x_2}
	\end{bmatrix} 
	=  
	\begin{bmatrix} 
		0& 1\\ 
		0& 0
	\end{bmatrix} 
	\begin{bmatrix} 
		x_1\\
		x_2
	\end{bmatrix} 	
	+
	\begin{bmatrix} 
		0\\
		1
	\end{bmatrix} 	
	\upsilon,
\end{equation}
where $(x_1,x_2) \in [-1,1]\times[-5,5]$, and $\upsilon \in [-10, 10]$.
%We fix $\epsilon$ and design the observers using MATLAB.
%Considered as perturbation error, $\epsilon$ is used to construct a symbolic model $\sym{\hat{S}}$ with its inflated state set using the tool SCOTS \cite{SCOTS}.
%Controller synthesis is also done using SCOTS.
%The control system, the synthesized symbolic controller and the observer are simulated in closed-loop using MATLAB.
The output of the system is seen through a single sensor monitoring $x_1$, i.e., $y = x_1$.
%We then have an output that is a projection of the state  $ x = [x_1 x_2]^T $ on the dimension of the first state element:
We consider the following LTL specification:
\begin{equation}
	\nonumber
	\psi_\tau = 
	\square\Diamond(\textsf{Target}1) 
	\wedge 
	\square\Diamond(\textsf{Target}2),
\end{equation}
where \new{$\Diamond(T)$ denotes the reachability requirement that the output of $\sym{S}$ visits, at least once, some elements in $T$,} $\textsf{Target}1 := [0.65, 1.0]$ and $\textsf{Target}2 := [-1,-0.65]$ are two subsets of $Y_\tau := [-1,1]$.

We first design an observer for the system.
We choose a precision value of $\epsilon := 0.001$ and design a Luenberger observer using pole placement.
It is then embedded in an observer system $\mathcal{O}$ that fulfills condition \eqref{eq_meth1_bounded_observ}.
System $\mathcal{O}$ is needed in order to refine the designed controller as depicted in Fig. \ref{fig_observer_method}.

To construct $S_q$, we set $Y_q := \{ y_0, y_1, \cdots, y_{50} \}$ forcing a partition on $Y_\tau$ such that each $y_q \in Y_q$ represents one subset of $Y_\tau$ from 51 subsets by dividing $Y_\tau$ equally using a quantization parameter $0.04$.
More precisely, use an OFRR:
\begin{equation} 
	\nonumber
	Z := \{ (y_\tau, y_q) \in Y_\tau \times Y_q \;\vert\; y_q = y_{ \lfloor (y_\tau + 1)/0.04 \rfloor } \}.
\end{equation}
Then, we use \SCOTS{} to construct $S_{q,X_q}$ with a sampling time $\tau := 0.05$ seconds, a state quantization vector $(0.04, 0.01)$, and an input quantization parameter $1.0$.
Error $\epsilon$ is used as a state error parameter in \SCOTS{} to emulate the inflation discussed in Subsection \ref{ssec_observers_synth_refine}.
\SCOTS{} constructs $S_{q,X_q}$ in $39$ seconds and it has $5.59325 \times 10^{7}$ transitions.
We then have an output map defined as follows:
$\sym{H}((x_{q,1}, x_{q,2})) := y_{ \lfloor (x_{q,1}+1)/0.04 \rfloor}$,
which satisfies condition \eqref{eq_alignment_preservation_condition}.
With the above setup, we can use the results of the observer-based methodology and refine any synthesized controller for $S_{q,X_q}$ using $\mathcal{O}$ and $Q$.

We continue with controller synthesis and refinement.
Since \SCOTS{} requires specifications over symbolic states, the corresponding symbolic target state sets are computed by $Q(H_\tau^{-1}(\textsf{Target}1))$ and $Q(H_\tau^{-1}(\textsf{Target}2))$, respectively.
The controller is synthesized in $24$ seconds.
The set of possible control-actions for the first sampling period are identified as discussed in Subsection \ref{ssec_observers_blind_period}.
The input $\upsilon := 0$ is selected for the first sampling period.
%The synthesized controller is refined using $Q$ and $\mathcal{O}$ as depicted in Fig. \ref{fig_observer_method}.

\begin{figure}	
	\centering
	\includegraphics[width=0.8\textwidth]{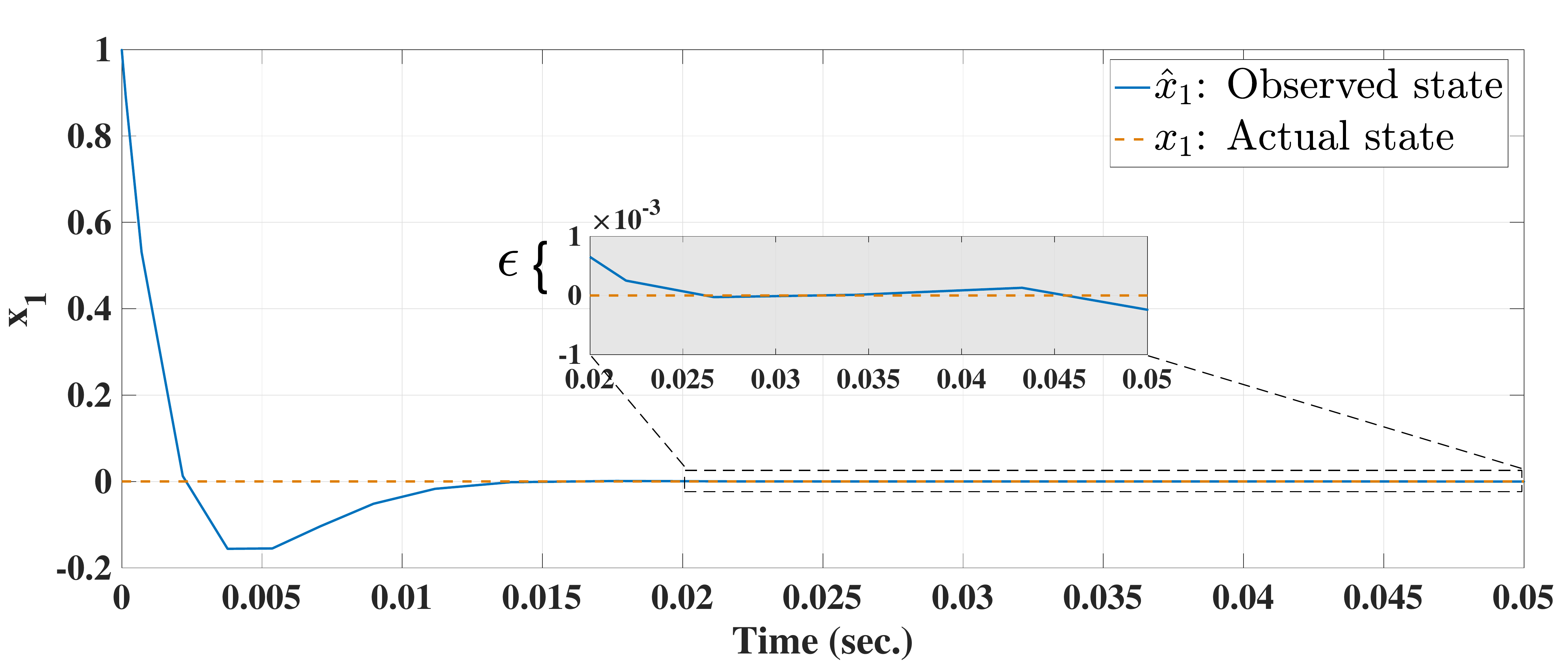}
	\caption{
		State variable $x_1$ and its observed version $\hat{x}_1$ of the double-integrator example during the first sampling period.
	}	
	\label{fig_ex_observer_dint_x_xhat_first_tau}
\end{figure}

\begin{figure}	
	\centering
	\includegraphics[width=0.7\textwidth]{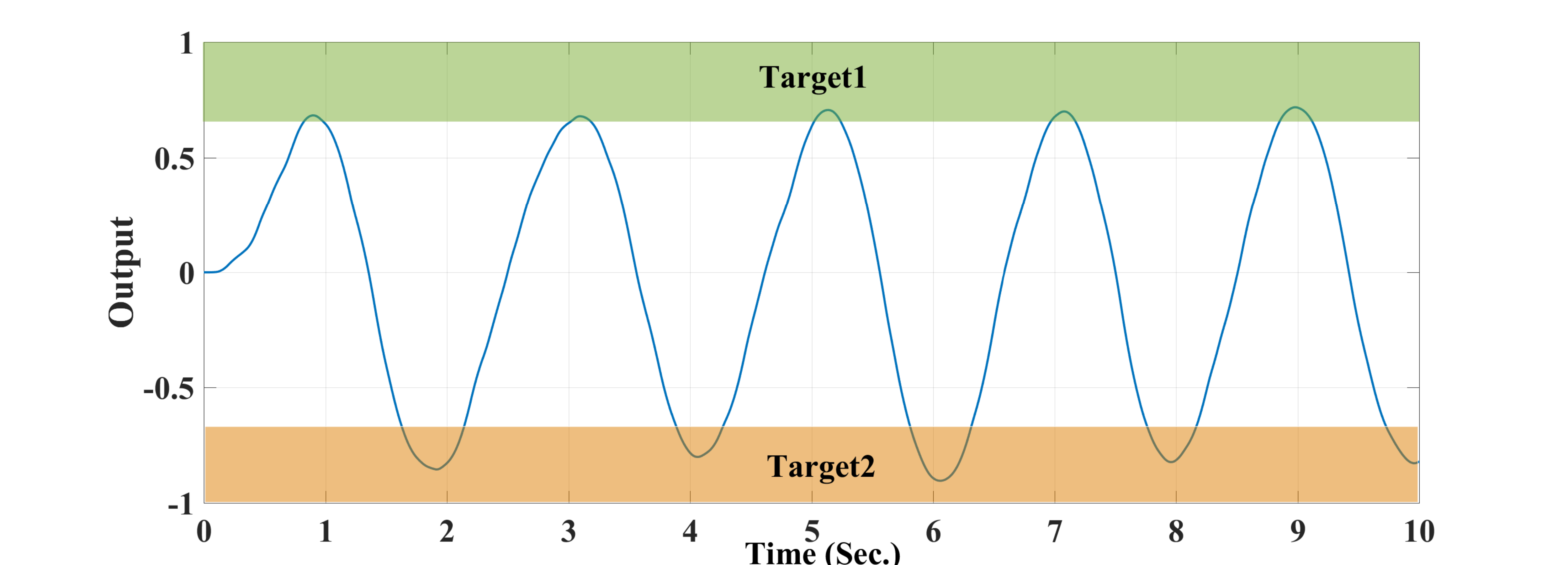}
	\caption{
		Output of the double-integrator example.
	}	
	\label{fig_ex_observer_dint_states}
\end{figure}

\begin{figure}
	\centering
	\includegraphics[width=0.8\textwidth]{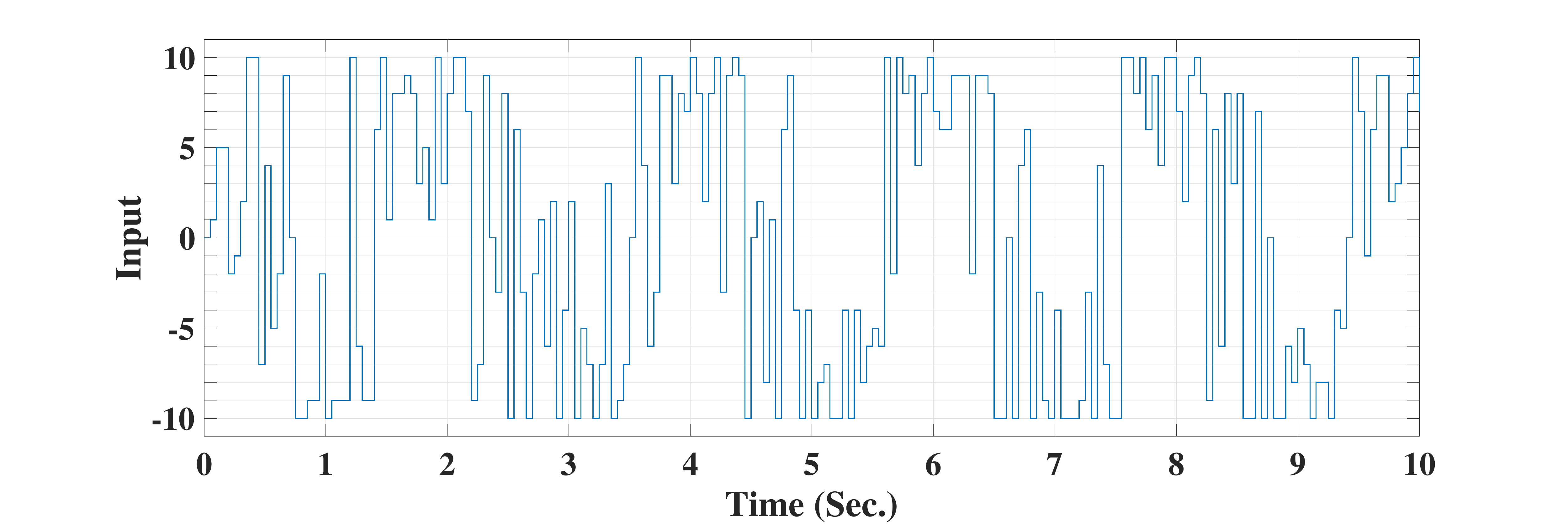}
	\caption{
		Control inputs applied to the double-integrator system during the closed-loop simulation.
	}	
	\label{fig_ex_observer_dint_input}
\end{figure}

We simulate the closed-loop in \MATLAB{} with $(0,0)$ and $(1,1)$ as initial states of the system and observer, respectively.
At the first sampling period, the controller applies input $\upsilon := 0$ to keep the system in the controller's domain.
%Figure \ref{fig_ex_observer_dint_x_xhat_first_tau} shows how the observer converges towards the actual state value \new{of $x_1$} during the first sampling \new{period}.
From the second sampling period, we switch to the symbolic controller.
Figure \ref{fig_ex_observer_dint_states} depicts the output $y$ and Fig. \ref{fig_ex_observer_dint_input} depicts the applied inputs.

\subsection{Output-Feedback Symbolic Control using Detectors}
%-------------------------------------------------------------------------	
\label{ex_detector_pendulum}
Now, we provide an example to illustrate the methodology presented in Section \ref{meth_detector}. Consider a pendulum system \cite{G_POLA_Automatica_2008}:
\begin{equation} 
	\nonumber
	\begin{bmatrix} 
		\dot{x}_1\\ \dot{x}_2\\ 
	\end{bmatrix} 
	=  	
	\begin{bmatrix} 
		0 & x_2\\ 
		-\frac{g}{l}\sin(x_1) & -\frac{k}{m}x_2\\ 
	\end{bmatrix} 		
	+
	\begin{bmatrix} 
		0\\ 
		1\\ 
	\end{bmatrix} 			
	u,\;\;
	y
	=
	\begin{bmatrix} 
		1 & 0\\ 
	\end{bmatrix}
	\begin{bmatrix} 
		x_1\\ x_2\\ 
	\end{bmatrix},
\end{equation}
where $x_1 \in [-1,1]$ is the angular position, $x_2 \in [-1,1]$ is the angular velocity, $u \in [-1.5,1.5]$ is the input torque, $g := 9.8$ is the gravitational acceleration constant, $l := 5$ is the length of the pendulum's massless rod, $m := 0.5$ is a mass attached to the rod, $k := 3$ is the friction's coefficient, and $y \in [-1,1]$ is the measured angular position.
We consider designing a symbolic controller to enforce the angle of the rod to infinitely alternate between two regions $\theta_1 := [0.3,0.4]$ and $\theta_2 := [-0.4,-0.3]$.
When it reaches one region, the pendulum should hold for 10 consequent time steps.

To construct $S_q$, we set $Y_q := \{ y_0, y_1, \cdots, y_{50} \}$ forcing a partition on $Y_\tau$ such that each $y_q \in Y_q$ represents one subset in $Y_\tau$ from 51 subsets by dividing $Y_\tau$ equally using a quantization parameter $0.04$.
More precisely, we use an OFRR:
\begin{equation} 
	\nonumber
	Z := \{ (y_\tau, y_q) \in Y_\tau \times Y_q \;\vert\; y_q = y_{ \lfloor (y_\tau + 1)/0.04 \rfloor } \}.
\end{equation}
$S_{q,X_q}$ is constructed using\old{ \SCOTS{} with} the following parameters: state quantization vector $(0.4, 0.4)$, input quantization parameter $0.15$, and a sampling time $2$ seconds.
The resulting $S_{q,X_q}$ has 25 states and 525 transitions.
We then have an output map defined as follows:
$\sym{H}((x_{q,1}, x_{q,2})) := y_{ \lfloor (x_{q,1}+1)/0.04 \rfloor}$,
which satisfies condition \eqref{eq_alignment_preservation_condition}.
\old{With the above setup, }We \old{can}\new{then} use the results\old{ of the detector-based methodology} \new{from Section \ref{meth_detector}} and refine any synthesized controller for $S_q$.

\begin{figure}
	\centering
	\includegraphics[width=0.8\textwidth]{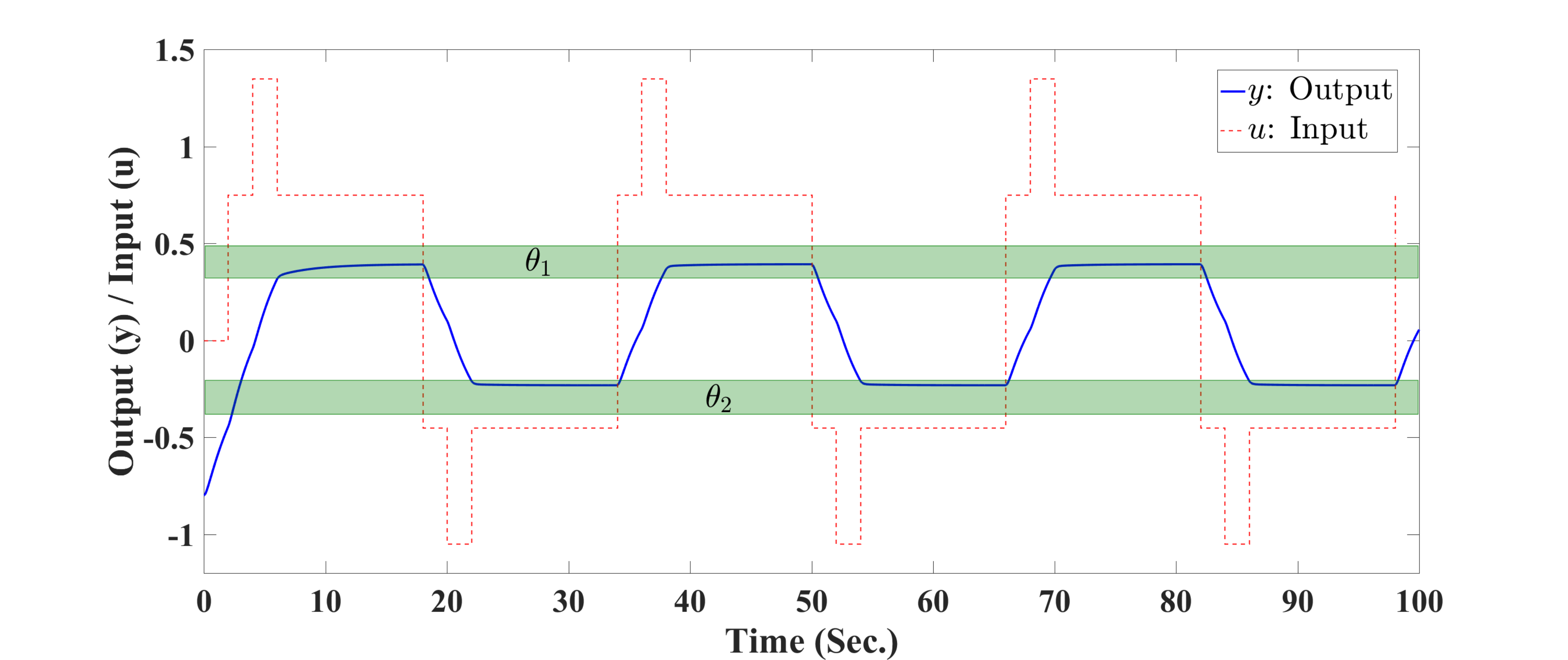}
	\caption{Closed-loop simulation of Pendulum example.}
	\label{fig_ex_detector_pendulum_io}
\end{figure}

We implemented Algorithm \ref{alg1:detectability_NFTS} in \CPP{} and ran it with $S_q$ as input.
NFA $\mathcal{A}$ has 60 states and 1485 transitions.
System $\sym{S}$ is detectable with $T_t = 1$.
%Later, we use NFA $\mathcal{A}$ as a detector to refine the synthesized symbolic controller.
A controller is synthesized using \SCOTS{} and map $H_{\tau}^{-1}$ is used to construct a state-based specification.
The controller is refined using $Z$ and the detector\old{ as depicted in Fig. \ref{fig_detector_method}}.
\old{We also identify the set of input sequences that can be applied in open-loop before the detector is able to detect the state of the symbolic model.
We continuously apply input $0$ until the state is detected.
We simulate the closed-loop with an initial state $(-0.8, 0)$.
Once the detector identifies current state, we switch to the synthesized controller to enforce the specifications.}
A closed-loop simulation is depicted in Fig. \ref{fig_ex_detector_pendulum_io}.

%%%%%%%%%%%%%%%%%%%%%%%%%%%%%%%%%%%%%%%%%%%%%%%%%%%%%%%%%%%%%%%%%%%%%%%%%%%%%%%%%%%%%%%%%%%%%%%%%%%
\section{Related Works}
%%%%%%%%%%%%%%%%%%%%%%%%%%%%%%%%%%%%%%%%%%%%%%%%%%%%%%%%%%%%%%%%%%%%%%%%%%%%%%%%%%%%%%%%%%%%%%%%%%%
\label{sec_related_works}
%Output-feedback control is well studied in optimal and adaptive control domains \cite{Lavretsky2013,Spooner_2002}.
%This includes, but not limited to, the design of estimators/observers to find full-state information from available ones \cite{Lavretsky2013,Liberzon_2008}, sliding mode control \cite[and references therein]{Nguyen_2010}, LMI \cite[and references therein]{JunXu_2011}.
%For nowadays modern CPS, such techniques can only handle the stabilizability problem or provide ad-hoc solutions to safety problems.
%They are not suitable for complex formal specifications.
 
%Results from computer science that are based on formal methods are limited and target only finite state models.
\old{There already exist results in the computer science literatures for formal synthesis based on output but only for finite-state models.
These results include controller synthesis for partial-observable parity games \cite{ARNOLD_2003}, counter-example based optimal schedulers under partial information \cite{Giro_2012}, omega-regular two player games with imperfect information \cite{Chatterjee2006_omegareg_games}, hidden Markov models \cite{Zhang_2005}, mixed heuristics and model checking for the synthesis of controllers based on outputs \cite{Arxiv_ChatterjeeCGK14}, and more recently, by approximately solving dynamic programs for partially observable Markov decision processes \cite{Lesser_2017}, or by utilizing estimators of finite systems \cite{Ehlers_HSCC15}.}
 
The work in \cite{Abate_2015_OutFeedback} provides a symbolic control approach based on outputs.
It is limited to partially observable linear time-invariant systems, as long as the system is detectable and stabilizable.
\old{Additionally, it relies on approximate simulation relations which, as argued by several examples in \cite[Section IV]{G.Reissig_etal_FRR_TAC}, have several drawbacks.}
%(1) refined controllers require exact state information of original system,
%(2) abstractions have to building blocks inside the refined controllers and, consequently,
%(3) a large amount of memory is required.
Some extensions are made in \cite{Lesser_2015} for probabilistic safety specifications and in \cite{Lesser_2016} for nonlinear systems.
The latter is limited to a class of feedback-linearizable systems and the results are limited to safety.
 
The work in \cite{Masashi_NonLinearHybSystems2018} proposes designing symbolic output-feedback controllers for control systems.
It designs observers induced by abstract systems and obtain output-feedback controllers similar to the methodology we presented in Section \ref{meth_detector}.
The authors, unlike our approach, require the availability of a controller for the abstract system when the state of the control system is fully measured.
Then, they reduce the controller to work with the original system with the designed observer.
 
%The work in \cite{GPola_CDC17} considers designing controllers that can not access the state of the control system. Such controllers can only access a quantized measurement of its output.
%The specification is expressed in terms of regular languages which is translated to finite automata and used later for controller synthesis.
%We use, mainly, LTL which is more expressive for cyber-physical systems.
%Additionally, both of the works in \cite{Masashi_NonLinearHybSystems2018,GPola_CDC17} use variants of alternating simulation relation which, as discussed earlier in Section \ref{sec_definitions}, limit the practicality of implementing the designed controllers.
 
\old{The closest works to this article are in \cite{GPola_CDC17,PolaOutSymControl2019,ApazaObserverBased,RupakOutSymControl2020}.}
In \cite{GPola_CDC17,PolaOutSymControl2019}, the authors use \emph{state-based} strong alternating approximate simulation relations to relate concrete systems with their abstractions.
%Then, a partition is constructed on the output space.
They make sure that a partition constructed on the output space imposes a partition on the state space, which allows designing output-based controllers using \emph{state-based} symbolic models.
\newagain{The work in \cite{PolaOutSymControl2019} is different from ours in three main directions: (1) our work introduces OFRRs as general relations between the outputs of symbolic models and original systems, (2) we utilize FRRs which avoid the drawbacks of approximate alternating simulation relations (see \cite[Section IV]{G.Reissig_etal_FRR_TAC} for a comparison between both types of relations), and (3) we introduce multiple practical methodologies that realize the framework we introduced; in Sections \ref{meth_game_based}, \ref{mthd_concrete_domain_observers}, and \ref{meth_detector}.}
In \cite{ApazaObserverBased}, the authors design observers for original systems.
Then, the observed \emph{state-based} systems are related, via FRRs, to \emph{state-based} symbolic models that are used for controller synthesis.
%This allows designing symbolic controllers that enforce specifications on original systems.
Unlike our work, the behavioral inclusion from original closed-loop to abstract closed-loop is shown in state-based setting.
Also, the specifications are given over the states set.
%The controllers are designed in an automata-theoretic approach where they are extracted from product systems between symbolic models and systems representing the specifications.
%The use of ASR imposes restrictions on the refinement of the designed controller as discussed in \cite{G.Reissig_etal_FRR_TAC}.
%In our cases, we extended FRR to ensure the simplicity of refinement and implementation of the synthesized controller.
In \cite{RupakOutSymControl2020}, the authors provide an extension to FRR to ensure that controllers designed for \emph{state-based symbolic models} can be refined to work for output-based concrete systems.
Abstractions are designed using a modified version of the knowledge-based algorithm\old{ traditionally used for games of imperfect information}\new{ (a.k.a. KAM)}.
%The introduced approach is similar to our methodology in \ref{meth_game_based} which we introduced as one example application of our proposed framework.
Unfortunately, the authors can not decide whether \oldagain{the}\newagain{a} correct abstraction is constructed or not unless a controller is synthesized which requires to iteratively run the algorithm.
\new{KAM needs to be stopped once an upper bound for the number of iterations is reached.}
\new{Although Algorithm \ref{alg1:detectability_NFTS} is more restrictive in the sense that KAM can produce an abstraction for a symbolic model that is not detectable, it is more predictable since it always terminates.}
\new{Additionally, Algorithm \ref{alg1:detectability_NFTS} runs in polynomial time, while the KAM algorithm runs in exponential time. Hence, although KAM algorithm can work for undetectable systems, Algorithm \ref{alg1:detectability_NFTS} is significantly more efficient for detectable systems.}
\new{Having both algorithms available to the designer of symbolic controllers offers a trade-off between decidability and applicability.}

\oldagain{In comparison with these related works, t}\newagain{T}he main contributions of this work are:
\begin{itemize}
	\item[1.]
	\newagain{OFRRs are introduced as extensions to FRRs allowing abstractions to be constructed by quantizing the state and output sets of concrete systems, such that the output quantization respects the state quantization in the sense that every quantized state belongs to one quantized output.
	Symbolic controllers of output-based symbolic models can be refined to work for output-based concrete systems.}

	\item[2.]
	OFRRs and the results following them in Section \ref{sec_outfb_sym_control} serve as a general framework to host different methodologies of output-feedback symbolic control.
	\oldagain{While each of the discussed works introduces one specific methodology of output-feedback symbolic control, our general framework can host many other methodologies for output-feedback symbolic control.}
	\old{For example, in Section \ref{meth_game_based} we provide a methodology for pure output-based controller synthesis.
	Unlike our work, the authors in \cite{Abate_2015_OutFeedback,GPola_CDC17,PolaOutSymControl2019,ApazaObserverBased,RupakOutSymControl2020} rely on constructing controllers using \emph{state-based} symbolic models.}
	
	\item[3.]
	\old{We showed that similar methodologies using observers  \cite{Abate_2015_OutFeedback,GPola_CDC17} or games of imperfect information \cite{RupakOutSymControl2020} can be easily embedded in our framework (see Sections \ref{mthd_concrete_domain_observers} and \ref{meth_game_based}).}
	We introduced three example methodologies (Sections \ref{meth_game_based}, \ref{mthd_concrete_domain_observers}, and \ref{meth_detector}) to synthesize and refine output-based symbolic controllers for output-based systems.
\end{itemize}
%Additionally, in each of these works, one specific technique for the construction of symbolic model, the synthesis of the symbolic controller and the refinement procedure is proposed.
%Unlike their work, we allow designing controllers directly for the \emph{output-based symbolic model}.
%We, on the other hand, provided a general framework that allows designing controllers based on outputs, and then, present three methodologies that conform with the framework.
%The approaches introduced in \cite{Abate_2015_OutFeedback,GPola_CDC17,PolaOutSymControl2019,RupakOutSymControl2020} can be hosted within our framework as well.

%%%%%%%%%%%%%%%%%%%%%%%%%%%%%%%%%%%%%%%%%%%%%%%%%%%%%%%%%%%%%%%%%%%%%%%%%%%%%%%%%%%%%%%%%%%%%%%%%%%	
\section{Conclusion}
%%%%%%%%%%%%%%%%%%%%%%%%%%%%%%%%%%%%%%%%%%%%%%%%%%%%%%%%%%%%%%%%%%%%%%%%%%%%%%%%%%%%%%%%%%%%%%%%%%%	
\begin{table}
	\centering
	\caption{Requirements of the presented methodologies.}
	\label{tbl_methodologies_and_their_complexities}
	\begin{tabular}{l|l|l}
		\textbf{Methodology}& \textbf{Assumptions}	& \textbf{Refined controllers} 		\\ \cline{1-3}
		2-player games 	& None. 				& $Z$ + symbolic model 		\\
		Observer based 	   	& $S_\tau$ is observable& $Z$ + observer 			\\ 
		Detector based		& $S_q$ is detectable 	& $Z$ + detector 			\\ \cline{1-3}
	\end{tabular}
	\vspace{-0.6cm}
\end{table}

We have shown that symbolic control can be extended to work with output-based systems.
OFRR are introduced as tools to relate systems based on their outputs.
\newagain{They allow symbolic models to be constructed by quantizing the state and output sets of concrete systems, such that the output quantization respects the state quantization.}
\newagain{Consequently, this}\oldagain{which} allows refining symbolic controllers \newagain{designed based on the outputs of symbolic models} to work with the outputs of original systems.
Three example methodologies for output-feedback symbolic control \newagain{based on detectors for symbolic models} were also introduced.
Their assumptions and requirements are highlighted in Table \ref{tbl_methodologies_and_their_complexities}.
%They employ two-player games of imperfect information, observers of concrete systems, and detectors of symbolic models.
%The first methodology utilizes two-player games of imperfect information.
%Although the refined controller contains the symbolic model as a building block, the technique is general and covers all systems.
%The second methodology works for all systems as long as the states can be observed from the outputs.
%It ensures a simple refinement of the synthesized controller using the OFRR as a static interface.
%The third methodology works with systems that have detectable symbolic models.
%The refined controller contains the detection NFA as a building block, which is smaller than the symbolic model.

%The provided implementations in the examples are based on off-the-shelf tools with some gluing interfaces and ad-hoc small programs.
%Hence, as future work, we will develop a toolbox, or extend an existing one, to include the introduced techniques.
% pros
%\MK{\emph{Games: } No need for observers and detectors.}
%\MK{Good for cases ...}
% cons
%\MK{Drawabcks of the output being partition over the symbolic states.}
%\MK{Complixity of the algorithm in knowldge-based game.}
%\MK{Complexity of the refinement controller.}
%\MK{Not good for cases ...}	

\bibliographystyle{IEEEtran}
\bibliography{refdb}

% Generated by IEEEtran.bst, version: 1.14 (2015/08/26)
\begin{thebibliography}{10}
\providecommand{\url}[1]{#1}
\csname url@samestyle\endcsname
\providecommand{\newblock}{\relax}
\providecommand{\bibinfo}[2]{#2}
\providecommand{\BIBentrySTDinterwordspacing}{\spaceskip=0pt\relax}
\providecommand{\BIBentryALTinterwordstretchfactor}{4}
\providecommand{\BIBentryALTinterwordspacing}{\spaceskip=\fontdimen2\font plus
\BIBentryALTinterwordstretchfactor\fontdimen3\font minus
  \fontdimen4\font\relax}
\providecommand{\BIBforeignlanguage}[2]{{%
\expandafter\ifx\csname l@#1\endcsname\relax
\typeout{** WARNING: IEEEtran.bst: No hyphenation pattern has been}%
\typeout{** loaded for the language `#1'. Using the pattern for}%
\typeout{** the default language instead.}%
\else
\language=\csname l@#1\endcsname
\fi
#2}}
\providecommand{\BIBdecl}{\relax}
\BIBdecl

\bibitem{P.Tabuada_VCHS_Symbolic}
P.~Tabuada, \emph{Verification and control of hybrid systems, A symbolic
  approach}.\hskip 1em plus 0.5em minus 0.4em\relax USA: Springer, 2009.

\bibitem{M.Zamani_etal_SC_Nonlin_NoStabilAssump}
M.~Zamani, G.~Pola, M.~{Mazo Jr.}, and P.~Tabuada, ``Symbolic models for
  nonlinear control systems without stability assumptions,'' \emph{IEEE
  Transactions on Automatic Control}, vol.~57, no.~7, pp. 1804--1809, July
  2012.

\bibitem{MZ}
R.~Majumdar and M.~Zamani, ``Approximately bisimilar symbolic models for
  digital control systems,'' in \emph{Computer Aided Verification},
  P.~Madhusudan and S.~A. Seshia, Eds.\hskip 1em plus 0.5em minus 0.4em\relax
  Berlin, Heidelberg: Springer Berlin Heidelberg, 2012, pp. 362--377.

\bibitem{G.Reissig_etal_FRR_TAC}
G.~Reissig, A.~Weber, and M.~Rungger, ``Feedback refinement relations for the
  synthesis of symbolic controllers,'' \emph{IEEE Transactions on Automatic
  Control}, vol.~62, no.~4, pp. 1781--1796, April 2017.

\bibitem{C.Baier_PrincipModelChecking}
C.~Baier and J.~P. Katoen, \emph{Principles of model checking}.\hskip 1em plus
  0.5em minus 0.4em\relax The MIT Press, April 2008.

\bibitem{Pnueli_1989}
A.~Pnueli and R.~Rosner, ``On the synthesis of an asynchronous reactive
  module,'' in \emph{Proceedings of the 16th International Colloquium on
  Automata, Languages and Programming}, ser. ICALP '89.\hskip 1em plus 0.5em
  minus 0.4em\relax London, UK: Springer-Verlag, 1989, pp. 652--671.

\bibitem{Vardi1995}
M.~Y. Vardi, \emph{An automata-theoretic approach to fair realizability and
  synthesis}.\hskip 1em plus 0.5em minus 0.4em\relax Berlin, Heidelberg:
  Springer Berlin Heidelberg, 1995, pp. 267--278.

\bibitem{BLOEM2012911}
R.~Bloem, B.~Jobstmann, N.~Piterman, A.~Pnueli, and Y.~Sa'ar, ``Synthesis of
  reactive(1) designs,'' \emph{Journal of Computer and System Sciences},
  vol.~78, no.~3, pp. 911 -- 938, 2012, in Commemoration of Amir Pnueli.

\bibitem{mkhaled_allerton16}
M.~Khaled, M.~Rungger, and M.~Zamani, ``Symbolic models of networked control
  systems: A feedback refinement relation approach,'' in \emph{54th Annual
  Allerton Conference on Communication, Control, and Computing (Allerton)},
  Sept 2016, pp. 187--193.

\bibitem{zamani}
\BIBentryALTinterwordspacing
M.~Zamani, M.~M. Jr, M.~Khaled, and A.~Abate, ``Symbolic abstractions of
  networked control systems,'' \emph{IEEE Transactions on Control of Network
  Systems}, accepted, to appear. [Online]. Available:
  \url{https://arxiv.org/abs/1401.6396}
\BIBentrySTDinterwordspacing

\bibitem{G_POLA_Automatica_2008}
G.~Pola, A.~Girard, and P.~Tabuada, ``Approximately bisimilar symbolic models
  for nonlinear control systems,'' \emph{Automatica}, vol.~44, no.~10, pp. 2508
  -- 2516, 2008.

\bibitem{GPola_TimeDelay_2010}
G.~Pola, P.~Pepe, M.~D.~D. Benedetto], and P.~Tabuada, ``Symbolic models for
  nonlinear time-delay systems using approximate bisimulations,'' \emph{Systems
  \& Control Letters}, vol.~59, no.~6, pp. 365 -- 373, 2010.

\bibitem{MZamani_SwitchedSystems_Automatica2015}
M.~Zamani, A.~Abate, and A.~Girard, ``Symbolic models for stochastic switched
  systems: A discretization and a discretization-free approach,''
  \emph{Automatica}, vol.~55, pp. 183 -- 196, 2015.

\bibitem{AGirard_SwitchedSystems_TAC2010}
A.~{Girard}, G.~{Pola}, and P.~{Tabuada}, ``Approximately bisimilar symbolic
  models for incrementally stable switched systems,'' \emph{IEEE Transactions
  on Automatic Control}, vol.~55, no.~1, pp. 116--126, 2010.

\bibitem{MZamani_StochasticSystems_TAC14}
M.~{Zamani}, P.~{Mohajerin Esfahani}, R.~{Majumdar}, A.~{Abate}, and
  J.~{Lygeros}, ``Symbolic control of stochastic systems via approximately
  bisimilar finite abstractions,'' \emph{IEEE Transactions on Automatic
  Control}, vol.~59, no.~12, pp. 3135--3150, 2014.

\bibitem{MZamani_StochasticSystems_ECC13}
M.~{Zamani}, P.~{Mohajerin Esfahani}, A.~{Abate}, and J.~{Lygeros}, ``Symbolic
  models for stochastic control systems without stability assumptions,'' in
  \emph{2013 European Control Conference (ECC)}, 2013, pp. 4257--4262.

\bibitem{SCOTS}
M.~Rungger and M.~Zamani, ``Scots: A tool for the synthesis of symbolic
  controllers,'' in \emph{Proceedings of the 19th International Conference on
  Hybrid Systems: Computation and Control}, ser. HSCC '16.\hskip 1em plus 0.5em
  minus 0.4em\relax New York, NY, USA: ACM, 2016, pp. 99--104.

\bibitem{CoSyMA}
S.~Mouelhi, A.~Girard, and G.~G\"{o}ssler, ``Cosyma: A tool for controller
  synthesis using multi-scale abstractions,'' in \emph{Proceedings of 16th
  International Conference on Hybrid Systems: Computation and Control}, ser.
  HSCC '13.\hskip 1em plus 0.5em minus 0.4em\relax New York, NY, USA: ACM,
  2013, pp. 83--88.

\bibitem{pFaces}
M.~Khaled and M.~Zamani, ``{\tt pFaces}: An acceleration ecosystem for symbolic
  control,'' in \emph{Proceedings of the 22nd ACM International Conference on
  Hybrid Systems: Computation and Control}, ser. HSCC '19.\hskip 1em plus 0.5em
  minus 0.4em\relax New York, NY, USA: ACM, 2019.

\bibitem{Pnueli77}
A.~{Pnueli}, ``The temporal logic of programs,'' in \emph{18th Annual Symposium
  on Foundations of Computer Science (sfcs 1977)}, 1977, pp. 46--57.

\bibitem{KRon_MTL}
\BIBentryALTinterwordspacing
R.~Koymans, ``Specifying real-time properties with metric temporal logic,''
  \emph{Real-Time Systems}, vol.~2, no.~4, pp. 255--299, 1990. [Online].
  Available: \url{https://doi.org/10.1007/BF01995674}
\BIBentrySTDinterwordspacing

\bibitem{E.D.Sontag_MCT}
E.~D. Sontag, \emph{Mathematical control theory: Deterministic finite
  dimensional systems}, 2nd~ed., ser. Texts in Applied Mathematics.\hskip 1em
  plus 0.5em minus 0.4em\relax Springer-Verlag, New York, 1999, vol.~6.

\bibitem{ROTH1978_twoperson_games}
A.~E. Roth, ``Two-person games on graphs,'' \emph{Journal of Combinatorial
  Theory, Series B}, vol.~24, no.~2, pp. 238 -- 241, 1978.

\bibitem{REIF1984274}
J.~H. Reif, ``The complexity of two-player games of incomplete information,''
  \emph{Journal of Computer and System Sciences}, vol.~29, no.~2, pp. 274 --
  301, 1984.

\bibitem{Chatterjee2006_omegareg_games}
J.~Raskin, K.~Chatterjee, L.~Doyen, and T.~Henzinger, \emph{Algorithms for
  Omega-Regular Games with Imperfect Information}.\hskip 1em plus 0.5em minus
  0.4em\relax Lars Birkedal, 2007, vol. 3:3, pp. 4 -- 23.

\bibitem{Berwanger2008}
D.~Berwanger, K.~Chatterjee, L.~Doyen, T.~A. Henzinger, S.~Raje, and
  M.~Chechik, \emph{Strategy Construction for Parity Games with Imperfect
  Information}.\hskip 1em plus 0.5em minus 0.4em\relax Springer Berlin
  Heidelberg, 2008, pp. 325--339.

\bibitem{ALPAGA}
D.~Berwanger, K.~Chatterjee, M.~De~Wulf, L.~Doyen, and T.~A. Henzinger,
  \emph{Alpaga: A Tool for Solving Parity Games with Imperfect
  Information}.\hskip 1em plus 0.5em minus 0.4em\relax Berlin, Heidelberg:
  Springer Berlin Heidelberg, 2009, pp. 58--61.

\bibitem{Franklin:1997:DCD:550726}
G.~F. Franklin, M.~L. Workman, and D.~Powell, \emph{Digital Control of Dynamic
  Systems}, 3rd~ed.\hskip 1em plus 0.5em minus 0.4em\relax Boston, MA, USA:
  Addison-Wesley Longman Publishing Co., Inc., 1997.

\bibitem{RNC:RNC3051}
H.~K. Khalil and L.~Praly, ``High-gain observers in nonlinear feedback
  control,'' \emph{International Journal of Robust and Nonlinear Control},
  vol.~24, no.~6, pp. 993--1015, 2014.

\bibitem{Zhang2020}
\BIBentryALTinterwordspacing
K.~Zhang, L.~Zhang, and L.~Xie, \emph{Detectability of Nondeterministic
  Finite-Transition Systems}.\hskip 1em plus 0.5em minus 0.4em\relax Cham:
  Springer International Publishing, 2020, pp. 165--175. [Online]. Available:
  \url{https://doi.org/10.1007/978-3-030-25972-3_8}
\BIBentrySTDinterwordspacing

\bibitem{KariCALectureNote}
J.~Kari, ``Theory of cellular automata: A survey,'' \emph{Theoretical Computer
  Science}, vol. 334, no.~1, pp. 3 -- 33, 2005.

\bibitem{Abate_2015_OutFeedback}
S.~Haesaert, A.~Abate, and P.~M. J.~V. den Hof, ``Correct-by-design output
  feedback of lti systems,'' in \emph{2015 54th IEEE Conference on Decision and
  Control (CDC)}, Dec 2015, pp. 6159--6164.

\bibitem{Lesser_2015}
K.~Lesser and A.~Abate, ``Controller synthesis for probabilistic safety
  specifications using observers,'' \emph{IFAC-PapersOnLine}, vol.~48, no.~27,
  pp. 329 -- 334, 2015, analysis and Design of Hybrid Systems ADHS.

\bibitem{Lesser_2016}
K.~{Lesser} and A.~{Abate}, ``Safety verification of output feedback
  controllers for nonlinear systems,'' in \emph{2016 European Control
  Conference (ECC)}, 2016, pp. 413--418.

\bibitem{Masashi_NonLinearHybSystems2018}
M.~Mizoguchi and T.~Ushio, ``Deadlock-free output feedback controller design
  based on approximately abstracted observers,'' \emph{Nonlinear Analysis:
  Hybrid Systems}, vol.~30, pp. 58 -- 71, 2018.

\bibitem{GPola_CDC17}
G.~Pola and M.~D.~D. Benedetto, ``Approximate supervisory control of nonlinear
  systems with outputs,'' in \emph{2017 IEEE 56th Annual Conference on Decision
  and Control (CDC)}, Dec 2017, pp. 2991--2996.

\bibitem{PolaOutSymControl2019}
G.~Pola, M.~D. {Di Benedetto}, and A.~Borri, ``Symbolic control design of
  nonlinear systems with outputs,'' \emph{Automatica}, vol. 109, p. 108511,
  2019.

\bibitem{ApazaObserverBased}
W.~A. {Apaza-Perez}, A.~{Girard}, C.~{Combastel}, and A.~{Zolghadri},
  ``Symbolic observer-based controller for uncertain nonlinear systems,''
  \emph{IEEE Control Systems Letters}, vol.~5, no.~4, pp. 1297--1302, 2021.

\bibitem{RupakOutSymControl2020}
R.~Majumdar, N.~Ozay, and A.-K. Schmuck, ``On abstraction-based controller
  design with output feedback,'' in \emph{Proceedings of the 23rd International
  Conference on Hybrid Systems: Computation and Control}, ser. HSCC
  ’20.\hskip 1em plus 0.5em minus 0.4em\relax New York, NY, USA: Association
  for Computing Machinery, 2020.

\end{thebibliography}
\old{
\begin{IEEEbiography}
	[{\includegraphics[width=1.0in]{bio/khaled.jpg}}]
	{Mahmoud Khaled}
	is a lecturer in the Department of Computer and Systems Engineering, Minia University, Egypt. He received a B.Sc. in Computer and Systems Engineering (2009), an M.Sc. in Electrical Engineering (2014) from Minia University, Egypt, and a PhD in Electrical Engineering and Computer Science (2021) from Technical University of Munich, Germany. His current interests include formal methods for automated synthesis of safety-critical Cyber-physical systems, embedded control systems and networked control systems.
\end{IEEEbiography}

\begin{IEEEbiography}
	[{\includegraphics[width=1in]{bio/zhang.png}}]
	{Kuize Zhang} (M'14-SM'17) received B.S. and Ph.D. degrees in Mathematics and Control Science and Engineering from Harbin Engineering University, China, in 2009 and 2014, respectively. He is currently a Humboldt postdoc at Control Systems Group, Technical University of Berlin, Germany. He held postdoc positions at KTH Royal Institute of Technology, Sweden (Sep. 2017– Apr. 2020), Technical University of Munich, Germany (Sep. 2016–Aug. 2017), the Chinese Academy of Sciences (Jan. 2015–Aug. 2016), Nanyang Technological University, Singapore (Oct. 2013–Oct. 2014), and University of Turku, Finland (Sep. 2012–Sep. 2013). His research interests include fundamental topics in Boolean networks, discrete-event systems (finite automata, Petri nets, weighted automata), etc. He co-authored one monograph.
\end{IEEEbiography}

\begin{IEEEbiography}
	[{\includegraphics[width=1in]{bio/zamani.jpg}}]
	{Majid Zamani}
	is an Assistant Professor in the Computer Science Department, University of Colorado Boulder, USA. He received a B.Sc. degree in Electrical Engineering in 2005 from Isfahan University of Technology, Iran, an M.Sc.
	degree in Electrical Engineering in 2007 from Sharif University of Technology, Iran, an MA degree in Mathematics and a Ph.D. degree in Electrical Engineering both in 2012 from University of California, Los Angeles, USA. Between September 2012 and December 2013 he was a postdoctoral researcher at Delft University of Technology, Netherlands. From May 2014 to January 2019 he was an Assistant Professor in the Department of Electrical and Computer Engineering at the Technical University of Munich, Germany. He received an ERC starting grant award from the European Research Council in 2018. His research interests include verification and control of hybrid systems, embedded control software synthesis, networked control systems, and incremental properties of nonlinear control systems.	
\end{IEEEbiography}
}

\end{document}